\newcommand{\blind}{1}
\def\bmA{\bm{A}}
\def\bmB{\bm{B}}
\def\bmC{\bm{C}}
\def\calA{\mathcal{A}}
\def\calB{\mathcal{B}}
\def\calZ{\mathcal{Z}}
\def\bmD{\bm{D}}
\def\bme{\bm{e}}
\def\bmE{\bm{E}}
\def\bmI{\bm{I}}
\def\bmM{\bm{M}}
\def\bmY{\bm{Y}}
\def\bmy{\bm{y}}
\def\bmU{\bm{U}}
\def\bmV{\bm{V}}
\def\bmP{\bm{P}}
\def\bmQ{\bm{Q}}
\def\bmG{\bm{G}}
\def\bmH{\bm{H}}
\def\bmK{\bm{K}}
\def\bmgm{\bm{\gamma}}
\def\bmZ{\bm{Z}}
\def\bmtt{\bm{\theta}}
\def\tp{\mathsf{T}}
\def\tr#1{\mathrm{tr}{\left[#1\right]}}
\def\vecc#1{\mathrm{vec}{\left(#1\right)}}
\def\vech#1{\mathrm{vech}{\left(#1\right)}}
\def\veccq#1{\mathrm{vec}_{-1}{\left(#1\right)}}
\def\vechq#1{\mathrm{vech}_{-1}{\left(#1\right)}}
\def\E#1{\mathbb{E}{\left(#1\right)}}
\def\Var#1{{\rm Var}\left(#1\right)}
\newtheorem{Lemma}{Lemma}
\newtheorem{Theorem}{Theorem}
\newtheorem{Proposition}{Proposition}
\newtheorem{Corollary}{Corollary}
\newtheorem{Remark}{Remark}
\DeclareMathOperator*{\argmin}{arg\,min}
\begin{document}

\def\spacingset#1{\renewcommand{\baselinestretch}%
{#1}\small\normalsize} \spacingset{1}

%%%%%%%%%%%%%%%%%%%%%%%%%%%%%%%%%%%%%%%%%%%%%%%%%%%%%%%%%%%%%%%%%%%%%%%%%%%%%%

\if1\blind
{
  \title{\bf Mixture Matrix-valued Autoregressive Model}
  \author{Fei Wu\thanks{
    Fei Wu was partially supported by the {NSF CAREER grant 2045016 during the PhD studies at the University of Iowa.}}\hspace{.2cm}\\
   Department of Statistics and Actuarial Science, University of Iowa\\
   fei-wu-1@uiowa.edu\\
    and \\
   Kung-Sik Chan\thanks{Corresponding author.}
   \\
      Department of Statistics and Actuarial Science, University of Iowa\\
  kung-sik-chan@uiowa.edu}
\date{}
  \maketitle
} \fi

\if0\blind
{
  \bigskip
  \bigskip
  \bigskip
  \begin{center}
    {\LARGE\bf Title}
\end{center}
  \medskip
} \fi

\bigskip
\begin{abstract} Time series of matrix-valued data are increasingly available  in various areas including  economics,  finance, social science, among others. These data may shed light on the inter-dynamical relationships between two sets of attributes, for instance, countries and economic indices. The matrix autoregressive (MAR) model provides a parsimonious approach for analyzing such data.  However, the MAR model, being a linear model with parametric constraints, cannot capture the nonlinear patterns in the data, such as regime shifts in the dynamics. We propose a mixture matrix autoregressive (MMAR) model for analyzing potential regime shifts in the dynamics between two attributes, for instance, due to recession versus expansion, or stable period versus pandemic. We propose an EM algorithm for maximum likelihood estimation. We derive some theoretical properties of the proposed method including consistency and asymptotic distribution, and illustrate its performance via simulations and real applications.
\end{abstract}

\noindent%
{\it Keywords:}  
Lyapunov Exponent;
Multimodality;
Regime Switching;
Stationarity;
Constrained VAR Model
\vfill

\newpage
\spacingset{1.3}

\section{Introduction}
\label{sec:intro}
Recent technological advances facilitate the collection of time series data with complex structures, for instance, matrix-valued time series data from various fields, including economics, finance, political science, and social science. In economics, important national economic indices are reported regularly over time, naturally forming a sequence of matrices cross classified by country and index. In finance, matrix-valued time series data are commonly encountered when dealing with monthly portfolio returns. These returns can be represented as a sequence of matrices, where stocks are grouped into portfolios based on their market capital levels and book-to-equity ratio. Other financial applications include bond yields
at different maturities and in different countries. One could also consider different credit ratings, etc.
Dynamic graphs are commonly used in political science, social science, and other related fields, where a matrix can represent the graph or network at each time point. In addition, matrices can also represent 2D grayscale images, and a sequence of images can form a matrix time series.

One approach to modeling matrix-valued time series data is to vectorize the matrices and fit a multiple time series model, e.g., the vector autoregressive (VAR) model or some state space model \citep{hannan1970multiple,lutkepohl2005new}. However, the vectorization approach suffers from the ``curse of dimensionality" even with moderately large matrices. Alternative approaches have been developed to address this issue, for instance, the regularized VAR models \citep{basu2015regularized,nicholson2020high} and the factor models \citep{lam2012factor,pena2019forecasting,fan2020factor}.
Nonetheless, these methods may not be appropriate for matrix-valued time series data because they ignore the information contained in the matrix structures. 

The matrix autoregressive (MAR) model, proposed by \cite{chen2021autoregressive}, is a parsimonious model which preserves the matrix structure. It is also referred to as the bilinear model. Hoff (2015) proposed the bilinear model to study matrix-valued longitudinal relational data, and he also developed multi-linear models for tensor-valued data. \cite{ding2018matrix} studied the bilinear regression model under the envelope framework. \cite{hsu2021matrix} introduced the spatio-temporal MAR model. Multi-linear autoregressive models for tensor-valued time series were proposed by \cite{li2021multi}, and tensor decomposition methods have also been applied to model matrix-valued or tensor-valued time series \citep{wang2021high,han2021cp,chang2022modelling}.

It can be shown that the MAR model and the multi-linear autoregressive model can be expressed as some parametrically constrained VAR model. However, time series data may be generated from some nonlinear process, which displays nonlinear patterns, for instance, conditional or marginal multimodality
in which case linear Gaussian models are inappropriate.
For example, economic data may follow different dynamics over different growth phases -- either in a fast or slow growth phase \citep{hamilton1989new}.
Various models have been developed for nonlinear time series data \citep[see, e.g.,][]{tong1990non,fan2003nonlinear}. One well-established nonlinear model is the mixture autoregressive model, first introduced by \cite{wong2000mixture} as a generalization of the mixture transition distribution model \citep{le1996modeling}. This model has several interesting properties.  It may contain a non-stationary AR component, but remains overall stationary; it is able to capture conditional heteroscedasticity.
Many extensions have been proposed for the mixture autoregressive model. For example, \cite{fong2007mixture} introduced the mixture VAR model. \cite{kalliovirta2015gaussian, kalliovirta2016gaussian} proposed the time-inhomogeneous mixture autoregressive models, where the mixing weights  may vary with time.
Note that the mixture autoregressive model is a special case of the threshold autoregressive model and the Markov-switching autoregressive model \citep{tong1990non}.

Here, we propose a mixture matrix autoregressive (MMAR) model, an extension of both the MAR model and the mixture autoregressive model. This model enables us to segment the matrix time series into different regimes. 
Our extension is motivated by the need to analyze the economic indicator dataset (https://data.oecd.org) displayed in Fig.~\ref{fig_econo}. This dataset contains four economic indicators: quarterly short-term interest rate (first difference),
quarterly GDP (annual percentage growth), quarterly industrial production (first difference of the logarithm of the data), and annual growth rate of quarterly CPI (first difference), from five countries: United States, Germany, France, United Kingdom and Canada, from Q1 1990 to Q4 2022. 
\cite{chen2021autoregressive} applied the MAR model to analyze a similar dataset. Although the dataset is generally stabilized by the logarithmic transformation and/or differencing, some synchronized irregular patterns are observed in the plot. Notably, nearly all indicators experienced a sharp decline followed by a rapid recovery during 2008 and 2009 across all five countries, which may be attributed to the global economic crisis in 2008. Even more dramatic fluctuations were observed between 2020 and 2022, presumably due to the pandemic. 
In summary, the economic indicator dataset appears to be nonlinear, displaying simultaneity in regimes across countries, and hence a nonlinear time series model would be better suited for analyzing and interpreting this dataset. In particular, the proposed method may be suitable for analyzing matrix time series displaying simultaneity in regimes across certain matrix dimensions. 
Moreover, segmenting this dataset into different dynamical regimes can provide valuable insights into the global economic dynamics. For instance, we conjecture that under certain conditions, regimes may be 
more accurately dated when pooling multiple countries in this way rather than studying them
one-by-one. 

Recently, some mixture models have been developed to cluster matrices \citep{gao2021regularized} and tensors \citep{mai2022doubly}.  Those models,  however, assumed a fixed mean structure for each component, which cannot capture shifts in temporal dynamics.

Our contributions are three-fold. First, we build a nonlinear autoregressive model for matrix-valued time series data. Our model expands the scope of regime-switching autoregressions, making the methods applicable to more complex time series data.
Compared to some recently emerged models on matrix-valued time series, the proposed model not only offers a more comprehensive characterization of nonlinear patterns, but it can also classify the data into different regimes, which can  enhance our understanding of the dataset.
Second, both strict and weak stationarity conditions for the model are given, and 
an EM algorithm for maximum likelihood estimation is implemented.
Third, we establish some 
asymptotic properties of the maximum likelihood estimator.

This paper is organized as follows. The proposed MMAR model is elaborated in Section 2. Strict and weak stationarity conditions  of the MMAR model are given in Section 3. An EM algorithm for parameter estimation is described in Section 4. The theoretical properties of the maximum likelihood estimator are investigated in Section 5. Model selection is discussed in Section 6. Section 7 presents simulation studies and real data analysis. Finally, Section 8 concludes the paper and suggests avenues for future research. Proofs of the main results and additional numerical results are provided in the \emph{Supplemental Materials}.

\begin{figure}[ht!]
	\centering
	\includegraphics[width=1\textwidth]{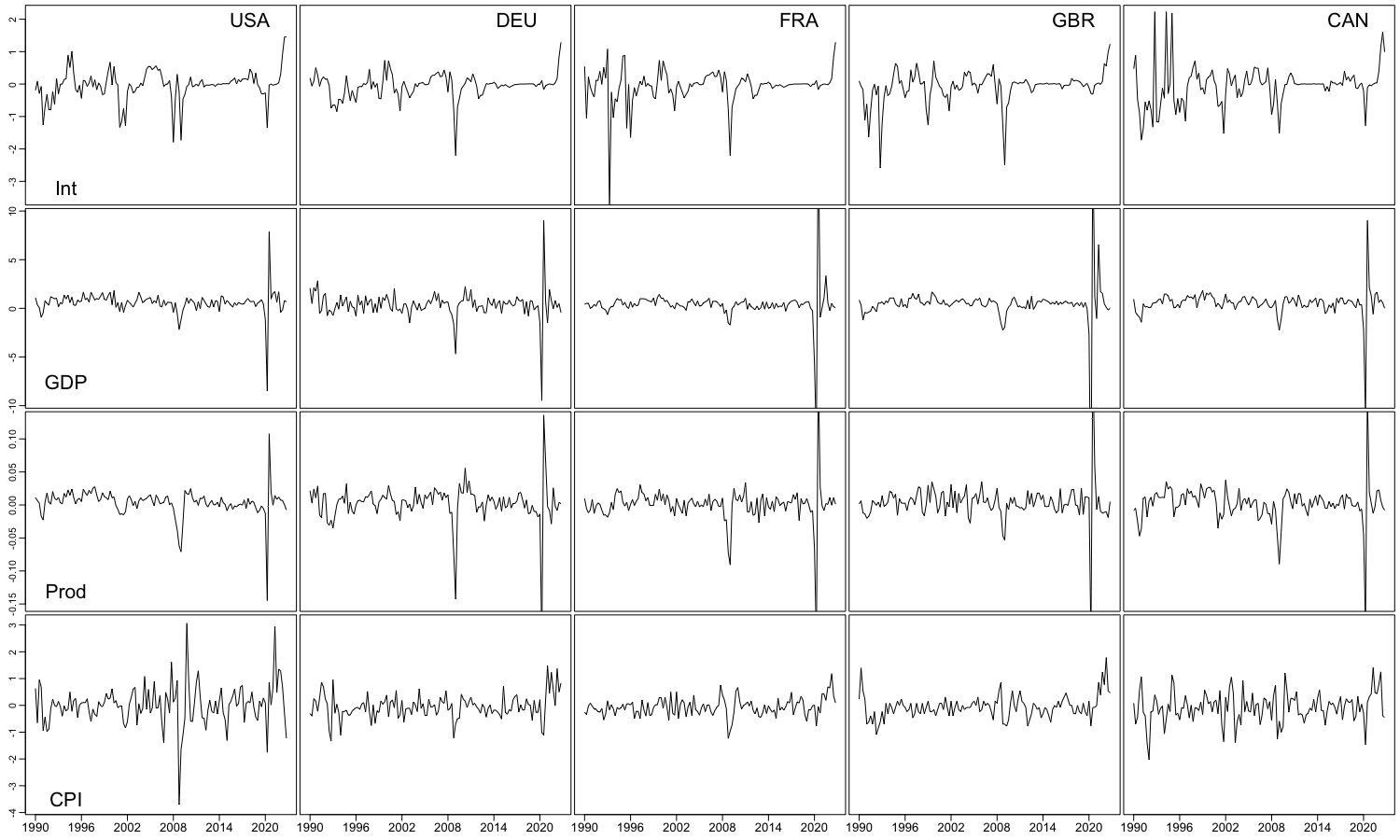}
	\caption[Time series of four economic indicators from five countries.]{\label{fig_econo} Time series of four economic indicators from five countries.}
\end{figure}

\section{Model Formulation}
\label{sec:models}
\subsection{The MAR Model}
Let $\bmY_t\in\mathbb{R}^{m\times n},~1\leq t\leq T$ be the matrix-valued time series data. The $p$th-order matrix autoregressive model, denoted by MAR($p$), specifies the relationship,
\begin{equation}
	\bmY_t = \bmC_0+\sum_{i=1}^p\bmA_i\bmY_{t-i}\bmB_i^\tp+\bmE_t,
	\label{MAR_model}
\end{equation}
where $\bmA_i\in\mathbb{R}^{m\times m}$ and $\bmB_i\in\mathbb{R}^{n\times n}$ are parameter matrices, and $\bmE_t$ is the matrix of random errors. The parameter matrix $\bmC_0$ is the intercept matrix, which
is generally absent for centered data. This model admits some interesting interpretations. For example, in an MAR(1) model, the parameter matrices $\bmA_1$ and $\bmB_1^\tp$ reflect row-wise and column-wise interactions, respectively; it can also be viewed as a factor regression model, with the factor being $\bmY_{t-1}\bmB_1^\tp$ \citep{chen2021autoregressive}.

Let $\vecc{\cdot}$ denote the vectorization of the enclosed matrix via stacking its columns. Also, let the operator $\vech{\cdot}$ denote the half-vectorization of a symmetric matrix.
The MAR($p$) model can be expressed as
\begin{equation}
	\vecc{\bmY_{t}}=\vecc{\bmC_0}+\sum_{i=1}^p\left(\bmB_i\otimes\bmA_i\right)\vecc{\bmY_{t-i}}+\vecc{\bmE_t},
	\label{MAR_V}
\end{equation}
where $\otimes$ represents the Kronecker product of matrices. Hence the MAR($p$) model is intrinsically a constrained $p$th-order VAR model. It is assumed that 
$\{\bmE_t~|~1\leq t\leq T\}$ is a sequence of independent and identically distributed (i.i.d.) random matrices such that $\bmE_t$ is independent of $\{\bmY_{t-1},\bmY_{t-2},\dots\}$. Also, $\E{\bmE_t}=\bm{0}$ and $\Var{\vecc{\bmE_t}}=\bm{\Sigma}$, where
$\bm{\Sigma}$ is positive definite.
Throughout, we denote $\bm{0}$  as either a zero matrix or a zero vector with a suitable dimension. 

We can further specify that $\bm{\Sigma}$ is separable:
$\bm{\Sigma}=\bmV\otimes\bmU$, where $\bmU\in\mathrm{R}^{m\times m}$ and $\bmV\in\mathbb{R}^{n\times n}$ are all positive definite matrices. This covariance structure has gained significant attention in multivariate analysis, especially in cases where variables can be cross-classified by two factors, such as spatiotemporal data. Hypothesis tests have also been developed for this separable covariance structure, see, e.g., \cite{lu2005likelihood}. Let $\mathscr{F}_{t}$ be the $\sigma$-algebra generated by $\bmY_{t-j}$, $j\geq0$. 
Under the separability assumption of $\bm{\Sigma }$, if $\{\bmE_t~|~1\leq t\leq T\}$ is normally distributed, then the conditional distribution of $\bmY_{t}$ given $\mathscr{F}_{t-1}$ follows a matrix normal distribution with mean $\bmC_0+\sum_{i=1}^p\bmA_i\bmY_{t-i}\bmB_i^\tp$ and variance-covariance matrices $\bmU$ and $\bmV$, in symbol, $\bmY_t\sim \mathcal{MN}_{m,n}(\bmC_0+\sum_{i=1}^p\bmA_i\bmY_{t-i}\bmB_i^\tp,\bmU,\bmV)$, whose joint probability density function is given by, 
\begin{equation}
	f_{\mathcal{MN}}(\bmY_t|\bmC_0+\sum_{i=1}^p\bmA_i\bmY_{t-i}\bmB_i^\tp,\bmU,\bmV)
	=\frac{\exp\left(-\frac{1}{2}\tr{\bmV^{-1}\bm{E}_t^\tp\bmU^{-1}
			\bm{E}_t}\right)
	}
	{(2\pi)^{mn/2}\det(\bmV)^{m/2}\det(\bmU)^{n/2}},
\end{equation}
where $\bm{E}_t=\bmY_t-\bmC_0-\sum_{i=1}^p\bmA_i\bmY_{t-i}\bmB_i^\tp$, and $\det(\cdot)$ denotes the determinant of the enclosed matrix.
It can be shown that, if $\bmY_t\sim \mathcal{MN}_{m,n}(\bmC_0+\sum_{i=1}^p\bmA_i\bmY_{t-i}\bmB_i^\tp,\bmU,\bmV)$, then $\vecc{\bmY_t}$  follows an $mn$-dimensional multivariate normal distribution with mean $\vecc{\bmC_0+\sum_{i=1}^p\bmA_i\bmY_{t-i}\bmB_i^\tp}$ and variance-covariance matrix $\bmV\otimes\bmU$, denoted by,
$$\vecc{\bmY_t}\sim \mathcal{N}_{mn}(\mathrm{vec}(\bmC_0+\sum_{i=1}^p\bmA_i\bmY_{t-i}\bmB_i^\tp), \bmV\otimes\bmU),$$ and its joint probability density function is denoted by
$$f_\mathcal{N}(\vecc{\bmY_t}|\mathrm{vec}(\bmC_0+\sum_{i=1}^p\bmA_i\bmY_{t-i}\bmB_i^\tp),\bmV\otimes\bmU).$$

Note that the MAR($p$) model is not identifiable  as the model is unchanged by multiplying $\bmA_i$ by some non-zero constant and dividing $\bmB_i$ by the same constant, for any $i\in\{1,\dots,p\}$, and so do $\bmU$ and $\bmV$. Thus, the model requires some identifiability constraints, for example, $\|\bmB_i\|_F=\|\bmV\|_F=1$, and the first non-zero element of $\vecc{\bmB_i}$ is positive for $1\leq i\leq p$, where  $\|\cdot\|_F$ denotes the Frobenius norm of the enclosed matrix. The subscript $F$ is dropped if there is little risk of ambiguity.

\subsection{The MMAR Model}	
The MMAR($K; p_1,\dots,p_K)$ model consists of a probabilistic mixture of
$K$ normal MAR sub-processes, which specifies that the conditional density of $\bmY_t|\mathscr{F}_{t-1}$ is equal to,
\begin{equation}
	\label{mixture density 1}
	\sum_{k=1}^K 
	\alpha_k f_\mathcal{MN}\left(\bmY_{t}|\bmC_k+\sum_{i=1}^{p_k}\bmA_{k,i}\bmY_{t-i}\bmB_{k,i}^\tp,\bmU_k,\bmV_k\right),
\end{equation}
where $p_k$ is the autoregressive order of the $k$th component, 
$0<\alpha_k<1$ is the mixing weight of the $k$th component such that
$\sum_{k=1}^K\alpha_k=1$, $\bmC_k\in\mathbb{R}^{m\times n}$ is the intercept matrix,
$\bmA_{k,i}\in\mathbb{R}^{m\times m}$ and $\bmB_{k,i}\in\mathbb{R}^{n\times n}$ are the non-zero coefficient matrices of the $k$th component, and $\bmU_k\in\mathbb{R}^{m\times m}$ and $\bmV_k\in\mathbb{R}^{n\times n}$ are the corresponding positive definite variance-covariance matrices. The conditional density \eqref{mixture density 1} is equal to,
\begin{align*}
	\sum_{k=1}^{K}\alpha_k
	\left\{
	\frac{\exp\left(-\frac{1}{2}\tr{\bmV_k^{-1}	\bm{E}_{t,k}^\tp\bmU_k^{-1}
			\bm{E}_{t,k}}\right)
	}
	{(2\pi)^{mn/2}\det(\bmV_k)^{m/2}\det(\bmU_k)^{n/2}}
	\right\},
\end{align*}
where 
\begin{align}
	\bm{\bm{E}}_{t,k} = \bmY_{t}-\bmC_k-\sum_{i=1}^{p_k}\bmA_{k,i}\bmY_{t-i}\bmB_{k,i}^\tp.
\end{align}
Since each MAR component in the mixture has a vector representation in the form of $\eqref{MAR_V}$,
the mixture density \eqref{mixture density 1} has the following representation:
\begin{equation}
	\sum_{k=1}^K 
	\alpha_k f_\mathcal{N}\left(\vecc{\bmY_{t}}|\vecc{\bmC_k}+\sum_{i=1}^{p_k}(\bmB_{k,i}\otimes\bmA_{k,i})\vecc{\bmY_{t-i}},
	\bmV_k\otimes\bmU_k\right). \label{mixture density2}
\end{equation}
For comparison, the mixture VAR model introduced by \cite{fong2007mixture} specifies the conditional density as,
\begin{equation}
	\sum_{k=1}^K 
	\alpha_k f_\mathcal{N}\left(\vecc{\bmY_{t}}|\bm{\Psi}_{k,0}+\sum_{i=1}^{p_k}\bm{\Psi}_{k,i}\vecc{\bmY_{t-i}},
	\bm{\Omega}_k\right), \label{miVAR density2}
\end{equation}
where for each $k\in\{1,\dots,K\}$, $\bm{\Psi}_{k,0}$ is an $mn$-dimensional vector, $\bm{\Psi}_{k,i}\in\mathbb{R}^{mn\times mn}$ is a coefficient matrix for $1\leq i\leq p_k$, and $\bm{\Omega}_k\in\mathbb{R}^{mn\times mn}$ is a variance-covariance matrix.
Hence, the proposed MMAR model can be viewed as a constrained version of the mixture VAR model with the restrictions,
\begin{align}
	\bm{\Psi}_{k,0}=\vecc{\bmC_k}, ~~\bm{\Psi}_{k,i}=\bmB_{k,i}\otimes\bmA_{k,i},~~\bm{\Omega}_k=\bmV_k\otimes\bmU_k.
	\label{eq: basic parametric constraints}
\end{align}
For each $k$ and $i$,
the parameter matrix $\bm{\Psi}_{k,i}$ in the unconstrained mixture VAR model contains $m^2n^2$ parameters, while its counterpart in the MMAR model $\bmA_{k,i}$ and $\bmB_{k,i}$ only require $m^2+n^2$ parameters in total. Similarly, $\bm{\Omega}_k$ contains 
$(mn+1)mn/2$
parameters while $\bmU_{k}$ and $\bmV_{k}$ only require $(m+1)m/2+(n+1)n/2$ parameters. It is evident that the number of unknown parameters in the mixture VAR model could be significantly greater than that of the MMAR model, especially as $mn$ increases  and when the model consists of many mixture components with high AR orders. Therefore, comparing with the mixture VAR model, the proposed MMAR model not only preserves the matrix structure, but also results in a substantial reduction in dimensionality.

Similar to the mixture VAR model, the MMAR model has the following interesting properties. First, it can contain both stationary and non-stationary MAR components while maintaining overall model stationarity. An intuitive way to understand this is that the stationary components exhibit contraction patterns (i.e., the spectral radius of its $\bm{\Psi}_{k,1}$ is less than 1, for the case that $p_k=1$), whereas non-stationary components display expansion patterns (i.e., the spectral radius of its $\bm{\Psi}_{k,1}$ is greater  than or equal to 1, for the case that $p_k=1$). The overall model achieves stationarity when the contraction patterns more than offset the expansion patterns. 
Second, it has the capability to model the multi-modality of matrix-valued time series, and we will illustrate these properties through examples.

The MMAR($K;p_1,\dots,p_K$) model has similar identifiability issues
as the MAR model, as for each $k\in\{1,\dots,K\}$ and $i\in\{1,\dots,p_k\}$,
$\bmA_{k,i}$ and $\bmB_{k,i}$ are identifiable up to a constant, if neither are zero matrices, and so are $\bmU_k$ and $\bmV_k$ which are assumed to be positive definite matrices. Therefore, the following constraints are imposed: 
\begin{align}
	& \bm{A}_{k,i}\not = \bm0; \bm{B}_{k,i}\not = \bm0; \mbox{first non-zero element of } \vecc{\bmB_{k,i}} \mbox{ is positive}, \label{iden_c1_0}\\
	& \|\bmB_{k,i}\|_F=1,~~~k\in\{1,2,\dots,K\}, ~i\in\{1,\dots,p_k\},\label{iden_c1} \\
	&\|\vech{\bmV_{k}}\|_F=1,~~~k\in\{1,2,\dots,K\}.	\label{iden_c1_1}
\end{align}

In addition, to circumvent the label-switching problem for the mixture models \citep{mclachlanfinite}, the following constraints are required:
\begin{equation}
	\label{iden_c2}
	0<\alpha_1<\alpha_2<\cdots<\alpha_K<1,
\end{equation}
\begin{align}
	\label{iden_c33}
	&f_\mathcal{N}\left(\vecc{\bmY_{t}}|\vecc{\bmC_k}+\sum_{i=1}^{p_k}(\bmB_{k,i}\otimes\bmA_{k,i})\vecc{\bmY_{t-i}},
	\bmV_k\otimes\bmU_k\right)\nonumber\\
	\neq&
	f_\mathcal{N}\left(\vecc{\bmY_{t}}|\vecc{\bmC_j}+\sum_{i=1}^{p_j}(\bmB_{j,i}\otimes\bmA_{j,i})\vecc{\bmY_{t-i}},
	\bmV_j\otimes\bmU_j\right),~~\forall k\neq j.
\end{align}

\section{Stationarity}
\label{sec:stationarity}
The stationarity and ergodicity of the MMAR model can be studied using tools from two complementary frameworks, namely, Markov chain and stochastic difference equation (SDE) model. The SDE approach may yield more general results for stationarity while Markov chain techniques are useful for studying stationarity and rate of convergence to the stationary distribution, at the expense of requiring more restrictive parametric conditions. Within the Markov chain literature, \cite{fong2007mixture}  derived sufficient conditions for second-order stationarity for the mixture VAR model. \cite{feigin1985random}  studied the geometric ergodicity of a specialized random coefficient VAR model and \cite{saikkonen2007stability} considered the geometric ergodicity of a general class of random coefficient processes. These known results can be specialized to derive sufficient conditions for the stationarity and ergodicity of the MMAR model, as the MMAR model is a constrained random coefficient VAR model. However, to our knowledge, the SDE approach has not been employed to study the stationarity of the random coefficient VAR model. Here, we apply the SDE framework to obtain new sufficient conditions for the stationarity and existence of finite-order moments of the MMAR model that are less restrictive than the conditions deduced from known results in the Markov chain literature. 

A mixture autoregressive model can be embedded in an SDE model, which is also known as the random coefficient autoregression \citep{douc2014nonlinear}. Let $p_{\max}
=\max\{p_1,\dots,p_K\}$.
For $p_k\neq p_{\max}$, define 
\begin{align*}
	\bmA_{k,i}=\bm{0},\quad\bmB_{k,i}=\bm{0},\quad p_k<i\leq p_{\max}.
\end{align*}
Let $\bmy_{t}=\vecc{\bmY_{t}}$, $t\in\{1,\dots,T\}$, and
\begin{align}
	\label{calX}
	\mathcal{X}_t = \begin{pmatrix}
		\bmy_t\\
		\bmy_{t-1}\\
		\vdots\\
		\bmy_{t-p_{\max}+1}
	\end{pmatrix},
\end{align}
\begin{align}
	\bm{\Phi}_k=
	\begin{pmatrix}
		\bmB_{k,1}\otimes\bmA_{k,1}&\bmB_{k,2}\otimes\bmA_{k,2}&\dots&\bmB_{k,p_{\max}-1}\otimes\bmA_{k,p_{\max}-1}&\bmB_{k,p_{\max}}\otimes\bmA_{k,p_{\max}}\\
		\bmI_{mn}&\bm{0}&\dots&\bm{0}&\bm{0}\\
		\bm{0}&\bmI_{mn}&\dots&\bm{0}&\bm{0}\\
		\vdots&\vdots&\ddots&\vdots&\vdots\\
		\bm{0}&\bm{0}&\dots&\bmI_{mn}&\bm{0}	
	\end{pmatrix}, \label{eq: matrix1}
\end{align}
\begin{align}
	\mathcal{C}_k=\begin{pmatrix}
		\vecc{\bmC_k}\\
		\bm{0}\\
		\vdots\\
		\bm{0}
	\end{pmatrix},\,
	\bm{\Lambda}_k=\begin{pmatrix}
		\bm{V}_k^{1/2}\otimes \bm{U}_k^{1/2} &\bm{0} & \bm{0} & \cdots & \bm{0} \\
		\bm{0} & \bm{I}_{mn} & \bm{0} & \cdots &\bm{0}\\
		\vdots & \vdots &  \vdots & \vdots & \vdots & \\
		\bm{0} & \bm{0} & \bm{0} & \bm{0} & \bm{I}_{mn}
	\end{pmatrix},\,
	\mathcal{E}_{t}=\begin{pmatrix}
		\bme_{t}\\
		\bm{0}\\
		\vdots\\
		\bm{0}
	\end{pmatrix},\, \mathcal{E}_{t,k}= \bm{\Lambda}_k \mathcal{E}_{t} \label{eq: matrix2}
\end{align}
where $\bmI_{mn}$ is the $mn\times mn$ identity matrix, 
and $\{\bme_{t}\}$ is a sequence of i.i.d.   $mn$-dimensional random vector  with  i.i.d. standard normal components. Also, $\bme_{t}$ is independent of $\{\bmY_{t-1},\bmY_{t-2},\dots\}$.
Then the MMAR$(K; p_1,\dots,p_K)$ has the following representation as a first-order mixture VAR model:
\begin{align*}
	\mathcal{X}_t = \mathcal{C}_k+ \bm{\Phi}_k	\mathcal{X}_{t-1} + \mathcal{E}_{t,k},\quad \mathrm{with~probability}~\alpha_k,~1\leq k\leq K.
\end{align*}
Let  $\{(\bmD_t,\bm{\eta}_t)\}$ be a sequence of strictly stationary and ergodic random elements. The SDE model for $\mathcal{X}_t$ is defined as,
\begin{equation}
	\mathcal{X}_t= \bmD_t\mathcal{X}_{t-1}+\bm{\eta}_t,
	\label{sre}
\end{equation}
If $\{(\bmD_t,\bm{\eta}_t)\}$ is set to be a sequence of i.i.d. random elements such that 
\begin{eqnarray}
	&&\bm{\eta}_t=\bm{L}_{0,t}+\bm{L}_t \mathcal{E}_t \nonumber \\
	&&\Pr(\bmD_t=\bm{\Phi}_k, \bm{L}_{0,t}=\mathcal{C}_{k}, \bm{L}_{t}=\bm{\Lambda}_k)=\alpha_k,~~~~1\leq k\leq K, \label{dt_mmar}
\end{eqnarray}
then the MMAR($K,p_1,\dots,p_K$) model \eqref{mixture density 1} coincides with the SDE model \eqref{sre}. Let $\|\cdot\|$ denote an arbitrary but fixed matrix norm.
For the SDE model \eqref{sre}, if $\E{\log^{+}(\|\bmD_1\|)}<\infty$, then its top-Lyapunov exponent is defined as
\begin{align}
	\label{top_lya}
	\gamma=\lim_{t\to\infty}\frac{1}{t}\E{\log\|\bmD_t\bmD_{t-1}\dots\bmD_1\|}=
	\inf_{t\in\mathbb{N}^{*}}\frac{1}{t}\E{\log\|\bmD_t\bmD_{t-1}\dots\bmD_1\|}.
\end{align}
Assume that $\{(\bmD_t,\bm{\eta}_t)\}$ is i.i.d., then the $q$th norm Lyapunov coefficient is defined as,
\begin{equation}
	\label{p_lya}
	\gamma_q = \lim_{t\to\infty}\frac{1}{t}\log\left(\mathbb{E}^{1/q}\left(\|\bmD_t\bmD_{t-1}\dots\bmD_1\|^q\right)\right)=\inf_{t\in\mathbb{N}^{*}}\frac{1}{t}\log\left(\mathbb{E}^{1/q}\left(\|\bmD_t\bmD_{t-1}\dots\bmD_1\|^q\right)\right),
\end{equation}
where $q>0$.
Neither $\gamma$ nor $\gamma_q$ depends on the choice of the matrix norm $\|\cdot\|$ \citep{douc2014nonlinear}.
\subsection{Strict Stationarity}
The strict stationarity of the MMAR model is established by the following proposition.
\begin{Proposition}\label{mmar_str_cond}
	Assume that $\{(\bmD_t,\bm{\eta}_t)\}$ is a sequence of i.i.d. random elements such that \eqref{dt_mmar} holds. If the top-Lyapunov exponent, defined by \eqref{top_lya},
	is strictly negative, then the MMAR model has a unique strictly stationary solution, whose vectorization is given by,
	\begin{equation}
		\tilde{\mathcal{X}_t}=\sum_{j=0}^\infty\left(\prod_{i=t-j+1}^t\bmD_i\right)\bm{\eta}_{t-j}\label{sre_solu}.
	\end{equation}
\end{Proposition}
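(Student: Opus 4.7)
The plan is to invoke the classical Brandt (1986) / Bougerol--Picard framework for stochastic difference equations, adapted here to the mixture setting. First I would verify the moment prerequisites underlying both $\gamma$ and the series \eqref{sre_solu}. Since $\bmD_1$ takes only finitely many values $\{\bm{\Phi}_k\}_{k=1}^{K}$, $\E{\log^{+}\|\bmD_1\|}<\infty$ trivially, so $\gamma$ in \eqref{top_lya} is well defined. Since $\bm{\eta}_t$ is a mixture of $\mathcal{C}_k+\mathcal{E}_{t,k}$ with $\mathcal{E}_{t,k}$ Gaussian, all moments of $\|\bm{\eta}_1\|$ are finite, hence $\E{\log^{+}\|\bm{\eta}_1\|}<\infty$.

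Second, I would establish almost-sure absolute convergence of \eqref{sre_solu}. By the i.i.d.\ assumption and Kingman's subadditive ergodic theorem applied to the subadditive cocycle $j\mapsto\log\|\bmD_t\bmD_{t-1}\cdots\bmD_{t-j+1}\|$,
\begin{equation*}
    \frac{1}{j}\log\left\|\prod_{i=t-j+1}^{t}\bmD_i\right\|\longrightarrow \gamma <0 \quad \text{a.s.}
\end{equation*}
Fixing $\rho\in(e^{\gamma},1)$ and $\epsilon>0$ with $\rho e^{\epsilon}<1$, one has a.s.\ $\|\prod_{i=t-j+1}^{t}\bmD_i\|\leq \rho^{j}$ for all $j$ beyond some random index. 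Moreover, integrability of $\log^{+}\|\bm{\eta}_1\|$ combined with Borel--Cantelli applied to $\{\log^{+}\|\bm{\eta}_{t-j}\|>\epsilon j\}$ yields $\|\bm{\eta}_{t-j}\|\leq e^{\epsilon j}$ eventually. Hence each summand in \eqref{sre_solu} is bounded a.s.\ by $(\rho e^{\epsilon})^{j}$, so the series converges absolutely and defines a finite random vector $\tilde{\mathcal{X}}_t$.

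Third, I would verify that $\tilde{\mathcal{X}}_t$ solves the SDE \eqref{sre}. Peeling off the $j=0$ term, factoring $\bmD_t$ from the remainder, and reindexing gives
\begin{equation*}
    \tilde{\mathcal{X}}_t=\bm{\eta}_t+\bmD_t\sum_{j=0}^{\infty}\left(\prod_{i=t-j}^{t-1}\bmD_i\right)\bm{\eta}_{t-1-j}=\bm{\eta}_t+\bmD_t\tilde{\mathcal{X}}_{t-1},
\end{equation*}
where the rearrangement is justified by the absolute convergence just established. Strict stationarity and ergodicity follow because $\tilde{\mathcal{X}}_t$ is a fixed measurable functional of the shift of the i.i.d.\ (hence stationary ergodic) sequence $\{(\bmD_s,\bm{\eta}_s):s\leq t\}$.

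For uniqueness, let $\mathcal{X}'_t$ be any other strictly stationary solution of \eqref{sre}. Iterating \eqref{sre} $n$ times yields
\begin{equation*}
    \mathcal{X}'_t=\sum_{j=0}^{n-1}\left(\prod_{i=t-j+1}^{t}\bmD_i\right)\bm{\eta}_{t-j}+\left(\prod_{i=t-n+1}^{t}\bmD_i\right)\mathcal{X}'_{t-n}.
\end{equation*}
The exponential contraction $\|\prod_{i=t-n+1}^{t}\bmD_i\|\leq \rho^{n}$ a.s., combined with tightness of the stationary sequence $\{\mathcal{X}'_{t-n}\}$, forces the remainder to tend to $0$ in probability as $n\to\infty$, so $\mathcal{X}'_t=\tilde{\mathcal{X}}_t$ a.s. The main technical obstacle is the careful bookkeeping of random thresholds when turning the Lyapunov-exponent a.s.\ limit into a pointwise geometric bound on the matrix product; once the convergence of \eqref{sre_solu} is secured, all remaining steps are routine.
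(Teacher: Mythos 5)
Your proof is correct, but it takes a different route from the paper's. The paper's entire proof consists of checking the two log-moment hypotheses --- $\E{\log^{+}\|\bmD_1\|}<\infty$ because $\bmD_1$ takes only the finitely many values $\bm{\Phi}_1,\dots,\bm{\Phi}_K$, and $\E{\log^{+}\|\bm{\eta}_0\|}<\infty$ by normality --- and then invoking Theorem 4.27 of Douc, Moulines and Stoffer (2014) as a black box, which delivers existence, uniqueness and strict stationarity of the solution \eqref{sre_solu} in one stroke. You verify exactly the same two hypotheses at the outset, but then, instead of citing the general theorem, you reprove it: Kingman's subadditive ergodic theorem to turn $\gamma<0$ into an eventual geometric bound $\rho^{j}$ on the matrix products, a Borel--Cantelli argument from $\E{\log^{+}\|\bm{\eta}_1\|}<\infty$ to bound $\|\bm{\eta}_{t-j}\|$ by $e^{\epsilon j}$, absolute convergence of the series, the telescoping verification that $\tilde{\mathcal{X}}_t=\bm{\eta}_t+\bmD_t\tilde{\mathcal{X}}_{t-1}$, stationarity and ergodicity from the shift-functional representation, and uniqueness by iterating the recursion and killing the remainder via contraction plus tightness. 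This is the standard Brandt/Bougerol--Picard argument and all the steps are sound (the only delicate point, which you flag, is the handling of the random index beyond which the geometric bounds kick in, and your treatment of it is fine). What the paper's approach buys is brevity and a clean separation between the general SDE theory and the MMAR-specific verification; what yours buys is a self-contained proof that also makes the ergodicity of $\tilde{\mathcal{X}}_t$ explicit as a byproduct (the paper defers ergodicity to the separate Markov-chain argument of Proposition \ref{mmar_ergod}).
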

A sufficient condition for the top-Lyapunov exponent $\gamma$ to be strictly negative is that,
\begin{equation*}
	\E{\log\|\bmD_1\|}=\sum_{k=1}^K\alpha_k\log(\|\bm{\Phi}_k\|)<0.
\end{equation*}
Let $\rho(\bm{\Phi}_k)$ denote the spectral radius of $\bm{\Phi}_k$. By the relationship between the spectral radius and matrix norms, for any $\varepsilon>0$, there exists a matrix norm $\|\cdot\|_{*}$, such that,
\begin{equation}
	\label{sp-norm-relation}
	\rho(\bm{\Phi}_k)\leq \|\bm{\Phi}_k\|_{*}\leq \rho(\bm{\Phi}_k)+\varepsilon.
\end{equation}
By the arbitrariness of $\varepsilon$, we derive the following corollary:
\begin{Corollary}
	\label{mmar_ss_cor}
	A sufficient condition for the MMAR($K;p_1,\dots,p_K$) model to have a strictly stationary and ergodic solution is
	$\sum_{k=1}^K\alpha_k\log(\rho(\bm{\Phi}_k))<0$.
	For an MMAR($K;1,\dots,1$) model, the condition can be simplified to,
	\begin{equation*}
		\sum_{k=1}^K\alpha_k\log(\rho(\bmB_{k,1})\rho(\bmA_{k,1}))<0.
	\end{equation*}
\end{Corollary}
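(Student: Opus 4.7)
The plan is to derive the corollary as a direct consequence of the sufficient condition $\sum_k \alpha_k \log \|\bm{\Phi}_k\| < 0$ (for strict negativity of the top Lyapunov exponent) combined with the norm--spectral-radius relation \eqref{sp-norm-relation}. So I would start by assuming the hypothesis $\sum_{k=1}^K \alpha_k \log \rho(\bm{\Phi}_k) < 0$ and first pick $\varepsilon > 0$ small enough that
\[
\sum_{k=1}^K \alpha_k \log\bigl(\rho(\bm{\Phi}_k) + \varepsilon\bigr) < 0,
\]
which is possible by continuity of $\log$ at each positive $\rho(\bm{\Phi}_k)$ (and because $\log(\rho(\bm{\Phi}_k)+\varepsilon) \downarrow -\infty$ if any $\rho(\bm{\Phi}_k) = 0$).

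Next I would invoke \eqref{sp-norm-relation}, adapted to the finite collection $\{\bm{\Phi}_1,\ldots,\bm{\Phi}_K\}$, to produce a single matrix norm $\|\cdot\|_*$ satisfying $\|\bm{\Phi}_k\|_* \le \rho(\bm{\Phi}_k) + \varepsilon$ simultaneously for every $k$. Such a common norm can be constructed by the standard Jordan-form argument (choose a basis adapted to the spectral decomposition of each $\bm{\Phi}_k$ and combine via equivalence of norms on a finite-dimensional space). With this norm in hand,
\[
\sum_{k=1}^K \alpha_k \log \|\bm{\Phi}_k\|_* \le \sum_{k=1}^K \alpha_k \log\bigl(\rho(\bm{\Phi}_k) + \varepsilon\bigr) < 0,
\]
so the sufficient condition $\E{\log \|\bmD_1\|_*} < 0$ (which follows from submultiplicativity of $\|\cdot\|_*$ together with \eqref{top_lya}) gives $\gamma < 0$. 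Proposition \ref{mmar_str_cond} then delivers the unique strictly stationary solution \eqref{sre_solu}; ergodicity is automatic because this solution is a measurable functional of the i.i.d.\ driving sequence $\{(\bmD_t, \bm{\eta}_t)\}$.

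For the simplified MMAR$(K;1,\ldots,1)$ case, $\bm{\Phi}_k$ collapses to the single block $\bmB_{k,1} \otimes \bmA_{k,1}$, and I would invoke the standard Kronecker-product identity $\rho(\bmB_{k,1} \otimes \bmA_{k,1}) = \rho(\bmB_{k,1})\,\rho(\bmA_{k,1})$, which immediately rewrites the general condition in the stated simplified form.

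The main obstacle is producing a \emph{single} norm $\|\cdot\|_*$ that realizes \eqref{sp-norm-relation} uniformly across all $K$ matrices $\bm{\Phi}_k$, since \eqref{sp-norm-relation} as displayed only guarantees a norm tailored to one matrix; the rest of the argument is then essentially a continuity / monotonicity bookkeeping step combined with a direct appeal to Proposition \ref{mmar_str_cond}.
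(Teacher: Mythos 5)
Your route is the same one the paper takes: bound the top-Lyapunov exponent by the one-step quantity $\E{\log\|\bmD_1\|_{*}}=\sum_{k=1}^K\alpha_k\log\|\bm{\Phi}_k\|_{*}$ for a well-chosen norm, push each $\|\bm{\Phi}_k\|_{*}$ down to $\rho(\bm{\Phi}_k)+\varepsilon$ via \eqref{sp-norm-relation}, let $\varepsilon\to 0$, invoke Proposition \ref{mmar_str_cond}, and use $\rho(\bmB_{k,1}\otimes\bmA_{k,1})=\rho(\bmB_{k,1})\rho(\bmA_{k,1})$ for the order-one case. Your ergodicity remark (the stationary solution is a measurable functional of the i.i.d.\ driving sequence) is correct and, if anything, cleaner than the paper's separate Markov-chain argument in Proposition \ref{mmar_ergod}. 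The problem is precisely the step you flag as the main obstacle: it is a genuine gap, and the repair you sketch does not work. The norm in \eqref{sp-norm-relation} is constructed from the Jordan basis of the \emph{particular} matrix $\bm{\Phi}_k$; ``combining via equivalence of norms on a finite-dimensional space'' destroys the bound, because the equivalence constants between the $K$ individually adapted norms are uncontrolled and enter multiplicatively, so nothing close to $\rho(\bm{\Phi}_k)+\varepsilon$ survives.

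Worse, a single submultiplicative norm with $\|\bm{\Phi}_k\|_{*}\le\rho(\bm{\Phi}_k)+\varepsilon$ for all $k$ simultaneously need not exist at all. Submultiplicativity forces $\|\bm{\Phi}_1\|_{*}\,\|\bm{\Phi}_2\|_{*}\ge\|\bm{\Phi}_1\bm{\Phi}_2\|_{*}\ge\rho(\bm{\Phi}_1\bm{\Phi}_2)$ for \emph{every} matrix norm; taking $\bm{\Phi}_1=\bigl(\begin{smallmatrix}0&1\\0&0\end{smallmatrix}\bigr)$ and $\bm{\Phi}_2=\bigl(\begin{smallmatrix}0&0\\1&0\end{smallmatrix}\bigr)$, which have $\rho(\bm{\Phi}_1)=\rho(\bm{\Phi}_2)=0$ but $\rho(\bm{\Phi}_1\bm{\Phi}_2)=1$, shows no norm can make both factors small. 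Consequently $\inf_{\|\cdot\|}\sum_k\alpha_k\log\|\bm{\Phi}_k\|$ is in general strictly larger than $\sum_k\alpha_k\log\rho(\bm{\Phi}_k)$ (this is the joint-spectral-radius phenomenon: switching between individually stable systems can be unstable), so the one-step bound $\gamma\le\E{\log\|\bmD_1\|}$ cannot be driven down to the spectral-radius expression. To be fair, the paper's own derivation (``by the arbitrariness of $\varepsilon$'') makes exactly the same leap without further argument, so you have reproduced its reasoning faithfully; but as written the proof does not close, and closing it would require either additional structure on the $\bm{\Phi}_k$ (e.g.\ simultaneous triangularizability, a common adapted basis, or normality) or a bound on $\gamma$ that is not of the one-step form.
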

\begin{Remark}
	If $\rho(\bm{\Phi}_k)<1$, then the $k$th component MAR process is stationary. Therefore, by Corollary \ref{mmar_ss_cor} if all the components are stationary, then the MMAR model is also stationary.   
\end{Remark}

The ergodicity of the MMAR model is established by the following proposition.
\begin{Proposition}
	\label{mmar_ergod}
	Let $\{\bmY_t\}$ be an MMAR process, and $\mathcal{X}_t$ defined in \eqref{calX}. If $\{\mathcal{X}_t\}$ is strictly stationary, and the initial values are generated from the stationary distribution, then it is also ergodic.
\end{Proposition}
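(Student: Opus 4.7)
The plan is to realize the stationary process $\{\mathcal{X}_t\}$ as a measurable functional of the i.i.d.\ innovation sequence $\{(\bmD_s,\bm{\eta}_s)\}$ driving the SDE representation \eqref{sre}, and then invoke the standard principle that a factor of an ergodic (in fact Bernoulli) shift is itself ergodic. Roughly, strict stationarity is imposed on $\{\mathcal{X}_t\}$ as a hypothesis; the only thing really to prove is that the shift-invariant $\sigma$-field of the process is trivial, and this is most cleanly seen by pulling the question back to the i.i.d.\ side.

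First, I would extend $\{(\bmD_t,\bm{\eta}_t)\}$ to a two-sided i.i.d.\ sequence $\{(\bmD_t,\bm{\eta}_t)\}_{t\in\mathbb{Z}}$ via Kolmogorov's extension theorem; this is legitimate because by \eqref{dt_mmar} these random elements are i.i.d. Since the process starts from the strictly stationary distribution, Proposition \ref{mmar_str_cond} (applied to the two-sided extension) gives
\begin{equation*}
\mathcal{X}_t=\sum_{j=0}^\infty\left(\prod_{i=t-j+1}^t\bmD_i\right)\bm{\eta}_{t-j}\quad\text{a.s.},
\end{equation*}
where the series converges almost surely. Hence there exists a measurable map $g$, defined on the product space carrying the two-sided sequence, such that $\mathcal{X}_t=g\bigl((\bmD_s,\bm{\eta}_s)_{s\leq t}\bigr)$. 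Writing $\sigma$ for the left shift on the sequence space of $(\bmD_s,\bm{\eta}_s)$, one then checks directly from the series that $\mathcal{X}_{t+1}=g\circ\sigma\bigl((\bmD_s,\bm{\eta}_s)_{s\leq t}\bigr)$, so the shift on $\{\mathcal{X}_t\}$ intertwines with $\sigma$.

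Finally, because $\{(\bmD_t,\bm{\eta}_t)\}$ is i.i.d., $\sigma$ is a Bernoulli shift on the product probability space and is therefore ergodic (in fact mixing). Ergodicity is preserved under measurable factor maps that intertwine the shifts, so the stationary process $\{\mathcal{X}_t\}$ inherits ergodicity from $\sigma$.

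I expect the main obstacle to be essentially bookkeeping rather than a deep probabilistic estimate: one needs to verify the two-sided extension and the fact that the a.s.-convergent series \eqref{sre_solu} defines a genuine measurable map commuting with the shifts. Once these measure-theoretic points are in place, no additional Lyapunov or moment computation is required beyond what already underlies Proposition \ref{mmar_str_cond}, and the conclusion follows from the classical factor-of-Bernoulli argument (see, e.g., the standard treatment of SDEs as in \citealp{douc2014nonlinear}).
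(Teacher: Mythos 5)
Your proof is correct, but it takes a genuinely different route from the paper. You realize the stationary solution as a causal measurable functional of the two-sided i.i.d.\ driving sequence $\{(\bmD_s,\bm{\eta}_s)\}$ via the series \eqref{sre_solu} and invoke preservation of ergodicity under factor maps of the (Bernoulli, hence ergodic and in fact mixing) shift. The paper instead treats $\{\mathcal{X}_t\}$ as a time-homogeneous Markov chain: it argues that the one-step transition density is everywhere positive (hence the chain is irreducible and aperiodic), checks absolute continuity of the transition kernel and of the initial distribution with respect to the stationary distribution, and concludes by Theorem 1.1 of \cite{chan1993asymptotic}. Your approach buys simplicity and robustness: it needs no Markov-chain machinery, and it sidesteps a delicate point in the Markov argument, namely that for $p_{\max}>1$ the one-step transition of the companion-form chain is supported on a lower-dimensional affine subspace (only the first block $\bmy_{t+1}$ is random), so ``positive transition density'' and ``absolute continuity with respect to Lebesgue measure'' on the full state space require careful interpretation there. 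The paper's route, when carried out properly, buys more: irreducibility and a Harris-type ergodic theorem for the chain, rather than only ergodicity of the stationary shift. One point you share with the paper but should make explicit: the hypothesis ``strictly stationary'' is being used in the form ``the conclusion of Proposition \ref{mmar_str_cond} holds,'' i.e., the stationary process coincides almost surely with the a.s.-convergent series \eqref{sre_solu}; strict stationarity alone does not formally hand you that representation, so you should either assume the top-Lyapunov exponent is negative or state that the stationary solution in question is the one constructed in Proposition \ref{mmar_str_cond}.
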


The MMAR model is a $p_{\max}$-order Markov chain. By imposing stronger parametric conditions, the preceding ergodicity result can be strengthened to geometric ergodicity, specifically, under some conditions elaborated below, the MMAR model admits a unique stationary distribution to which the $m$-th transition probability distribution of the MMAR model converges geometrically fast, as $m\to\infty$, in total variation norm, for any fixed initial values.  Geometric ergodicity implies that the stationary MMAR process is $\beta$-mixing with a geometric decaying mixing rate \citep[p. 89]{doukhan1995mixing}. The following result generalizes Theorem 3 of \cite{feigin1985random} and is a special case of Theorem 1 of \cite{saikkonen2007stability} for a wide class of random coefficient processes.  For completeness, we include in the \emph{Supplementary Materials} a brief proof of the following  proposition~\ref{mmar_gergod}, by modifying the proof of  \citep[Theorem 3]{feigin1985random}.

\begin{Proposition}\label{mmar_gergod}
	Assume that (i) $\{(\bmD_t,\bm{\eta}_t)\}$ is a sequence of i.i.d. random elements such that \eqref{dt_mmar} holds, (ii) $\{\bme_{t}\}$ is a sequence of i.i.d.   $mn$-dimensional random vector  with  i.i.d. standard normal components and  $\bme_{t}$ is independent of $\{\bmY_{t-1},\bmY_{t-2},\dots\}$, and (iii) the spectral radius of $\mathscr{J} =\E{\bm{D}_t\otimes \bm{D}_t}$
	is strictly less than 1. Then the MMAR model is geometric ergodic. 

\end{Proposition}
\begin{Remark}
	Since $\E{\bm{D}_t\otimes \bm{D}_t}=\sum_{k=1}^K \alpha_k \bm{\Phi}_k \otimes \bm{\Phi}_k$, its spectral radius is upper-bounded by  $\sum_{k=1}^K \alpha_k \rho^2(\bm{\Phi}_k)$. Thus, the MMAR model is geometric ergodic if $\sum_{k=1}^K \alpha_k \rho^2(\bm{A}_k)\rho^2(\bm{B}_k)<1$. The latter condition implies $\sum_{k=1}^K\alpha_k\log(\rho(\bm{A}_k)\rho(\bm{B}_k))<0$, thanks to Jensen's inequality.     
\end{Remark}

\subsection{Weak Stationarity}
The tails of the stationary solutions are heavier than those of $\bm{\eta}_t$, and may not have finite second-order moments even if $\bm{\eta}_t$ is Gaussian \citep[pp.~91--92]{douc2014nonlinear}. Thus, it is possible that the MMAR model is strictly stationary but not second-order (weakly) stationary. For the MMAR($K;1,\dots,1$) model, 
its first-order and the second-order stationarity conditions can be established based on the results in \cite{fong2007mixture}.
\begin{Proposition}\label{mmar_1st}
	The MMAR($K;1,\dots,1$) model is stationary in the mean if and only if 
	all the eigenvalues of $\sum_{k=1}^K\alpha_k(\bmB_{k,1}\otimes\bmA_{k,1})$
	have modulus less than 1.
\end{Proposition}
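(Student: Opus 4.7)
The plan is to reduce the first-moment behaviour of the MMAR$(K;1,\dots,1)$ model to a deterministic linear recursion for the unconditional mean $\mu_t:=\E{\vecc{\bmY_t}}$, and then invoke the standard stability criterion for such recursions. This parallels the argument of \cite{fong2007mixture} for the unconstrained mixture VAR, of which the MMAR$(K;1,\dots,1)$ is a constrained special case via \eqref{mixture density2}, with the Kronecker-structured coefficient $\bmB_{k,1}\otimes\bmA_{k,1}$ replacing the generic VAR coefficient block.

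The first computation is the one-step-ahead conditional mean. Since the $k$th component in \eqref{mixture density2} has conditional mean $\vecc{\bmC_k}+(\bmB_{k,1}\otimes\bmA_{k,1})\vecc{\bmY_{t-1}}$, taking the mixture expectation and then applying the tower property produces
\begin{equation*}
\mu_t=\bmc+\bmM\mu_{t-1},\qquad\bmc:=\sum_{k=1}^K\alpha_k\vecc{\bmC_k},\quad\bmM:=\sum_{k=1}^K\alpha_k(\bmB_{k,1}\otimes\bmA_{k,1}).
\end{equation*}
For sufficiency, I would observe that if every eigenvalue of $\bmM$ has modulus strictly less than one, then $\bmI-\bmM$ is invertible and $\mu^{\star}:=(\bmI-\bmM)^{-1}\bmc$ is the unique fixed point; unfolding the centred recursion yields $\mu_t-\mu^{\star}=\bmM^{t}(\mu_0-\mu^{\star})\to\bm{0}$, so initializing the process from a distribution with $\mu_0=\mu^{\star}$ (for instance, the strictly stationary distribution furnished by Corollary \ref{mmar_ss_cor} whenever its hypothesis holds) delivers a constant mean. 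For necessity, mean stationarity forces $\mu_t\equiv\mu$, which combined with the recursion yields $(\bmI-\bmM)\mu=\bmc$; because mean stationarity is a property of the model rather than of a single initial condition, the recursion must be globally stable, i.e.\ $\bmM^{t}(\mu_0-\mu)\to\bm{0}$ for every admissible $\mu_0$, which is equivalent to $\rho(\bmM)<1$.

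The main obstacle I expect is the necessity direction, specifically pinning down the notion of model-level mean stationarity that makes the iff tight. When $1$ is an eigenvalue of $\bmM$, the fixed-point equation $(\bmI-\bmM)\mu=\bmc$ may fail to admit any solution at all; when some eigenvalue of $\bmM$ has modulus $\geq 1$ but differs from $1$, a unique fixed point exists yet the recursion is unstable, so the mean cannot remain constant under generic perturbations of the initial distribution. I plan to handle both subcases uniformly by appealing to the Jordan decomposition of $\bmM$, or equivalently by quoting the standard global stability criterion for first-order linear difference equations, which is precisely $\rho(\bmM)<1$.
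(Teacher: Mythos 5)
Your proposal is correct and is in substance the same argument the paper uses: the paper's entire proof consists of noting that the MMAR$(K;1,\dots,1)$ model is a constrained mixture VAR and letting $\bmB_{k,1}\otimes\bmA_{k,1}$ play the role of the coefficient block $\Theta_{k1}$ in Theorem 1 of \cite{fong2007mixture}, which is precisely the first-moment recursion $\mu_t=\bmc+\bmM\mu_{t-1}$ with $\bmM=\sum_{k=1}^K\alpha_k(\bmB_{k,1}\otimes\bmA_{k,1})$ and the stability criterion $\rho(\bmM)<1$ that you derive explicitly. The only difference is that you unpack the cited theorem's proof (including the careful treatment of the necessity direction) rather than quoting it, making your version self-contained but otherwise identical in approach.
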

\begin{Proposition}\label{mmar_2nd}
	Assume the MMAR($K;1,\dots,1$) model is stationary in the mean. Then it is second-order stationary if and only if 
	all the eigenvalues of $\sum_{k=1}^K\alpha_k\{(\bmB_{k,1}\otimes\bmA_{k,1})
	\otimes(\bmB_{k,1}\otimes\bmA_{k,1})\}$
	have modulus less than 1.
\end{Proposition}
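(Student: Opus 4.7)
The plan is to leverage the vectorized mixture VAR(1) representation of the MMAR($K;1,\dots,1$) model, namely
\begin{equation*}
\bmy_t = \vecc{\bmC_k} + \bm{\Psi}_{k,1}\bmy_{t-1} + \vecc{\bmE_{t,k}}, \quad\text{with probability } \alpha_k,
\end{equation*}
where $\bmy_t=\vecc{\bmY_t}$, $\bm{\Psi}_{k,1}=\bmB_{k,1}\otimes\bmA_{k,1}$, and $\Var{\vecc{\bmE_{t,k}}}=\bmV_k\otimes\bmU_k$. Proposition \ref{mmar_2nd} then reduces to a direct application of the second-order stationarity criterion for the mixture VAR(1) model in \cite{fong2007mixture}, after recognizing that the coefficient matrix $\bm{\Psi}_{k,1}$ has the special Kronecker form $\bmB_{k,1}\otimes\bmA_{k,1}$.

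For a self-contained argument, I would compute the second-moment recursion directly. Conditioning on the regime $k$ and on $\mathscr{F}_{t-1}$, and using $\E{\bmE_{t,k}}=\bm{0}$ together with the independence of $\bmE_{t,k}$ and $\bmy_{t-1}$, one obtains
\begin{equation*}
\E{\bmy_t\bmy_t^\tp\mid\mathscr{F}_{t-1}}=\sum_{k=1}^K\alpha_k\Bigl\{(\vecc{\bmC_k}+\bm{\Psi}_{k,1}\bmy_{t-1})(\vecc{\bmC_k}+\bm{\Psi}_{k,1}\bmy_{t-1})^\tp+\bmV_k\otimes\bmU_k\Bigr\}.
\end{equation*}
Taking unconditional expectations, applying $\vecc{\cdot}$, and using the identity $\vecc{\bmA\bmX\bmB}=(\bmB^\tp\otimes\bmA)\vecc{\bmX}$, the component quadratic in $\bmy_{t-1}$ contributes $\bigl(\sum_{k=1}^K\alpha_k(\bm{\Psi}_{k,1}\otimes\bm{\Psi}_{k,1})\bigr)\vecc{\E{\bmy_{t-1}\bmy_{t-1}^\tp}}$, while the remaining terms assemble into a constant vector $\bmb$ under Proposition \ref{mmar_1st} (which makes $\E{\bmy_{t-1}}$ time-invariant). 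Substituting $\bm{\Psi}_{k,1}=\bmB_{k,1}\otimes\bmA_{k,1}$ identifies the resulting coefficient matrix with the one in the statement.

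By standard linear difference-equation theory, this recursion admits a unique stable fixed point precisely when all eigenvalues of the coefficient matrix lie strictly inside the unit disk, giving sufficiency. The main obstacle is the necessity direction: one has to rule out the edge case in which the coefficient matrix has an eigenvalue of modulus at least one yet the recursion still admits a bounded solution. This can be dispatched by noting that the intercept vector $\bmb$ receives a strictly positive-definite contribution from $\sum_k\alpha_k\vecc{\bmV_k\otimes\bmU_k}$, so $\bmb$ cannot lie entirely within the unstable invariant subspace; iterating the recursion from any finite initial second moment must then diverge, contradicting second-order stationarity.
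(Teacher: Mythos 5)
Your first paragraph is precisely the paper's proof: the paper simply lets $\bmB_{k,1}\otimes\bmA_{k,1}$ play the role of $\Theta_{k1}$ in Theorem~3 of \cite{fong2007mixture}, exploiting the constrained mixture-VAR representation \eqref{mixture density2}, so your proposal is correct and takes essentially the same approach. Your additional self-contained moment-recursion sketch is a reasonable reconstruction of the argument behind that cited theorem (the coefficient matrix $\sum_k\alpha_k(\bm{\Psi}_{k,1}\otimes\bm{\Psi}_{k,1})$ is identified correctly), though the necessity step as phrased --- that $\bmb$ ``cannot lie entirely within the unstable invariant subspace'' --- is loose and would need the standard positive-map/monotone-convergence argument for vectorized covariance recursions to be made rigorous; none of this is required since the paper delegates the whole proof to the cited result.
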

Next, we consider the conditions for the existence of $q$th-order stationary solutions to the MMAR ($K;p_1,\dots,p_K$) model.
The following proposition gives the conditions for the stationary solutions of the MMAR model to admit moments of order $q\geq1$.
\begin{Proposition}\label{mmar_wk_cond}
	Assume that $\{(\bmD_t,\bm{\eta}_t)\}$ is a sequence of i.i.d. random elements such that \eqref{dt_mmar} holds. If the $q$th norm Lyapunov coefficient, defined by \eqref{p_lya},
	is strictly negative. Then the MMAR model has a unique strictly stationary solution, whose vectorization is given in \eqref{sre_solu}, such that $\E{\|\tilde{\mathcal{X}_t}\|^q}<\infty$. Moreover, the right-hand-side of  \eqref{sre_solu} converges in the $q$th norm.
\end{Proposition}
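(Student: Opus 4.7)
The plan is to proceed in three steps: (i) derive strict stationarity, uniqueness, and the series representation \eqref{sre_solu} from Proposition \ref{mmar_str_cond}; (ii) prove that the right-hand side of \eqref{sre_solu} is Cauchy in $L^q$ by combining Minkowski's inequality with the i.i.d.\ structure and the $q$th-norm Lyapunov bound; (iii) read off $\mathbb{E}\|\tilde{\mathcal{X}}_t\|^q<\infty$ from the $L^q$ bound. Since $\gamma_q$ is norm-independent, I will work with a submultiplicative matrix norm $\|\cdot\|$ throughout. For step (i), note that because $q\geq 1$, Jensen's inequality gives $\mathbb{E}[\log\|\bmD_t\cdots\bmD_1\|]\leq \log\mathbb{E}[\|\bmD_t\cdots\bmD_1\|^q]^{1/q}$, so $\gamma\leq \gamma_q<0$. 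Proposition \ref{mmar_str_cond} then delivers a unique strictly stationary solution realized by the almost-surely convergent series \eqref{sre_solu}, and the remaining task is to upgrade the mode of convergence from almost sure to $L^q$.

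For step (ii), put $S_{N,t}=\sum_{j=0}^{N}\bigl(\prod_{i=t-j+1}^{t}\bmD_i\bigr)\bm{\eta}_{t-j}$ and apply Minkowski's inequality to obtain
\[
\|S_{M,t}-S_{N,t}\|_{L^q}\leq \sum_{j=N+1}^{M}\Bigl\|\bigl(\prod_{i=t-j+1}^{t}\bmD_i\bigr)\bm{\eta}_{t-j}\Bigr\|_{L^q}.
\]
Since $\bm{\eta}_{t-j}$ is independent of the block $(\bmD_{t-j+1},\dots,\bmD_t)$, conditioning on the $\bmD_i$'s and using $\|\bmA\bmv\|\leq\|\bmA\|\,\|\bmv\|$ gives
\[
\Bigl\|\bigl(\prod_{i=t-j+1}^{t}\bmD_i\bigr)\bm{\eta}_{t-j}\Bigr\|_{L^q}\leq\Bigl\|\prod_{i=t-j+1}^{t}\bmD_i\Bigr\|_{L^q}\cdot\|\bm{\eta}_{t-j}\|_{L^q}=\Bigl\|\prod_{i=1}^{j}\bmD_i\Bigr\|_{L^q}\cdot\|\bm{\eta}_0\|_{L^q},
\]
where the last equality uses the i.i.d.\ hypothesis. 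Because $\bm{\eta}_0$ is a finite mixture of Gaussian-plus-constant vectors of the form $\mathcal{C}_k+\mathcal{E}_{1,k}$, it has all polynomial moments, so $\|\bm{\eta}_0\|_{L^q}<\infty$.

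For step (iii), fix $\delta\in(0,-\gamma_q)$; by \eqref{p_lya} there exists $j_0$ such that $\|\prod_{i=1}^{j}\bmD_i\|_{L^q}\leq e^{j(\gamma_q+\delta)}$ for all $j\geq j_0$, so the majorizing series $\sum_{j}\|\prod_{i=1}^{j}\bmD_i\|_{L^q}\,\|\bm{\eta}_0\|_{L^q}$ converges geometrically. Thus $\{S_{N,t}\}_{N\geq 0}$ is Cauchy in $L^q$; its limit must coincide with the almost-sure limit from Proposition \ref{mmar_str_cond}, and the same bound yields $\|\tilde{\mathcal{X}}_t\|_{L^q}\leq C\|\bm{\eta}_0\|_{L^q}<\infty$, equivalently $\mathbb{E}\|\tilde{\mathcal{X}}_t\|^q<\infty$. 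The step I expect to require the most care is the factorization detaching $\|\prod_{i=t-j+1}^t\bmD_i\|_{L^q}$ from $\|\bm{\eta}_{t-j}\|_{L^q}$: this rests on the joint i.i.d.\ structure of $\{(\bmD_s,\bm{\eta}_s)\}$ enforced by \eqref{dt_mmar}, which in particular ensures $\bm{\eta}_{t-j}$ is independent of the later $\bmD_i$'s. Once that factorization is in hand, the geometric tail from $\gamma_q+\delta<0$ makes the remaining bookkeeping routine.
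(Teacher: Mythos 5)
Your proof is correct, but it takes a genuinely different route from the paper. The paper's proof is two lines: it verifies that $\E{\log^{+}\|\bmD_1\|}=\sum_{k}\alpha_k\log^{+}\|\bm{\Phi}_k\|<\infty$ (since $\bmD_1$ takes only finitely many values) and that $\bm{\eta}_0$ has the requisite finite moments by normality, and then invokes Theorem 4.30 of Douc et al.\ (2014) as a black box, which delivers the unique strictly stationary solution, the $L^q$ convergence of the series, and the finiteness of $\E{\|\tilde{\mathcal{X}_t}\|^q}$ all at once. You instead re-derive the content of that theorem from first principles: the Jensen step $\gamma\leq\gamma_q<0$ to import strict stationarity and almost-sure convergence from Proposition \ref{mmar_str_cond}, then Minkowski plus the independence of $\bm{\eta}_{t-j}$ from $(\bmD_{t-j+1},\dots,\bmD_t)$ to factor the $L^q$ norms, and finally the subadditivity/limit characterization \eqref{p_lya} to get a geometric majorant $e^{j(\gamma_q+\delta)}$ with $\gamma_q+\delta<0$. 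All of these steps are sound (including the identification of the $L^q$ limit with the a.s.\ limit, and the observation that $\bm{\eta}_0$, being a finite mixture of Gaussian-plus-constant vectors, has all moments). What the paper's approach buys is brevity and a clean reduction to a standard result on stochastic difference equations; what your approach buys is a self-contained argument that makes explicit exactly where the i.i.d.\ structure from \eqref{dt_mmar}, the hypothesis $q\geq1$ (needed for Minkowski), and the negativity of $\gamma_q$ enter. As a minor aside, your verification of $\E{\|\bm{\eta}_0\|^q}<\infty$ is actually the condition Theorem 4.30 requires, whereas the paper's proof only states the weaker $\E{\log^{+}\|\bm{\eta}_0\|}<\infty$; your version is the more careful one on this point.
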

Similar to the top-Lyapunov coefficient $\gamma$, a sufficient condition for $\gamma_q<0$ is
$
\log\left(\mathbb{E}^{1/q}(\|\bmD_1\|^q)\right)<0,
$
which is equivalent to
\begin{equation*}
	\mathbb{E}(\|\bmD_1\|^q)=\sum_{k=1}^K\alpha_k\|\bm{\Phi}_k\|^q<1.
\end{equation*}
Using \eqref{sp-norm-relation} again, we can derive the following corollary.
\begin{Corollary}
	\label{mmar_ws_cor}
	A sufficient condition for the MMAR($K;p_1,\dots,p_K$) model to have a stationary and ergodic solution with finite $q$th moment is
	$
	\sum_{k=1}^K \alpha_k(\rho(\bm{\Phi}_k))^q<1.
	$
	For the MMAR($K;1,\dots,1$) model, the condition can be expressed as,
	\begin{equation*}
		\sum_{k=1}^K\alpha_k(\rho(\bmB_{k,1})\rho(\bmA_{k,1}))^q<1.
	\end{equation*}
\end{Corollary}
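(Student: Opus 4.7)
The plan is to derive the corollary from Proposition \ref{mmar_wk_cond}, so it suffices to show that the hypothesis forces the $q$th-norm Lyapunov coefficient $\gamma_q$ defined in \eqref{p_lya} to be strictly negative. First, I would collapse the infimum in \eqref{p_lya} at $t=1$, giving
\[
\gamma_q\;\leq\;\frac{1}{q}\log\mathbb{E}(\|\bmD_1\|^q)\;=\;\frac{1}{q}\log\sum_{k=1}^{K}\alpha_k\|\bm{\Phi}_k\|^q,
\]
where the equality is the direct expansion using \eqref{dt_mmar}. Thus the sufficient condition $\sum_k\alpha_k\|\bm{\Phi}_k\|^q<1$ displayed just before the corollary already gives $\gamma_q<0$, and the only remaining task is to trade a generic matrix norm for the spectral radius.

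Second, I would apply \eqref{sp-norm-relation}: for each $\varepsilon>0$ a sub-multiplicative matrix norm $\|\cdot\|_*$ can be chosen so that $\|\bm{\Phi}_k\|_*\leq\rho(\bm{\Phi}_k)+\varepsilon$. Because $\gamma_q$ does not depend on the norm used in its definition, substituting $\|\cdot\|_*$ into the bound above yields $\gamma_q\leq q^{-1}\log\sum_{k=1}^K\alpha_k(\rho(\bm{\Phi}_k)+\varepsilon)^q$. Continuity in $\varepsilon$ combined with the strict hypothesis $\sum_k\alpha_k\rho(\bm{\Phi}_k)^q<1$ allows me to pick $\varepsilon>0$ small enough that the right-hand side is still negative; hence $\gamma_q<0$. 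Proposition \ref{mmar_wk_cond} then delivers the unique strictly stationary solution with $\mathbb{E}\|\tilde{\mathcal{X}_t}\|^q<\infty$, and Proposition \ref{mmar_ergod} supplies ergodicity. For the MMAR$(K;1,\dots,1)$ specialization, I would substitute $\bm{\Phi}_k=\bmB_{k,1}\otimes\bmA_{k,1}$ and apply the Kronecker-product spectral identity $\rho(\bmB_{k,1}\otimes\bmA_{k,1})=\rho(\bmB_{k,1})\rho(\bmA_{k,1})$ to recover the displayed simplified form.

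The main obstacle, and the point at which my plan essentially mirrors the argument already used for Corollary \ref{mmar_ss_cor}, is the passage through \eqref{sp-norm-relation}: strictly read, that inequality builds a norm tailored to a single matrix, so securing one $\|\cdot\|_*$ that approximates $\rho(\bm{\Phi}_k)$ within $\varepsilon$ simultaneously for every $k$ is not automatic. I expect to finesse this in the same way as in Corollary \ref{mmar_ss_cor}: since $\gamma_q$ is norm-free and the hypothesis is strict, any loss from a common but suboptimal norm on the finite family $\{\bm{\Phi}_1,\dots,\bm{\Phi}_K\}$ can be absorbed into the slack before sending $\varepsilon\downarrow 0$.
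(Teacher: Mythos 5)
Your route is exactly the paper's: the paper derives this corollary in two lines by bounding $\gamma_q$ with the $t=1$ term of the infimum in \eqref{p_lya}, expanding $\mathbb{E}(\|\bmD_1\|^q)=\sum_k\alpha_k\|\bm{\Phi}_k\|^q$, and then invoking \eqref{sp-norm-relation} ``once again'' to trade the norm for the spectral radius, with Proposition \ref{mmar_wk_cond} and Proposition \ref{mmar_ergod} supplying stationarity, the $q$th moment, and ergodicity, and $\rho(\bmB\otimes\bmA)=\rho(\bmB)\rho(\bmA)$ giving the order-one specialization. So on the question of approach, you have reproduced the intended argument faithfully.

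However, the obstacle you flag in your last paragraph is a genuine gap, and the finesse you propose does not close it. The inequality \eqref{sp-norm-relation} produces a norm adapted to one matrix; for a finite family $\{\bm{\Phi}_1,\dots,\bm{\Phi}_K\}$ there need not exist any single sub-multiplicative norm $\|\cdot\|_*$ with $\|\bm{\Phi}_k\|_*\leq\rho(\bm{\Phi}_k)+\varepsilon$ for all $k$ simultaneously, and the discrepancy incurred by passing to a common norm is \emph{not} controlled by the spectral radii, so it cannot be ``absorbed into the slack.'' Concretely, if $\bm{\Phi}_1=c\,\bme_1\bme_2^\tp$ and $\bm{\Phi}_2=c\,\bme_2\bme_1^\tp$ (both nilpotent, so $\sum_k\alpha_k\rho(\bm{\Phi}_k)^q=0$), then $\bm{\Phi}_1\bm{\Phi}_2=c^2\bme_1\bme_1^\tp$ forces $\|\bm{\Phi}_1\|_*\|\bm{\Phi}_2\|_*\geq c^2$ for every sub-multiplicative norm, and alternating products make $\gamma_q>0$ once $c$ is large; the same norm-independence of $\gamma_q$ that you rely on shows no choice of norm can rescue the bound. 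The step from $\sum_k\alpha_k\|\bm{\Phi}_k\|^q<1$ (which is a valid sufficient condition, for any fixed sub-multiplicative norm) to $\sum_k\alpha_k\rho(\bm{\Phi}_k)^q<1$ therefore requires an additional hypothesis — e.g.\ simultaneous triangularizability of the $\bm{\Phi}_k$, or $K=1$ — and cannot be obtained by sending $\varepsilon\downarrow0$. To be fair, the paper's own sketch makes precisely the same leap, so your proposal is no weaker than the source; but as a proof it is incomplete at the step you yourself identified, and the honest conclusion is that the corollary should either be stated with matrix norms in place of spectral radii or be accompanied by a condition under which a common norm approximating all the spectral radii exists.
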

Below, we exhibit an MMAR model comprising both stationary and nonstationary components, while the overall model is strictly stationary. 
\\ \textbf{Example 1:} Consider an MMAR($2;1,1$) model.
Let $\bmY_t\in\mathbb{R}^{2\times2}$, $\alpha_1=0.4$, $\alpha_2=0.6$, and 
\begin{align*}
	&\bmB_{1,1}\otimes\bmA_{1,1}=	\begin{pmatrix}
		0.3&0.4\\0.6&0.3
	\end{pmatrix}\otimes\begin{pmatrix}
		0.5&0.7\\0.55&0.4\end{pmatrix},\\
	&\bmB_{2,1}\otimes\bmA_{2,1}=\begin{pmatrix}
		0.6&0.3\\0.2&0.4
	\end{pmatrix}\otimes\begin{pmatrix}
		1.1&0.2\\0.4&1.2
	\end{pmatrix},
	\nonumber
\end{align*}
Also, we assume that $\bmC_1=\bmC_2=\bm{0}$ and $\bmV_1\otimes\bmU_1 = \bmV_2\otimes\bmU_2 = \bmI_4$. By Proposition 1 in \cite{chen2021autoregressive}, the first MAR component is second-order stationary as $\rho(\bmB_{1,1}\otimes\bmA_{1,1})=0.847<1$, 
while the second MAR component is not because $\rho(\bmB_{2,1}\otimes\bmA_{2,1})=1.099>1$. But the overall model is strictly stationary as $\sum_{k=1}^2\alpha_k\log(\rho(\bmB_k\otimes\bmA_k))
=-0.010<0$. But it is neither first-order nor second-order stationary, as the spectral radii
of $\sum_{k=1}^2\alpha_k(\bmB_k\otimes\bmA_k)$ and $\sum_{k=1}^2\alpha_k\{(\bmB_k\otimes\bmA_k)
\otimes(\bmB_k\otimes\bmA_k)\}$ are all larger than 1. Fig.~\ref{Simu1200} shows a simulated dataset of size 1200 of Example 1.
\begin{figure}[ht!]
	\centering
	\includegraphics[width=.65\textwidth]{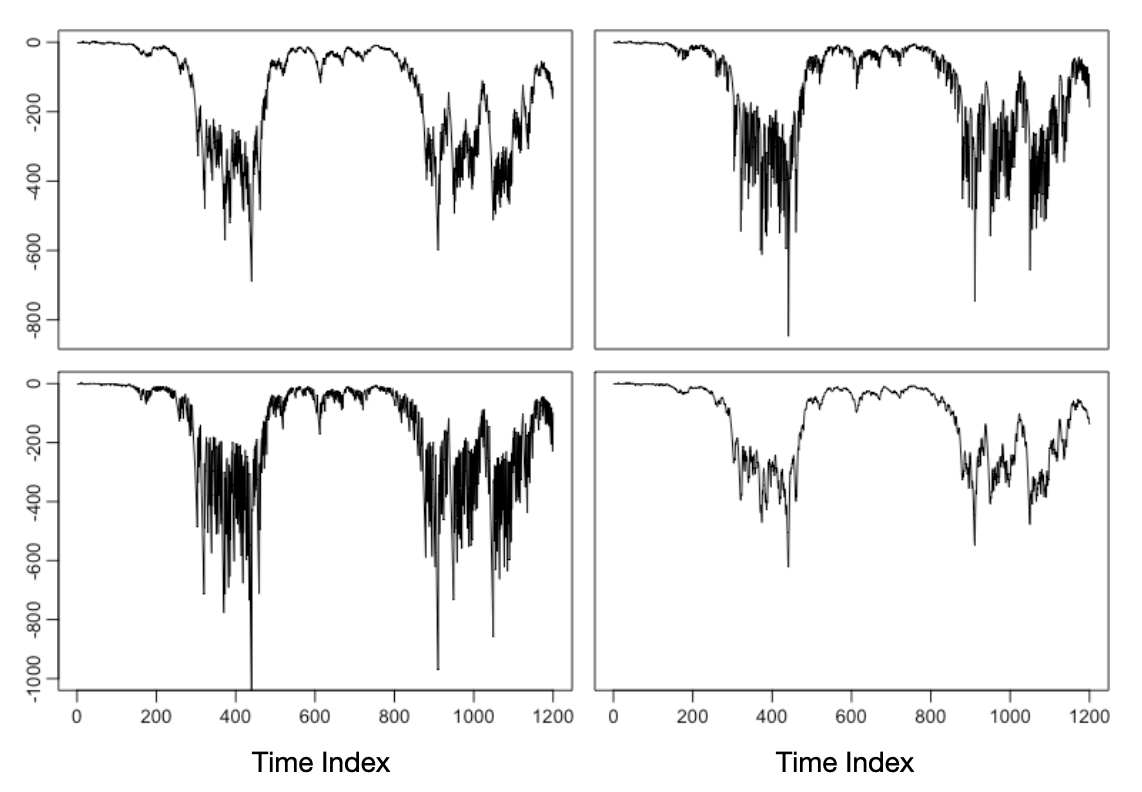}
	\caption[Simulated data of Example 1]{\label{Simu1200} Simulated data of Example 1.}
\end{figure}
We note that the overall model can be made second-order stationary by making minor adjustments to the example's parameters, while preserving the non-stationarity of the second component process. 

\section{Parameter Estimation}
\label{sec:para est}
Maximum likelihood estimation of the MMAR model can be implemented via an Expectation–Maximization (EM) algorithm \citep{dempster1977maximum}.
Let $\bmZ_t=(Z_{t,1},\dots,Z_{t,K})$ be the latent variable, such that $Z_{t,k}=1$ if $\bmY_t$ is from the $k$th component, and equals 0 otherwise. 
For simplicity, define
\begin{equation*}
	\mathcal{A}_k=\begin{pmatrix}
		\bmA_{k,1},\dots,\bmA_{k,p_k}
	\end{pmatrix},\quad	
	\mathcal{B}_k=\begin{pmatrix}
		\bmB_{k,1},\dots,\bmB_{k,p_k}
	\end{pmatrix},
\end{equation*}
and $\mathcal{Z}_{t-1,k}=\mathrm{Bdiag}(\bmY_{t-1},\dots,\bmY_{t-p_k})$, a $(p_km)\times (p_kn)$ block-diagonal matrix, with $\{\bmY_{t-1},\dots,\bmY_{t-p_k}\}$ comprising the diagonal blocks. 
The density of $(\bmY_t,\bmZ_t)$ given $\mathscr{F}_{t-1}$ is 
\begin{equation*}
	\prod_{k=1}^{K}\alpha_k^{Z_{t,k}}
	\left\{
	\frac{\exp\left(-\frac{1}{2}\tr{\bmV_k^{-1}	\bm{E}_{t,k}^\tp\bmU_k^{-1}
			\bm{E}_{t,k}}\right)
	}
	{(2\pi)^{mn/2}\det(\bmV_k)^{m/2}\det(\bmU_k)^{n/2}}
	\right\}^{Z_{t,k}}.
\end{equation*}
\textbf{E-step:} Let $\tau_{t,k}$ be the conditional expectation of the $Z_{t,k}$ given $\mathscr{F}_{t}$ and the current parameter value. Then
\begin{equation*}
	\tau_{t,k}=
	\frac{
		\alpha_k\det(\bmV_k)^{-m/2}\det(\bmU_k)^{-n/2}
		\exp\left(-\frac{1}{2}\tr{\bmV_k^{-1}\bm{E}_{t,k}^\tp\bmU_k^{-1}
			\bm{E}_{t,k}}\right)}
	{\sum_{j=1}^K	\alpha_j\det(\bmV_j)^{-m/2}\det(\bmU_j)^{-n/2}
		\exp\left(-\frac{1}{2}\tr{\bmV_j^{-1}	\bm{E}_{t,k}^\tp\bmU_j^{-1}
			\bm{E}_{t,k}}\right)
	}.
\end{equation*}
\textbf{M-step:} Update the estimates of $\alpha$'s as follows:
\begin{equation*}
	\hat{\alpha}_k=
	\frac{1}{T-p_{\max}}\sum_{t=p_{\max}+1}^T\tau_{t,k}.
\end{equation*}
The estimates of $\calA_k$, $\calB_k$, $\bmC_k$, $\bmU_k$ and $\bmV_k$ must satisfy the following gradient conditions:
\begin{align}
	&\calA_k=\left(\sum_{t=p_{\max}+1}^T\tau_{t,k}(\bmY_t-\bmC_k)\bmV_{k}^{-1}\calB_k\calZ_{t-1,k}^\tp
	\right)\left(\sum_{t=p_{\max}+1}^T\tau_{t,k}
	\calZ_{t-1,k}\calB_k^\tp\bmV_{k}^{-1}\calB_k\calZ_{t-1,k}^\tp\right)^{-1}\label{eq update A},
	\\
	&\calB_k=	\left(\sum_{t=p_{\max}+1}^T\tau_{t,k}
	(\bmY_t-\bmC_k)^\tp\bmU_{k}^{-1}\calA_k\calZ_{t-1,k}
	\right)\left(\sum_{t=p_{\max}+1}^T\tau_{t,k}
	\calZ_{t-1,k}^\tp\calA_k^\tp\bmU_{k}^{-1}\calA_k\calZ_{t-1,k}\right)^{-1},\\
	&\bmC_k=\frac{\sum_{t=p_{\max}+1}^T\tau_{t,k}
		(\bmY_t-\calA_{k}\calZ_{t-1,k}\calB_{k}^\tp)}{\sum_{t=p_{\max}+1}^T\tau_{t,k}},\\
	&\bmU_k=\frac{\sum_{t=p_{\max}+1}^T\tau_{t,k}(\bmY_t-\bmC_k-\calA_{k}\calZ_{t-1,k}\calB_{k}^\tp)\bmV_{k}^{-1}
		(\bmY_t-\bmC_k-\calA_{k}\calZ_{t-1,k}\calB_{k}^\tp)^\tp}{n\sum_{t=p_{\max}+1}^T\tau_{t,k}},\\
	&\bmV_k=\frac{\sum_{t=p_{\max}+1}^T\tau_{t,k}(\bmY_t-\bmC_k-\calA_{k}\calZ_{t-1,k}\calB_{k}^\tp)^\tp\bmU_{k}^{-1}
		(\bmY_t-\bmC_k-\calA_{k}\calZ_{t-1,k}\calB_{k}^\tp)}{m\sum_{t=p_{\max}+1}^T\tau_{t,k}}.
	\label{eq update V}
\end{align}
Closed-form solutions for these parameter estimates do not exist. However, the optimization problem in each M-step can be solved by a blockwise coordinate descent algorithm. To be specific, we use Eqns.~\eqref{eq update A} -- \eqref{eq update V}
to iteratively update one of $\{\mathcal{A}_k, \mathcal{B}_k, \bmC_k,\bmU_{k},\bmV_{k}\}$ with all of the others being fixed.
Note that the target function in each of the {M}-steps is multimodal, and the blockwise coordinate descent algorithm may converge to a local maximum. Due to the identifiability issues,  the estimated parameters are normalized such that constraints \eqref{iden_c1} and \eqref{iden_c1_1} are satisfied. 

The EM algorithm may converge to a local maximum. Nevertheless, given the intricate structure of the target function, numerous local maxima can exist, particularly in high-dimensional scenarios, making it necessary to repeat the process many times. The speed of the proposed EM algorithm could be very slow, as it involves an iterative process to find the maximum within each of the M-step. 

The likelihood function of a mixture model may be unbounded \citep{mclachlanfinite}, hence it may not have a global maximum. Nevertheless, the MLE that corresponds to a local maximum around the true value could be consistent, efficient and asymptotically normal under some regularity conditions \citep{peters1978iterative,redner1984mixture}.

In the longitudinal relational data, \cite{hoff2015multilinear} observed that two scalar time series can be positively correlated even if they are in different rows and columns.
This pattern is not limited to longitudinal relational data but is also observed in other matrix-valued time series datasets, such as the economic indicators dataset displayed in Fig.~\ref{fig_econo} and the simulated dataset shown in Fig.~\ref{Simu1200}. Further investigation in the simulations reveals that the correlations could also be negative. Note that a scalar time series extracted from a matrix-valued time series may reveal the underlying regime-switching behavior of the matrix-valued process. Motivated by this observation, we propose the following initialization procedure for the EM algorithm:
\begin{enumerate}[label=\arabic*)]
	\item[Step 1:] Select an arbitrary scalar time series from the matrix-valued time series data, and fit a scalar mixture autoregressive model with $K$ components. 
	\item[Step 2:] Use the fitted scalar mixture autoregressive model to segment the entire process into $K$ regimes, via a univariate version of Eqn.~\eqref{regime_decision} described in Section \ref{sec:empirical}.
	\item[Step 3:] For each regime so found, fit a VAR model with data in that regime. The  ML estimate of the fitted VAR model serves as the initial value for one component of the MMAR model. The relative frequency of data in that regime is the initial value of the corresponding probability mixing weight.
\end{enumerate}
This procedure can be repeated multiple times to implement the EM algorithm with different sets of initial values, and the estimate yielding the highest likelihood at convergence will be taken as the ML estimate of the MMAR model. Step 1 for the univariate time series is carried out via the EM algorithm, with equal initial mixing weights and randomly generated initial other parameter values. The process is repeated multiple times (20 in our simulations), and the one resulting in the largest log-likelihood is selected, which will be used for segmentation. All numerical results reported below are obtained using the preceding initialization scheme.  

\section{Large Sample Properties}
\label{sec: asym}

In this section, we derive the large-sample properties of the maximum likelihood estimator (MLE)  of the MMAR model. As mentioned earlier, the MMAR model is a constrained mixture VAR($p$) model. For conciseness, within this section, we assume identical VAR orders $p_1=\cdots=p_K=p$, across the mixing VAR components. We consider matrix-valued time series, hence it is assumed that $\min(n,m)\ge 2$.  Combine all the AR coefficient matrices of the $k$th mixing VAR component into $\bm{\Psi}_k=(\bm{\Psi}_{k,1}, \cdots, \bm{\Psi}_{k,p})$, thereby the conditional pdf  is given by
\begin{equation}
	f_t(\bmy_{t}|\mathscr{F}_{t-1};\bm{\bm{\Delta}})=\sum_{k=1}^K 
	\alpha_k f_\mathcal{N}\left(\bmy_{t}|
	\bm{\Psi}_{k,0}+\bm{\Psi}_k \mathcal{X}_{t-1},
	\bm{\Omega}_k\right), \label{mVAR density}
\end{equation}
where $\bmy_t=\vecc{\bmY_t}$, $\bm{\Delta}= \left(\vecc{\bm{\Delta}_1}^\top, \cdots, \vecc{\bm{\Delta}_K}^\top, \alpha_1,\cdots, \alpha_{K-1}\right)^\top$,  $\bm{\Delta}_k=\left(\vecc{\bm{\Psi}_k}^\tp,\right.$ $\left.\vecc{\bm{\Psi}_{k,0}}^\tp, \vech{\bm{\Omega}_k}^\tp\right)^\tp$,    for $k=1,\cdots, K$, $\mathcal{X}_{t-1}=(\bmy_{t-1}^\top, \cdots, \bmy_{t-p}^\top)^\top$,  $\alpha_K=1-\sum_{k=1}^{K-1} \alpha_k$, and  the $\bm{\Omega}$'s are assumed to be positive definite. Below, we sometimes write $f_t(\bmy_{t}|\mathscr{F}_{t-1};\bm{\bm{\Delta}}_k)$ for $ f_\mathcal{N}\left(\bmy_{t}|\bm{\Psi}_{k,0}+\bm{\Psi}_k, \mathcal{X}_{t-1},\bm{\Omega}_k\right)$. 

The MMAR model imposes the Kronecker product representation constraints \eqref{eq: basic parametric constraints}. Thus,  $\bm{\Delta}_k$ are  functions of $\bmtt_k$ where \begin{align*}
	\bmtt_k=&\big(\vecc{\bmA_{k,1}}^\tp, \vecc{\bmB_{k,1}^\top}^\tp,
	\dots,\vecc{\bmA_{k,p}}^\tp,\\ &\vecc{\bmB_{k,p}^\top}^\tp,
	\vecc{\bmC_k}^\tp,\vech{\bmU_k}^\tp, \vech{\bmV_k}^\tp\big)^\tp,
\end{align*} 
where the parameter space of $\bm{\theta}_k$ is assumed to be an open subset of some Euclidean space. 
Indeed they are differentiable functions, with their derivatives obtained as follows: Suppose $\bm{\Psi}=\bm{B}\otimes \bm{A}$ with $\bm{A}$ and $\bm{B}$ are of dimensions $m\times m$ and $n\times n$, respectively. Then, 
\begin{equation}
	\frac{\partial \vecc{\bm{\Psi}}}{\partial\vecc{\bm{B}}^\tp}=(\bm{I}_n\otimes\bm{K}_{mn}\otimes \bm{I}_m)
	(\bm{I}_{n^2}\otimes \vecc{\bm{A}}),
	\label{dPsidB}
\end{equation}
where $\bm{I}_n$ is the identity matrix of dimension $n\times n$ and $\bm{K}_{mn}$ is the commutation matrix which converts the vectorized form of an $m\times n$ matrix to the vectorized form of its transpose, i.e, $\bm{K}_{mn}\vecc{\bm{M}}=\vecc{\bm{M}^\tp}$ for any $m\times n$ matrix $\bm{M}$, c.f., \cite[Theorem 11]{magnus1985matrix}. (Note  that $\bmI_m$ and $\bmI_r$ in  Eqn.~(14) there  should read as $\bmI_{mp}$ and $\bmI_{rs}$, respectively, as can be deduced from the proof  there.) Furthermore, we have 
\begin{equation}
	\frac{\partial \vecc{\bm{\Psi}}}{\partial\vecc{\bm{B}^\top}^\tp}=(\bm{I}_n\otimes\bm{K}_{mn}\otimes \bm{I}_m)
	(\bm{I}_{n^2}\otimes \vecc{\bm{A}})\bm{K}_{nn}.
	\label{dPsidBtop}
\end{equation}
Similarly,
\begin{eqnarray}
	\frac{\partial \vecc{\bm{\Psi}}}{\partial\vecc{\bm{A}}^\tp}&=&(\bm{I}_n\otimes\bm{K}_{mn}\otimes \bm{I}_m)(\vecc{\bm{B}}\otimes\bm{I}_{m^2})
	\label{dPsidA} 
\end{eqnarray}
Consider $\bm{\Omega}=\bm{V}\otimes \bm{U}$ where $\bm{\Omega}, \bm{V}, \bm{U}$ are symmetric matrices of dimensions $mn\times mn, n\times n, m\times m$ respectively.
Let  $\bm{\mathscr{D}}_p$ be the duplication matrix  such that for any $p\times p$ symmetric matrix $\bm{M}$, $\vecc{\bm{M}}=\bm{\mathscr{D}}_p \vech{\bm{M}}$.  As $\bm{\mathscr{D}}_p$ is of full-rank, $\vech{\bm{M}}=\bm{\mathscr{L}}_p \vecc{\bm{M}}$ where $\bm{\mathscr{L}}_{p}=(\bm{\mathscr{D}}_p^\top \bm{\mathscr{D}}_p)^{-1} \bm{\mathscr{D}}_p^\top$. Similarly,
\begin{eqnarray}
	\frac{\partial \vech{\bm{\Omega}}}{\partial\vech{\bm{V}}^\tp}&=&\bm{\mathscr{L}}_{mn}(\bm{I}_n\otimes\bm{K}_{mn}\otimes\bm{I}_m)(\bm{I}_{n^2}\otimes\vecc{\bm{U}})\bm{\mathscr{D}}_{n}
	\label{dOmegaV} \\
	\frac{\partial \vech{\bm{\Omega}}}{\partial\vech{\bm{U}}^\tp}&=&\bm{\mathscr{L}}_{mn}(\bm{I}_n\otimes\bm{K}_{mn}\otimes\bm{I}_m)(\vecc{\bm{V}}\otimes \bm{I}_{m^2})\bm{\mathscr{D}}_{m}.
	\label{dOmegaU} 
\end{eqnarray} 
Routine algebra shows that 
\begin{eqnarray}
	\frac{\partial \vecc{\bm{\Psi}}}{\partial\vecc{\bm{B}^\top}^\tp}
	\vecc{\bm{B}^\top} - \frac{\partial \vecc{\bm{\Psi}}}{\partial\vecc{\bm{A}}^\tp}\vecc{\bm{A}} &\equiv& \bm0. \label{eq:dependence}\\\frac{\partial \vech{\bm{\Omega}}}{\partial\vech{\bm{V}}^\tp}\vech{\bm{V}} -\frac{\partial \vech{\bm{\Omega}}}{\partial\vech{\bm{U}}^\tp}\vech{\bm{U}}&\equiv& \bm0. 
	\label{eq:dependence-cov} 
\end{eqnarray}
It is readily seen that if  $\bm{A}\not=\bm0$, then $ {\partial \vecc{\bm{\Psi}}}/{\partial\vecc{\bm{B}^\top}^\tp}$ is of full-rank. Otherwise, it is of zero rank. Similarly, $ {\partial \vecc{\bm{\Psi}}}/{\partial\vecc{\bm{A}}^\tp}$ is of full-rank (zero-rank) if $\bm{B}\not = \bm0$ ($\bm{B} = \bm0$). Similar discussion applies to the partial derivative of $\bm{\Omega}$ w.r.t. $\bm{U}$ and that w.r.t. $\bm{V}$

Consequently, $\bm{\Delta}$ is a differentiable  function of $\bm{\theta}=(\bm{\theta}_1^\top,\cdots, \bm{\theta}_K^\top, \alpha_1, \cdots, \alpha_{K-1})^\top$. Let $\bm{J}=\frac{\partial \bm{\Delta}}{\partial \bm{\theta}^\top}$. Because of \eqref{eq:dependence} and \eqref{eq:dependence-cov}, $\bm{J}$  is of reduced rank and its rank cannot exceed dim$(\bm{\theta})-(p+1)K$, due to one such linear dependence relationship for each $\bm{\Psi}_{k,j}$ and each $\bm{\Omega}_k$. Below, we show that under some mild regularity conditions, the rank of $\bm{J}$ equals dim$(\bm{\theta})-(p+1)K$.

However $\bm{\Delta}(\bm{\theta})$ is generally a many-to-one function. As discussed earlier, under conditions \eqref{iden_c1_0}, \eqref{iden_c1}  and \eqref{iden_c1_1},  it is a one-to-one function, which we now verify. 
The recovery of $\bm{\theta}$ from $\bm{\Delta}$ can be illustrated in the setting of a non-zero $\bm{\Psi}$ that admits the Kronecker product representation: $\bm{\Psi}=\bm{B}\otimes \bm{A}$, where  $\bm{A}$ is $m\times  m$ and $\bm{B}$ is $n\times n$, $\|\bm{B}\|_F=1$, and the first non-zero element of $\vecc{\bm{B}}$ is positive.
Let $\bm{B}=(b_{u,v})$ and $\bm{A}=(a_{i,j})$.
Let $a_{i_0,j_0}$  and $b_{u_0,v_0}$ be the first non-zero entries of $\vecc{\bm{A}}$ and   $\vecc{\bm{B}}$, respectively, hence the first non-zero entry of   $\bm{\Psi}$ is $\psi_{(u_0-1)m+i_0, (v_0-1)m+j_0}$.  Now, $\psi_{(u-1)m+i_0, (v-1)m+j_0}=b_{u,v} a_{i_0,j_0}$, for all $1\le u,v \le n$. Therefore, $|a_{i_0, j_0}|=(\sum_{1\le u,v\le n} \psi_{(u-1)m+i_0, (v-1)m+j_0}^2)^{1/2}$ hence for $1\le u, v \le n$, $b_{u,v}=\mbox{sign}(a_{i_0,j_0}) \psi_{(u-1)m+i_0, (v-1)m+j_0}/(\sum_{1\le i, j\le n} \psi_{(i-1)m+i_0, (j-1)m+j_0}^2)^{1/2}$. Since $b_{u_0,v_0}$ is the first non-zero element in $B$, it must be positive, which implies that $\mbox{sign}(a_{i_0,v_0})=\mbox{sign}(\psi_{(u_0-1)m+i_0, (v_0-1)m+j_0})$, hence 
$$b_{u,v}=\frac{\mbox{sign}(\psi_{(u_0-1)m+i_0, (v_0-1)m+j_0}) \psi_{(u-1)m+i_0, (v-1)m+j_0}}{(\sum_{1\le i, j\le n} \psi_{(i-1)m+i_0, (j-1)m+j_0}^2)^{1/2}}.$$
With $\bm{B}$ recovered, $\bm{A}$ can be readily recovered by inverting the equation $\psi_{(u_0-1)m+i,(v_0-1)m+j}=b_{u_0,v_0} a_{i,j}$, for 
$1\le i, j\le m$, resulting in
$$
a_{i,j}=\psi_{(u_0-1)m+i,(v_0-1)m+j}/b_{u_0,v_0}.
$$
It is clear that these local one-to-one  relationships are continuous and differentiable iff $\bm{\Psi}\not =\bm0$. Consequently the collection of the $\bm{\Psi}$ having the Kronecker product representation constitutes a manifold of dimension $(n^2-1)+m^2$ around any non-zero $\bm{\Psi}$,
since $\vecc{\bm{B}}$ is constrained to lie on the unit sphere in the $n^2$-dimensional Euclidean space and $\vecc{\bm{A}}$ is an arbitrary vector in the $m^2$-dimensional Euclidean space.  

Condition on $\mathscr{F}_{t-1}$, the log-likelihood function of the MMAR can be expressed as the constrained log-likelihood of  the MAR model, specifically, $L_T(\bm{\theta})=L_T\{\bm{\Delta}(\bm{\theta})\}$ where the unconstrained MAR log-likelihood equals 
$L_T(\bm{\Delta})=\sum_{t=p+1}^Tl_t(\bm{\Delta})$,
with  $$
l_t(\bm{\Delta})=\log\left(\sum_{k=1}^K\alpha_kf_t(\bmy_{t}|\mathscr{F}_{t-1};\bm{\Delta}_k)\right).
$$
Let $\bmtt^0$ be the true parameter.

The first derivative of the log-likelihood equals
$\frac{\partial L_T(\bm{\theta})}{\partial \bm{\theta}^\top}=\sum_t \frac{\partial l_t(\bm{\theta})}{\partial \bm{\theta}^\top}$ where 
$\frac{\partial l_t(\bm{\theta})}{\partial \bm{\theta}^\top}= \frac{\partial l_t(\bm{\Delta})}{\partial \bm{\Delta}^\top} \frac{\partial \bm{\Delta}}{\partial \bm{\theta}^\top}$. Recall $\tau_{t,k}=\alpha_k f_t\left(\bmy_{t}|
\mathscr{F}_{t-1};\bm{\Delta}_k
\right)/ f_t(\bmy_{t}|\mathscr{F}_{t-1};\bm{\bm{\Delta}})$ is the probability that $\bmy_t$ is generated from the $k$th component and 
$\bm{\epsilon}_{t,k}=\bmy_t-\bm{\Psi}_{k,0}-\sum_{j=1}^p \bm{\Psi}_{k,j}\bmy_{t-j}$. 
Note that 
$\frac{\partial l_t}{\partial \bm{\Delta}^\top}=\left(\frac{\partial l_t}{\partial 
	\vecc{\bm{\Psi}_{1,1}}^\top}, \cdots,\frac{\partial l_t}{\partial 
	\vecc{\bm{\Psi}_{1,p}}^\top}, \right. $ $
\left. \frac{\partial l_t}{\partial 
	\vecc{\bm{\Psi}_{1,0}}^\top}, \frac{\partial l_t}{\partial 
	\vech{\bm{\Omega}_1}^\top}, \cdots,\frac{\partial l_t}{\partial 
	\vecc{\bm{\Psi}_{K,1}}^\top},\cdots, \frac{\partial l_t}{\partial 
	\vech{\bm{\Omega}_K}^\top}, \frac{\partial l_t}{\partial \alpha_1},\cdots,  \frac{\partial l_t}{\partial \alpha_{K-1}}\right)
$
where,  $\forall k,j$
\begin{eqnarray}
	\frac{\partial l_t}{\partial \vecc{\bm{\Psi}_{k,0}}^\top} &=& \tau_{t,k} \bm{\epsilon}_{t,k}^\top\bm{\Omega}_{k}^{-1} \label{eq:lt-d0} \\
	\frac{\partial l_t}{\partial 
		\vecc{\bm{\Psi}_{k,j}}^\top} &=& \tau_{t,k} \bmy_{t-j}^\top \otimes\bm{\epsilon}_{t,k}^\top\bm{\Omega}_{k}^{-1} \label{eq:lt-dkj}    \\
	\frac{\partial l_t}{\partial 
		\vech{\bm{\Omega}_k}^\top} &=& -1/2 \times \tau_{t,k} \vech{\bm{\Omega}_k^{-1} -\bm{\Omega}_k^{-1}\bm{\epsilon}_{t,k}\bm{\epsilon}_{t,k}^\top\bm{\Omega}_k^{-1}}^\top\bm{\mathscr{D}}_{mn}^\top \bm{\mathscr{D}}_{mn}, \nonumber \\
	\label{eq:lt-Omega}
\end{eqnarray}
and for $k=1,\cdots, K-1$,
\begin{eqnarray}
	\frac{\partial l_t}{\partial \alpha_k}&=& \frac{f_t\left(\bmy_{t}|
		\mathscr{F}_{t-1};\bm{\Delta}_k
		\right)-f_t\left(\bmy_{t}|
		\mathscr{F}_{t-1};\bm{\Delta}_K
		\right)}{f_t(\bmy_{t}|\mathscr{F}_{t-1};\bm{\bm{\Delta}})}.
	\label{eq:lt-alpha}
\end{eqnarray}
The following condition is a commonly used  identifiability constraints for $\bm{\Delta}$: 
\vskip .1 in
\noindent  ({\bf C1}) $K$ is known, the validity of \eqref{iden_c2}, i.e., the mixing probabilities $\alpha_k$'s are positive and in ascending order, and that the $\bm{\Delta}_k$'s are distinct. 
\vskip .1 in
\noindent Under ({\bf C1}), the true 
$\bm{\Delta}^0\in \bm{\Xi}$ is unique.  It is readily checked that the score vector evaluated at $\bm{\Delta}^0$ is a martingale difference sequence of zero mean, i.e., $\E{\frac{\partial l_t(\bm{\Delta}^0)}{\partial \bm{\Delta}}|\mathscr{F}_{t-1}}=\bm0$ where $\E{\cdot}$ is always evaluated under the true model, hence it is of zero mean. Furthermore, we claim that the Fisher information matrix $$\mathcal{I}(\bm{\Delta}^0)= \E{\frac{\partial l_t(\bm{\Delta}^0)}{\partial \bm{\Delta}}\frac{\partial l_t(\bm{\Delta}^0)}{\partial \bm{\Delta}^\top}}$$ is positive definite; see \emph{Supplementary Materials}. 

Define $\ell(\bm{\Delta})=\mathbb{E}\{l_t(\bm{\Delta})\}.$  Assuming the true model is stationary, with finite fourth stationary moments,  the usual trick enabled by the interchange of differentiation and expectation yields 
\begin{equation}
	\ell(\bm{\Delta})= \ell(\bm{\Delta}^0)   -\frac{1}{2}(\bm{\Delta}-\bm{\Delta}^0)^\top \mathcal{I}(\bm{\Delta}^0)(\bm{\Delta}-\bm{\Delta}^0)+o(|\bm{\Delta}-\bm{\Delta}^0|^2).
	\label{eq:M-asym1}
\end{equation}
Furthermore, 
\begin{equation}
	l_t(\bm{\Delta})-l_t(\bm{\Delta}^0)= \frac{\partial l_t(\bm{\Delta}^0)}{\partial \bm{\Delta}^\top}(\bm{\Delta}-\bm{\Delta}^0) +|\bm{\Delta}-\bm{\Delta}^0|r_t(\bm{\Delta})
	\label{eq:M-asym2}
\end{equation}
where it follows from the mean-value theorem that $r_t(\bm{\Delta})=0$ if $\bm{\Delta}=\bm{\Delta}^0$ and otherwise there exists a $\bm{\Delta}^*$ such that (i) $|\bm{\Delta}^*-\bm{\Delta}^0|\le |\bm{\Delta}-\bm{\Delta}^0|$ and (ii)  $r_t(\bm{\Delta})= \left(\frac{\partial l_t(\bm{\Delta}^*)}{\partial \bm{\Delta}^\top}- \frac{\partial l_t(\bm{\Delta}^0)}{\partial \bm{\Delta}^\top}\right)(\bm{\Delta}-\bm{\Delta}^0)/|\bm{\Delta}-\bm{\Delta}^0|$. 
By the martingale central limit theorem, $\frac{1}{\sqrt{T}}\sum_{t=p+1}^T \frac{\partial l_t(\bm{\Delta}^0)}{\partial \vecc{\bm{\Delta}}}$ converges to $\mathcal{N}(\bm{0}, \mathcal{I}(\bm{\Delta}^0)) $ in distribution as $T\to\infty$. 

The study of the large-sample properties of the ML estimator of the MMAR model is complicated by the fact that the MMAR model is not identifiable for two  reasons. First, the (conditional) normal components need to be distinct distributions otherwise the model is non-identifiable. Even under the distinct component assumption, the component labels are only identifiable up to permutation but it is more a nuisance than a real obstacle. Second, the Kronecker product representations of the VAR coefficient matrices and that of the innovation covariance matrix per each mixing component implies that the model is over-parameterized. Consequently, the same MMAR model admits multiple parametric representations. The $\bm{\theta}$'s can be partitioned into equivalence classes of parameters indexing the same MMAR model, through the equivalence relationship that $\bm{\theta}\sim \bm{\theta}'$ if and only if these two parameter vectors index the same MMAR model.   Let $[\bm{\theta}]$ denote the equivalence class to which $\bm{\theta}$ belongs, and $[\bm{\theta}^0]$ the set of parameters that index the true model. The model non-identifiability of the MMAR model can be removed by imposing suitable identifiability constraints such as \eqref{iden_c1_0}--\eqref{iden_c33}.  However, these identifiability constraints are not unique and their choice is subjective.  More importantly, deriving the large-sample properties of the estimator under a specific set of constraints may muddle both the theoretical investigation and the general feature of the asymptotic properties of the estimator. A more elegant approach is to leverage the constrained M-estimation framework \citep{geyer1994asymptotics} to   first derive the large-sample distribution of the constrained VAR estimates, i.e., the constrained MLE of $\bm{\Psi}$'s and $\bm{\Sigma}$'s, based on which the large-sample distribution of the estimate of any estimable functions of $\bm{\theta}$, including those of the constrained  $\bm{\theta}$ satisfying a set of identifiability constraints can be more readily derived; see the related work by \cite{shapiro1986asymptotic} for over-parameterized structural equation model estimation. 

We begin with deriving the strong consistency of the ML estimator of the MMAR model. \cite{redner1981note} has shown that under  mild regularity conditions, the MLE of the (non-identifiable) mixture multivariate normal model with independent and identically distributed data is strongly consistent in the sense that the MLE converges almost surely to the equivalence class of the true model. Because of the non-identifiability, the MLE $\hat{\bmtt}$ is any $\bm{\theta}$ that attains the maximum likelihood over the parameter space. Below we show the strong consistency of the MLE for the MMAR model. Notably the consistency results only requires known upper bounds for $K$ and $p$, hence the ML estimator is robust against over-specification of $K$ or $p$. 

\begin{Theorem}
	\label{mmar_con}
	Assume that 
	(i)  the true MMAR model is strictly stationary and ergodic, whose stationary distribution admits finite second-order moments and  (ii) $K$ is a known upper bound for the number of distinct normal components and $p$ is a known upper bound for the AR orders and (iii) the parameter space $\bm{C}\subset \bm{\Theta}$ is a compact set containing the true parameter $\bmtt^0$ (in other words,  $\bm{C}\cap [\bmtt^0] \not = \emptyset$). 
	Then, the MLE $\hat{\bmtt}$
	is a strongly consistent estimator of the true parameter $\bmtt^0$, i.e., $\hat{\bmtt}\to \bm{C}\cap [\bm{\theta}^0]$ a.s. or, in other words,  the Euclidean distance between $\hat{\bmtt}$ and $\bm{C}\cap [\bm{\theta}^0]$ approaches zero as $T\to\infty$ a.s.
\end{Theorem}

\begin{Remark}
	Recall under ({\bf C1}), the true 
	$\bm{\Delta}^0\in \bm{\Xi}$ is unique.
	Moreover, $\bm{\Delta}=\bm{\Delta}(\bm{\theta})$ is the same for all $\bm{\theta} \in [\bm{\theta}]$. Hence if assumptions (i)-(iii) hold and (C1) is valid,  the ML estimator $\hat{\bm{\Delta}}$ is strongly consistent. 
	
	\cite[Theorem 6]{redner1981note}  shows that under stronger conditions on $\bm{C}$ including prior information about the disparity between the $\bm{\Delta}_k$'s and a lower bound on the smallest mixing probability, the number of distinct mixing components can be consistently estimated by the MLE. This result can be similarly generalized to the MMAR model, but will not be pursued here as the needed prior information is generally unavailable in practice. 
\end{Remark}

The next result shows that under some mild regularity conditions, $\sqrt{T}(\hat{\bm{\Delta}}-\bm{\Delta}^0)$ is asymptotically normal. 

\begin{Theorem}\label{Thm: 2}
	Assume that \newline 
	(i)  the true MMAR model is strictly stationary and geometric ergodic, whose stationary distribution admits finite fourth-order moments, 
	\newline 
	(ii) the number of distinct normal components $K$ is known, the mixing probabilities $0<\alpha_1<\cdots<\alpha_K<1$,  the true $\bm{\Delta}_k$'s are distinct and none of the true $\bm{\Psi}_{k,j}$'s equal to the zero matrix, and \newline (iii) the parameter space $\bm{C}\subset \bm{\Theta}$ is a compact set that contains an open ball centered at the true $\bm{\theta}^0$, where $\bm{\theta}^0$ is one of $\bm{\theta}$'s that index the true model.  
	
	Then the constrained  MLE $\hat{\bm{\Delta}}=\bm{\Delta}(\hat{\bmtt})$ is asymptotically normal, specifically, as $T\to \infty$, $\sqrt{T}(\hat{\bm{\Delta}} -\bm{\Delta}^0)$ converges in distribution to $\mathcal{N}(\bm{0}, \bm{J}(\bm{J}^\top \mathcal{I}(\bm{\Delta}^0)\bm{J})^{-} \bm{J}^\top)$, where 
	$\bm{J}=\frac{\partial \bm{\Delta}}{\partial \bm{\theta}^\top}$ evaluated at $\bm{\theta}^0$, and for any matrix $\bm{M}$, $\bm{M}^{-}$ denotes its generalized inverse. 
	Also, $2\{L_T(\hat{\bm{\Delta}})-L_T(\bm{\Delta}^0)\}$ 
	is asymptotically $\chi^2$ distributed with degrees of freedom equal to the rank of $\bm{J}$. 
\end{Theorem}
\begin{Remark}
	The preceding asymptotic normal distribution is of the same form as that given in \citep[Proposition 4.1]{shapiro1986asymptotic} upon  (i) noting that the $\bm{\Delta}$ and $\bm{V}$ there equal $\bm{J}$ and $\mathcal{I}(\bm{\Delta}^0)$, respectively  and (ii)  setting $\bm{\Gamma}$ there to be $\mathcal{I}^{-1}(\bm{\Delta}^0)$. As shown in the proof, the limiting normal distribution for $\hat{\bm{\Delta}}$ is the same for any $\bm{\theta}^0\in [\bm{\theta}^0]$. Furthermore, the rank of 
	$\bm{J}(\bm{J}^\top \mathcal{I}(\bm{\Delta}^0)\bm{J})^{-} \bm{J}^\top$ equals that of $\bm{J}$, because $\mathcal{I}(\bm{\Delta}^0)$ is positive definite. Note $\bm{J}^\top \mathcal{I}(\bm{\Delta}^0)\bm{J}=\mathcal{I}(\bm{\theta}^0)$ which is the covariance matrix of $\frac{\partial l_t}{\partial \bm{\theta}}$ evaluated at $\bm{\theta}^0$. The following formulas, which can be derived by routine matrix calculus,  are handy for computing $\mathcal{I}(\bm{\theta}^0)$.
	\begin{eqnarray}
		\frac{\partial l}{\partial \vecc{\bm{A}_{k,j}}} &=& \tau_{t,k} (\bmY_{t-j} \bm{B}_{k,j}^\top \otimes \bmI_m)\bm{\Omega}_k^{-1} \bm{\epsilon}_{t,k} \label{partial-l-A} \\
		\frac{\partial l}{\partial \vecc{\bm{B}^\top_{k,j}}} &=& \tau_{t,k} (\bmI_n \otimes \bmY_{t-j}^\top \bm{A}_{k,j}^\top)\bm{\Omega}_k^{-1} \bm{\epsilon}_{t,k}  \label{partial-l-B}\\
		\frac{\partial l}{\partial \vech{\bm{V}_k}} &=&-\frac{1}{2} \bm{\mathscr{D}}_n^\top \bm{\mathscr{D}}_n \vech{\bm{V}_k^{-1}- \bm{V}_k^{-1} \bm{E}_{t,k}^\top \bmU_k^{-1}  \bm{E}_{t,k}\bm{V}_k^{-1}} \label{partial-l-V}\\
		\frac{\partial l}{\partial \vech{\bm{U}_k}} &=&-\frac{1}{2} 
		\bm{\mathscr{D}}_m^\top \bm{\mathscr{D}}_m \vech{\bm{U}_k^{-1}- \bm{U}_k^{-1} \bm{E}_{t,k} \bmV_k^{-1}  \bm{E}_{t,k}^\top\bm{U}_k^{-1}}.
		\label{partial-l-U}
	\end{eqnarray}
\end{Remark}
% Then it is straightforward to follow the discussion in \citep[Section 4 and Appendix B]{shapiro1986asymptotic} to 
Next, we derive the asymptotic distribution of the constrained MLE of $\bm{\theta}$ subject to a set of identifiability constraints. 
\begin{Theorem} \label{Thm: ident}
	Suppose assumptions (i)--(iii) in Theorem~\ref{Thm: 2} hold. Furthermore, let $c_i(\bm{\theta})=0, i=1,\cdots,r$ be identifiability constraints on $\bm{\theta}$, where  $r=(p+1)K$. Let $\bm{c}(\bm{\theta})=(c_1(\bm{\theta}), \cdots, c_r(\bm{\theta}))^\top$. 
	Let $\bm{W}=\frac{\partial \bm{c}(\bm{\theta}^0)}{\partial \bm{\theta}^\top}$. Assume that under the identifiability constraints, $\bm{\Delta}(\bm{\theta})$ is a diffeomorphic function of $\bm{\theta}$, i.e., they are differentiable reparameterizations of each other. Then, the constrained MLE $\hat{\bm{\theta}}$ subject to the constraints $\bm{c}(\hat{\bm{\theta}})=0$ is asymptotic normal, specifically, $\sqrt{T}(\hat{\bm{\theta}}-\bm{\theta}^0)$ converges weakly to  $\mathcal{N}(\bm0,\bm{P})$, as $T\to\infty$, where $\bm{P}$ is the upper left block of
	$$\begin{pmatrix}
		\bm{P} & \bm{Q}^\top \\
		\bm{Q} & \bm{R}
	\end{pmatrix}= \begin{pmatrix} \bm{J}^\top  \mathcal{I}(\bm{\Delta}^0) \bm{J} +\bm{W}^\top \bm{W}& \bm{W}^\top \\
		\bm{W} & \bm0 
	\end{pmatrix}^{-1}.$$ 
	It is implicitly assumed that the matrix inverse on the RHS of the preceding equation exists. 
\end{Theorem}
\begin{Remark}\label{remark5}
	Corresponding to \eqref{iden_c1} and \eqref{iden_c1_1}, the identifiability constraints $\bm{c}(\bm{\theta})$ consist of $\|\bm{B}_{k,j}\|_F-1$ and $\|\bm{V}_k\|_F-1$, for $j=1,\cdots, p$ and $k=1,\cdots K$. Note that $\frac{\partial \|\bm{B}_{k,j}\|_F-1}{\partial \vecc{\bm{B}_{j,k}}}= \vecc{\bm{B}_{j,k}}$ and  $\frac{\partial \|\bm{V}_{k,j}\|_F-1}{\partial \vech{\bm{V}_{k}}}= \vech{\bm{V}_{k}}$.  Thus, $\bm{W}$ is a $(p+1)K \times$ dim$(\bm{\theta})$ matrix  whose row vector corresponding to the constraint $\|\bm{B}_{k,j}\|=1$ ($\|\bm{V}_k\|=1$) is a zero vector except its $\vecc{\bm{B}_{k,j}}$ ($\vech{\bm{V}_k}$) component equals $\vecc{\bm{B}_{k,j}}$ ($\vech{\bm{V}_k}$). Let $\tilde{\bm{W}}$ equal $\bm{W}$ except that for the row vector corresponding $\|\bm{B}_{k,j}\|=1$ ($\|\bm{V}_k\|=1$), its $\vecc{\bm{A}_{k,j}}$ ($\vech{\bm{U}_k}$) component equals $-\vecc{\bm{A}_{k,j}}$ ($-\vech{\bm{U}_k}$). Then $\bm{J}\tilde{\bm{W}}^\top=0$, due to \eqref{eq:dependence} and \eqref{eq:dependence-cov}. Let $\bm{H}=\bm{J}^\top  \mathcal{I}(\bm{\Delta}^0) \bm{J} +\bm{W}^\top \bm{W}$. Then, $\bm{H}\tilde{\bm{W}}^\top=\bm{W}^\top$. It is readily seen that $\bm{W}\tilde{\bm{W}}^\top=\bm{I}$. Now, $\bm{P}=\bmH^{-1}+\bmH^{-1}\bm{W}^\top(\bm{0}-\bm{W}\bm{H}^{-1} \bm{W}^\top)^{-1}\bm{W}\bm{H}^{-1}= \bmH^{-1}-\bmH^{-1}\bm{W}^\top(\bm{W}\tilde{\bm{W}}^\top)^{-1}\bm{W}\bm{H}^{-1}=\bm{H}^{-1}\bm{J}^\top  \mathcal{I}(\bm{\Delta}^0) \bm{J} \bm{H}^{-1}$ which reduces to 
	Theorem 3 of \cite{chen2021autoregressive} for the MAR(1) model with no intercept term, on noting that they adopted the constraint $\|\bm{A}\|=1$. Furthermore, the covariance matrix of the asymptotic normal distribution of $\sqrt{T}\bm{W}(\hat{\bm{\theta}}-\bm{\theta}^0)$ is the zero matrix, hence $\bm{W}(\hat{\bm{\theta}}-\bm{\theta}^0) =o_p(T^{-1/2})$. In particular, 
	\begin{equation}
		\vecc{\bm{B}_{k,j}^0}^\top (\vecc{\hat{\bm{B}}_{k,j}}-\vecc{\bm{B}_{k,j}^0})=0
		\label{eq:estimate constraint}
	\end{equation}
	up to an error of $o_p(T^{-1/2})$.
\end{Remark}

\section{Model Selection}
\label{sec: selection}
In this section, we discuss methods for selecting the number of components $K$ and the AR orders $(p_1,\dots,p_K)$. Although the asymptotic distribution of the MLE has been derived in the previous section, it remains challenging to implement likelihood based tests to select $K$, such as the Wald test, the score test, and the likelihood-ratio test. This is because these tests for the MMAR model contain nuisance parameters, which are absent under the null hypothesis \citep[see, e.g.,][]{davies1987hypothesis,chan1990likelihood}. Even if $K$ is given and the AR orders are to be selected, the challenges of implementing these tests persist due to some identifiability issues under the null hypothesis.

Therefore, we resort to using information criteria for model selection. The following criteria are taken into consideration: the Akaike information criterion (AIC), the Bayesian information criterion (BIC) and the Hannan–Quinn (HQ) information criterion, which are defined as,
\begin{align*}
	\text{AIC} &=
	-2L_T(\hat{\bmtt})+ 2\cdot\mathrm{dim}(\Theta),\\
	\text{BIC} &=
	-2L_T(\hat{\bmtt})+ \log(T-p_{\max})\cdot \mathrm{dim}(\Theta),\\
	\text{HQ} &=
	-2L_T(\hat{\bmtt})+2\log(\log(T-p_{\max}))\cdot \mathrm{dim}(\Theta),
\end{align*}
where $\Theta$ is the parameter space of $\bm{\theta}$ subject to a set of identifiability constraints.  
In addition, we consider the generalized information criterion (GIC), which was proposed by \cite{nishii1984asymptotic} for model selection in linear regressions. The GIC is given by,
\begin{equation*}\text{GIC} = -2L_T(\hat{\bmtt})+\nu_T\cdot \mathrm{dim}(\Theta),
\end{equation*}
where $\nu_T>0$ is a sequence such that $\lim_{T\to\infty}\nu_T=\infty$ and $\lim_{T\to\infty}\nu_T/T=0$. Obviously, both the BIC and the HQ are special cases of GIC. In our studies, we consider a particular GIC with
\begin{equation*}
	\nu_T=\log(\log(T-p_{\max}))\log(dim(\Theta)),
\end{equation*}
which has also been explored by \cite{meng2022penalized}. Empirical results reported by  \cite{wong2000mixture} and \cite{fong2007mixture} showed that for mixture autoregressive models the AIC is not suitable for selecting the number of components while the BIC is recommended. Since the theoretical properties of these information criteria for the MMAR model are unknown, simulations are used to check their performance in selecting both the number of mixture components $K$ and the AR orders.

The conditional expectation of $\bmY_{t}|\mathscr{F}_{t-1}$ can be used for prediction, which is defined as
\begin{equation*}
	\E{\bmY_{t}|\mathscr{F}_{t-1}}
	=\sum_{k=1}^K\alpha_k\left(\bmC_k+\sum_{i=1}^{p_k}\bmA_{k,i}\bmY_{t-i}\bmB_{k,i}^\tp\right).
\end{equation*}
However, the use of conditional expectations may not be ideal for predicting future values due to the potential presence of multimodal predictive distributions \citep{wong2000mixture}. For instance, the conditional mean may lie in the trough of a bimodal distribution, rendering misleading predictions.  Depending on the shape of the predictive distribution, other point  predictors such as conditional mode may be appropriate. However, for a multimodal predictive distribution, a prediction set is generally more reliable and informative than any point predictor.  

Moreover,  residuals can be used for diagnostic checks. Following \cite{fong2007mixture}, the fitted values take into account the estimated conditional expectation of 
$Z_{t,k}$. Let $\hat{k}(t)$ be the index of the largest value in $\{{\tau}_{t,1},\dots,{\tau}_{t,K}\}$, i.e., $\hat{k}(t)=k$ if and only if
${\tau}_{t,k} =\max\{{\tau}_{t,1},\dots,{\tau}_{t,K}\}$. That is to say, the observation at time $t$ is assumed to be generated by component $\hat{k}(t)$. The fitted values are defined as
\begin{equation*}
	\hat{\bmY}_t=\bmC_{\hat{k}(t)}+
	\hat{\bmA}_{\hat{k}(t)}\bmY_{t-1}\hat{\bmB}_{\hat{k}(t)}^\tp,
\end{equation*}
and the residuals are ${\bmY_t -\hat{\bmY}_t}$,
which can be used to evaluate the goodness of fit for the model. However, common tests for serial correlations among the residuals, such as the multivariate portmanteau tests, cannot be directly applied, as the null distributions of these tests are nontrivial for the MMAR models.

\section{Empirical Results}
\label{sec:empirical}
\subsection{Simulation Studies}
\subsubsection{Performance of the EM algorithm}	
We consider the following two scenarios:
\begin{itemize}
	\item{Scenario 1:} An MMAR(2;1,1) with $(m,n)=(2,3)$.
	\item{Scenario 2:}  An MMAR(2;1,1) with $(m,n)=(4,5)$. 
\end{itemize}
In each scenario, the mixing weights are set to be $(\alpha_1,\alpha_2) = (0.4 ,0.6)$.
The coefficient matrices $(\bmA_{k,i}, \bmB_{k,i},\bmC_{k})$ are generated from random normal matrices with mean $\bm{0}$. 
The variance-covariance matrix $\bmU_k$ is generated by $\bmU_k=\bmQ\bm{\Lambda}\bmQ^\tp$, where $\bmQ$ is a random orthogonal matrix, and $\bm{\Lambda}$ is a diagonal matrix whose elements are absolute values of i.i.d. standard normal random variables. $\bmV_k$ is generated in a similar way. For each scenario, those parameters are randomly generated once, and then remain fixed. In Scenario 1, both components are weakly stationary as $\rho(\bmB_{1,1}\otimes\bmA_{1,1})=0.766<1$ and $\rho(\bmB_{2,1}\otimes\bmA_{2,1})=0.952<1$. In Scenario 2, the first component is stationary, while the second is not as $\rho(\bmB_{1,1}\otimes\bmA_{1,1})=0.668<1$ and $\rho(\bmB_{2,1}\otimes\bmA_{2,1})=1.014>1$. However, both models can be easily verified to be stationary with finite sixth moments, which follows from Corollary~\ref{mmar_ws_cor}. For example, for the second simulation model,  $\sum_{k=1}^2\alpha_k(\rho(\bmB_{k,1})\times \rho(\bmA_{k,1}))^6=0.686 <1$.

For each scenario, 1000 independent realizations with length $T$ are generated, where $T\in\{200, 400, 800, 1600\}$. Then we use the proposed EM algorithm for parameter estimation. Initialization  for the EM algorithm was done as described at the end of Section \ref{sec:para est}. Furthermore, the identifiability constraints ~\eqref{iden_c1_0}-\eqref{iden_c1_1} were imposed for all reported numerical results. In particular, the 
average coverage rates of 95\% confidence intervals across all entries of each parameter matrix is computed. 

In practice, joint confidence regions for $\bm{B}_{k,j}$ and $\bm{A}_{k,j}$ may be of interest. Because of the identifiability constraints, the limiting joint distribution of the MLE of $\bm{B}_{k,j}$ is singular, although that of the MLE of any proper subset of $\bm{B}_{k,j}$ has a non-singular covariance matrix, thereby making it feasible to construct the joint confidence region for any proper subset of $\bm{B}_{k,j}$. Indeed, it follows from Eqn.~\eqref{eq:estimate constraint} that 
$
\vecc{\bm{B}_{k,j}}^\top\left(\vecc{\hat{\bm{B}}_{k,j}}-\vecc{\bm{B}_{k,j}}\right)=o_p(T^{-1/2})
$.    
Since $\vecc{\bm{B}_{k,j}}^\top\vecc{\bm{B}_{k,j}}=\vecc{\hat{\bm{B}}_{k,j}}^\top \vecc{\hat{\bm{B}}_{k,j}}=1$, the preceding asymptotic identity implies that
$
\vecc{\hat{\bm{B}}_{k,j}}^\top\left(\vecc{\hat{\bm{B}}_{k,j}}-\vecc{\bm{B}_{k,j}}\right)= o_p(T^{-1/2})
$, hence the true $\bm{B}_{k,j}(1,1)$ can be determined from the true $\veccq{\bm{B}_{k,j}}$ as follows:
\begin{eqnarray}
	&&
	\hat{\bm{B}}_{k,j}(1,1)\{\hat{\bm{B}}_{k,j}(1,1)-\bm{B}_{k,j}(1,1)\}\nonumber\\
	&=& -\veccq{\hat{\bm{B}}_{k,j}}^\top\left(\veccq{\hat{\bm{B}}_{k,j}}-\veccq{\bm{B}_{k,j}}\right)+o_p(T^{-1/2})  \label{eq:collinearity}
\end{eqnarray}   
where for any matrix $\bm{M}$, its $(1,1)$ element is denoted by $\bm{M}(1,1)$ and $\veccq{\cdot}$ denotes $\vecc{\cdot}$ with its first element removed.  This observation shows that one way to construct a joint confidence region for $\bm{B}_{k,j}$ is to first construct an elliptical joint confidence region for $\veccq{\bm{B}_{k,j}}$ and then compute the associated confidence interval for the $\bm{B}_{k,j}(1,1)$  by leveraging \eqref{eq:collinearity}. Note that $\hat{\bm{B}}_{k,j}(1,1) \not = 0$ a.s. The associated confidence interval for $\bm{B}_{k,j}(1,1)$ so computed is asymptotically the same as the confidence interval constructed based on the marginal distribution of $\hat{\bm{B}}_{k,j}(1,1)$. (We note that instead of the $(1,1)$-th element, any ${j,k}$-th element can be used in the preceding method.)
We adapted this approach to construct and derive the percentage of coverage of the 95\% joint confidence regions for $\bm{\xi}_1$, $\bm{\xi}_2$ and $(\bm{\xi}_1^\tp,\bm{\xi}_2^\tp)^\tp$, where
\begin{equation*}
	\bm{\xi}_k=(\vecc{\bmA_{k,1}}^\tp, \vecc{\bmB_{k,1}}^\tp,
	\dots,\vecc{\bmA_{k,p_k}}^\tp, \vecc{\bmB_{k,p_k}}^\tp,
	\vecc{\bmC_k}^\tp)^\tp,\quad k\in\{1,2\}.
\end{equation*}

The average coverage of the 95\% confidence intervals across all entries in each parameter matrix is given in 
Table \ref{sc12_avg_cov}, and the coverage of the 95\%  joint confidence region of each parameter vector is provided in Table \ref{sc12_er_cov}.
These tables clearly demonstrate that the coverage is precise, particularly when dealing with large sample sizes.

Table \ref{emp_merged} presents the performance of the EM algorithm for estimation. 
Specifically, we provide details for each element in matrix $\bmA_{1,1}$  within Scenario 1, including the true value, the mean of estimates, the theoretical standard error (se) and the empirical standard error. Here, $\bmA_{1,1}(g,h)$ denotes the $(g,h)$-th entry of the matrix $\bmA_{1,1}$, for $1\leq g,h\leq 2$.
In general, for each element, the mean of estimates is close to the true values, and the empirical standard error is closely aligned with the theoretical standard error. The performance of the EM algorithm for other $\bmA$'s and $\bmB$'s in both scenarios is similar and therefore not listed.

\renewcommand{\arraystretch}{0.6} 
\begin{table}[ht!]
	\begin{minipage}{0.5\textwidth}
		\centering
		\begin{tabular}{lllll}
			\hline
			\multicolumn{5}{c}{Scenario 1} \\
			\hline
			$T$&1600&800& 400 & 200 \\
			\hline
			$\bmA_{1,1}$&0.953&0.949  &0.935&0.934  \\
			$\bmB_{1,1}$&0.949	&  0.947& 0.944&0.939\\
			$\bmA_{2,1}$& 0.953& 0.946 & 0.936&0.885\\
			$\bmB_{2,1}$& 0.950&0.950& 0.931&0.877\\
			$\alpha_1$&0.946& 0.951&0.953&0.943\\
			\hline
		\end{tabular}
	\end{minipage}%
	\begin{minipage}{0.5\textwidth}
		\centering
		\begin{tabular}{lllll}
			\hline
			\multicolumn{5}{c}{Scenario 2} \\
			\hline
			$T$&1600&800& 400 & 200 \\
			\hline
			$\bmA_{1,1}$&0.945&0.948  &0.946&0.934  \\
			$\bmB_{1,1}$&0.952&  0.952& 0.952&0.942\\
			$\bmA_{2,1}$&0.951& 0.949 & 0.947&0.943\\
			$\bmB_{2,1}$&0.953& 0.952& 0.947&0.943\\
			$\alpha_1$&0.953& 0.942&0.955 &0.953\\	
			\hline
		\end{tabular}
	\end{minipage}
	\caption{\label{sc12_avg_cov}Empirical coverage rate of nominally 95\% CI.}
\end{table}

\begin{table}[ht!]
	\begin{minipage}{0.46\textwidth}
		\centering
		\begin{tabular}{lllll}
			\hline
			\multicolumn{5}{c}{Scenario 1} \\
			\hline
			$T$&1600	& 800 &400  &200  \\
			\hline
			$\bm{\xi}_1$&0.956 &0.942  &  0.930& 0.909 \\
			$\bm{\xi}_2$& 0.960&  0.954&0.912&0.827 \\
			$(\bm{\xi}_1^\tp,\bm{\xi}_2^\tp)^\tp$&0.965 & 0.955&0.913 &0.826\\
			\hline
		\end{tabular}
	\end{minipage}%
	\hfill
	\begin{minipage}{0.46\textwidth}
		\centering
		\begin{tabular}{lllll}
			\hline
			\multicolumn{5}{c}{Scenario 2} \\
			\hline
			$T$& 1600 & 800 &400  &200  \\
			\hline
			$\bm{\xi}_1$&0.948&0.941  &0.893  &0.817   \\
			$\bm{\xi}_2$&0.945&0.944  &0.921  & 0.894\\
			$(\bm{\xi}_1^\tp,\bm{\xi}_2^\tp)^\tp$&0.957& 0.934&0.906 &0.836\\
			\hline
		\end{tabular}
	\end{minipage}
	\caption{\label{sc12_er_cov}Empirical coverage rate of nominally 95\%  joint confidence regions.}
\end{table}

\begin{table}[ht!]
	\centering
	\begin{tabular}{llllll}
		\toprule
		& & $\bmA_{1,1}(1,1)$ & $\bmA_{1,1}(2,1)$ & $\bmA_{1,1}(1,2)$ & $\bmA_{1,1}(2,2)$ \\
		\midrule
		$T=200$ &true value & -0.752 & 0.694 & 0.662 & 0.844 \\ 
		&mean of estimates & -0.749 & 0.689 & 0.662 & 0.838 \\ 
		&empirical se & 0.048 & 0.165 & 0.053 & 0.088 \\ 
		&theoretical se & 0.024 & 0.132 & 0.013 & 0.071 \\ 
		\midrule
		$T=400$ &true value & -0.752 & 0.694 & 0.662 & 0.844 \\ 
		&mean of estimates & -0.750 & 0.691 & 0.662 & 0.843 \\ 
		&empirical se & 0.018 & 0.098 & 0.010 & 0.053 \\ 
		&theoretical se & 0.017 & 0.093 & 0.009 & 0.050 \\ 
		\midrule
		$T=800$ &true value & -0.752 & 0.694 & 0.662 & 0.844 \\ 
		&mean of estimates & -0.752 & 0.692 & 0.662 & 0.844 \\ 
		&empirical se & 0.012 & 0.067 & 0.007 & 0.036 \\ 
		&theoretical se & 0.012 & 0.066 & 0.007 & 0.036 \\
		\midrule
		$T=1600$ &true value & -0.752 & 0.694 & 0.662 & 0.844 \\ 
		&mean of estimates & -0.752 & 0.692 & 0.661 & 0.845 \\ 
		&empirical se & 0.008 & 0.045 & 0.005 & 0.025 \\ 
		&theoretical se & 0.008 & 0.047 & 0.005 & 0.025 \\ 
		\bottomrule
	\end{tabular}
	\caption{\label{emp_merged}
		Performance of the EM algorithm for scenario 1 with different values of $T$.}
\end{table}

\subsubsection{Comparison of the Information Criteria}
For each scenario, 500 independent realizations with length $T$ are generated, where $T\in\{200,400,800\}$. We then use the EM algorithm along with the proposed initial value selection method to estimate the parameters. For each estimation, the EM algorithm is repeated $m\times n$ times with different initial values, and the parameter estimate that results in the highest likelihood is selected.

We compare the models with $K\in\{1,2,3\}$.
For simplicity, only the models with $p_1=\dots=p_K=p_{\max}$ are considered.
For each scenario, we first selected the number of components with given AR orders. The percentages of correctly selecting $K$ are given in Table \ref{Sc12K}.
\begin{table}
	\begin{minipage}{0.48\textwidth}
		\begin{tabular}{lllll}
			\hline
			\multicolumn{5}{c}{Scenario 1} \\
			\hline
			$T$& AIC &BIC&HQ&GIC \\
			\hline
			$200$& 23.2\% &97.8\%& 83.8\%&99.8\%\\
			$400$&12.4\%  & 98.6\%&88.2\% &100.0\% \\
			$800$&11.6\%&99.4\%&92.4\%&100.0\%\\
			\hline
		\end{tabular}
	\end{minipage}%
	\hfill
	\begin{minipage}{0.48\textwidth}
		\centering
		\begin{tabular}{lllll}
			\hline
			\multicolumn{5}{c}{Scenario 2} \\
			\hline
			$T$& AIC &BIC&HQ&GIC  \\
			\hline
			$200$& 63.8\% &99.6\%&95.4\%&100.0\% \\
			$400$& 32.6\%&99.8\%&96.6\%&100.0\% \\
			$800$& 10.40\%&100.0\%&98.0\%&100.0\%\\
			\hline
		\end{tabular}
	\end{minipage}
	\caption{\label{Sc12K}Percentage of correctly selecting $K$ with $p_{\max}$ given.}
\end{table}
In general, the BIC and the GIC are highly effective in selecting both the number of components and the AR orders for the MMAR model, even when the AR orders are misspecified. In addition, their performance remains consistent for sequences of different lengths.
Generally, the GIC slightly outperforms the BIC. The HQ has a moderate performance in general.
But the AIC is not recommended for selecting the number of mixing components. 

We also compare the model selection performance for selecting the AR orders, with the number of components $K$ given. We select the models with AR orders up to 3. The results, which are presented in Table \ref{Sc12P}, demonstrate similar patterns as observed previously. Specifically, the BIC and the GIC are highly effective in selecting the AR orders, with the HQ exhibiting somewhat worse performance. In contrast, the AIC performed poorly hence not recommended. 

Model selection can be computationally intensive and time-consuming, particularly with moderate dimensional matrix-valued observations.
To speed up the calculation, we recommend a stepwise model selection approach using either the BIC or GIC by first selecting $K$ with $p_{\max}=1$, followed by choosing the AR orders with the selected $K$.  Table \ref{Sc3K_1} 
in the \emph{Supplemental Materials} demonstrate the effectiveness of this approach with the BIC or the GIC. 

In the \emph{Supplementary Materials}, we report  additional simulation results for assessing the empirical model selection performance in the case of three imbalanced mixing components and the case of an MAR model,  which shows  that the BIC and GIC perform well and can select the AR order and the number of components with increasing accuracy as the sample size increases, although in the imbalanced case, larger sample size is needed to achieve similar accuracy as compared to the balanced case. Furthermore, if the true model is an MAR model, BIC and GIC can pick $K=1$ with very high accuracy. This shows that the proposed model selection method provides an effective tool for specifying $K$ and the AR orders. Figs.~\ref{out-sample-prediction-boxplot} and \ref{out-sample-prediction-boxplot-2} show that fitting an MAR(2;1,1) model to data from an MAR(1) model results in slight inflation in the out-sample prediction performance, confirming  the conclusion of Theorem~\ref{mmar_con} that  the MLE is consistent even if the model is mis-specified to have more components than needed.

\begin{table}[ht!]
	\begin{minipage}{0.46\textwidth}
		\centering
		\begin{tabular}{lllll}
			\hline
			\multicolumn{5}{c}{Scenario 1} \\
			\hline
			$T$& AIC &BIC&HQ&GIC \\
			\hline
			$200$& 34.8\% &100.0\%& 99.8\%&100.0\%\\
			$400$&37.4\%  & 100.0\%&100.0\% &100.0\% \\
			$800$&42.4\%&100.0\%&100.0\%&100.0\%\\
			\hline
		\end{tabular}
	\end{minipage}%
	\hfill
	\begin{minipage}{0.46\textwidth}
		\centering
		\begin{tabular}{lllll}
			\hline
			\multicolumn{5}{c}{Scenario 2} \\
			\hline
			$T$& AIC &BIC&HQ&GIC \\
			\hline
			$200$& 0.4\% &100.0\%&100.0\%&100.0\%\\
			$400$&4.4\% &100.0\%&100.0\%&100.0\%\\
			$800$&7.4\%&100.0\%&100.0\%&100.0\%\\
			\hline
		\end{tabular}
	\end{minipage}
	\caption{\label{Sc12P}Percentage of correctly selecting $p_{\max}$ with $K$ given.}
\end{table}
\subsection{Real Data}	
The proposed MMAR model is applied to analyze the economic indicator dataset presented in  Fig.~\ref{fig_econo}. All the series are centered and normalized such that the pooled variance for each indicator across all the counties is 1.
We begin with model selection. The log-likelihood, the BIC, and the GIC for $K\in\{1,2,3,4\}$ and $p_{\max}\in\{1,2,3\}$ are given in Table \ref{econ_selection}. Again, only the models with $p_1=\dots=p_K=p_{\max}$ are considered.
According to BIC, an MMAR(3;1,1,1) model is selected while an MMAR(2;1,1) model is chosen by GIC.  We fit both models and evaluate their performance.  The MMAR(2;1,1) demonstrates better  interpretability and improved predictive accuracy on out-of-sample data. Accordingly, we present its results in this section, and the results for the MMAR(3;1,1,1) model are provided in Section  \ref{sec_K3_real} 
in the \emph{Supplementary Materials}.

For the fitted MMAR(2;1,1), convergence of the EM algorithm is monitored using the log-likelihood values across iterations. As shown in Fig.~\ref{K2_ll_trace}, the log-likelihood increases monotonically and eventually stabilizes. The algorithm stops when the increase in log-likelihood is less than the predefined threshold of $5\times10^{-4}$.
\begin{figure}[ht!]
	\centering
	\includegraphics[width=.8\textwidth]{k2_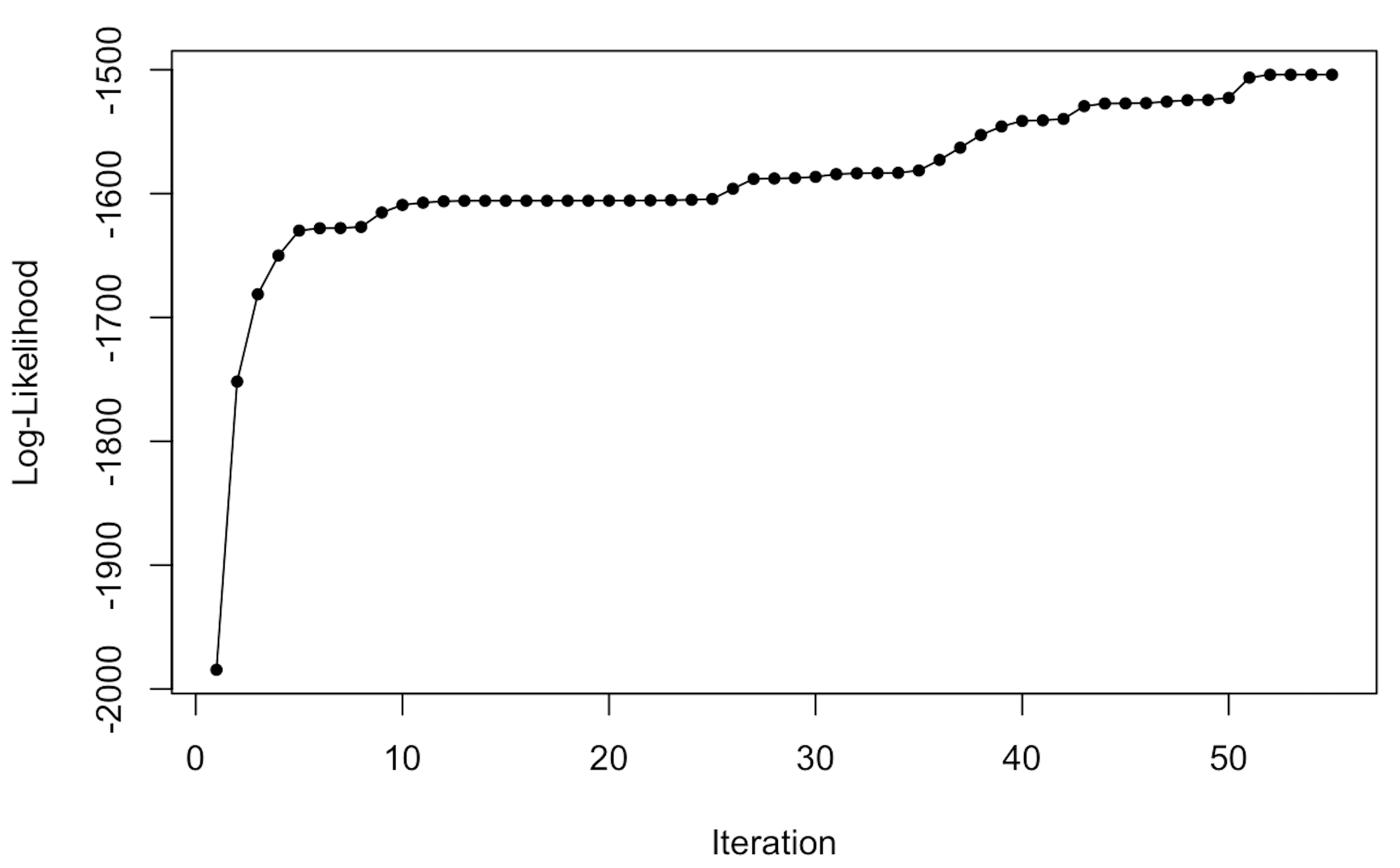}
	\caption[EM algorithm convergence plot for the fitted MMAR(2;1,1) model]{\label{K2_ll_trace} EM algorithm convergence plot for the fitted MMAR(2;1,1) model.}
\end{figure}
The standardized residuals of the fitted model (Fig.~\ref{k2_acf}) reveal no temporal patterns, suggesting a good fit. The estimated mixing weights are $\hat{\alpha}_1=0.183(0.001)$ and $\hat{\alpha}_2=0.817(0.001)$, where standard errors are shown in parentheses.
Since
\begin{equation*}
	\rho(\hat{\bmB}_{1,1}\otimes\hat{\bmA}_{1,1})=0.770<1\quad\text{and}\quad\rho(\hat{\bmB}_{2,1}\otimes\hat{\bmA}_{2,1})= 0.759<1,
\end{equation*}
both components of the mixture are weakly stationary. Moreover,
\begin{equation*}
	\sum_{k=1}^2\hat{\alpha}_k\log(\rho(\hat{\bmB}_{k,1})\rho(\hat{\bmA}_{k,1}))=-0.273<0,\qquad
	\sum_{k=1}^2\hat{\alpha}_k(\rho(\hat{\bmB}_{k,1})\rho(\hat{\bmA}_{k,1}))^q<1,
\end{equation*}
for any positive integer $q$. 
By Corollaries \ref{mmar_ss_cor} and \ref{mmar_ws_cor},
the overall model is strictly stationary, whose stationary distribution admits finite $q$-th moments, for any positive integer $q$. It follows from Proposition~\ref{mmar_gergod} that the fitted model is geometric ergodic and $\beta$-mixing with geometric decaying rate. Furthermore, we use the following decision rule \citep{mclachlanfinite} to classify the dataset into different regimes:
\begin{equation}
	\label{regime_decision}
	\bmY_t\in\text{regime}~i \quad\text{if}\quad
	\hat{\alpha}_if_t(\bmy_{t}|\mathscr{F}_{t-1};\hat{\bmtt}_i)\geq
	\hat{\alpha}_jf_t(\bmy_{t}|\mathscr{F}_{t-1};\hat{\bmtt}_j),\quad
	j\in\{1,2,\dots,K\},
\end{equation}
where $\hat{\bmtt}_j$ is the MLE of $\bmtt_j$, and $f_t(\bmy_{t}|\mathscr{F}_{t-1};{\bmtt_j})=f_t(\bmy_{t}|\mathscr{F}_{t-1};\bm{\Delta}_j({\bmtt_j}))$, i.e., the conditional pdf based on the $j$th regime. 
Based on the fitted model, the data are classified into two regimes, as shown in Fig.~\ref{K2_econ_c}, where regime 1 is shaded red, and regime 2 is unshaded. Here, regime 1 corresponds to the first mixture component, and regime 2 corresponds to the second.
It is worth noting that regime 1 displays relatively high volatility, while regime 2 exhibits comparatively low volatility.
\begin{figure}[h]
	\begin{center}
		\includegraphics[width=1\textwidth]{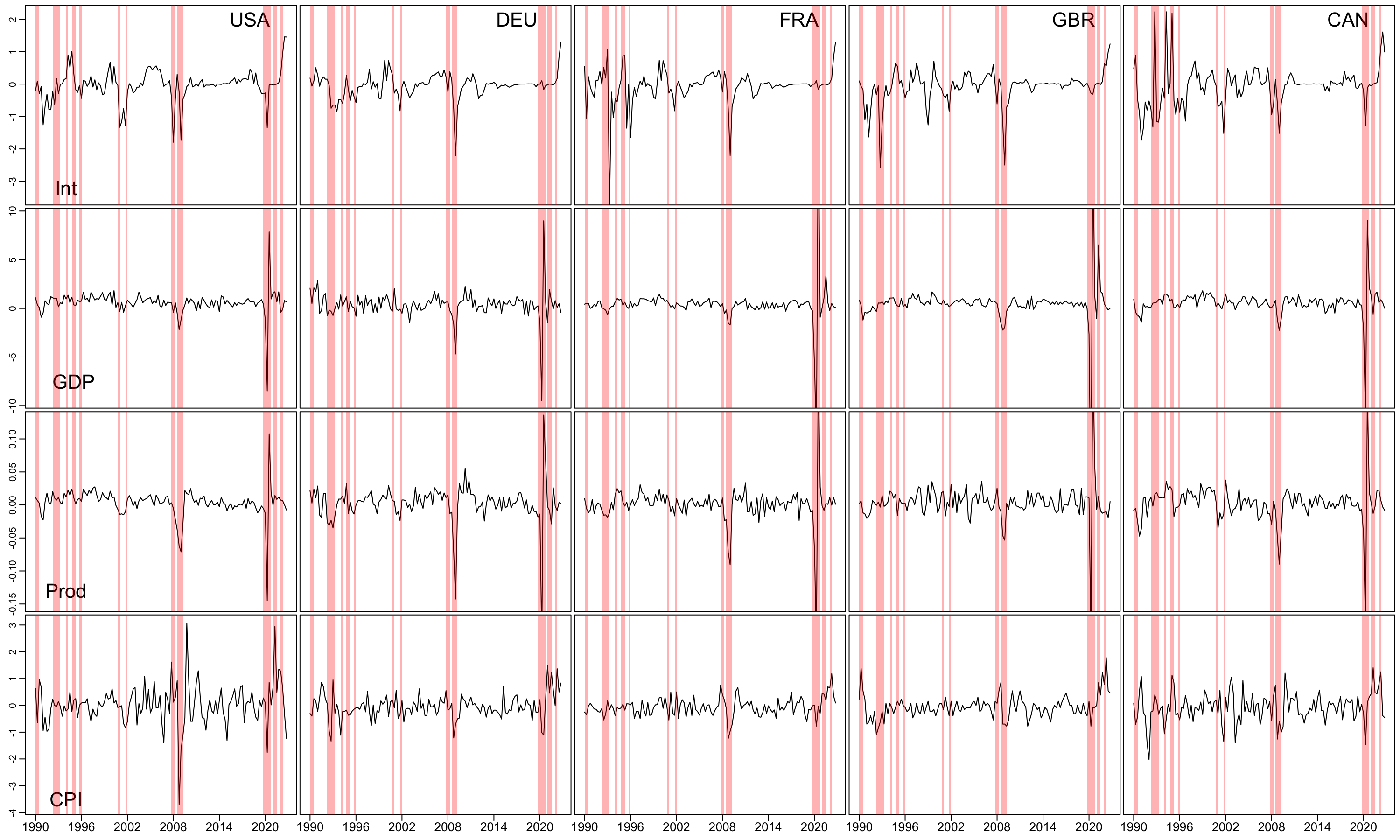}
	\end{center}
	\vspace{-0.2cm}
	\caption{\label{K2_econ_c} Time series plot of the
		economic indicators, with regime 1 shaded red and regime 2 unshaded using MMAR(2;1,1) model.}
\end{figure}

Tables \ref{K2_econ_A1} -- \ref{K2_econ_c2} show the MLE of the parameter matrices $\bmA_{k,1}, \bmB_{k,1}$ and $\bmC_{k,1}$ for $k\in\{1,2\}$, and the corresponding standard errors, respectively. Due to the identifiability constraints, the Frobenius norms of $\bmB$'s are scaled to 1. To facilitate model interpretation \citep{chen2021autoregressive}, the signs of the significant coefficient matrix elements, at the 5\% level,  are displayed on the right-hand side of each table, specifically, 
using symbols (+) for positively significant, (-) for negatively significant, and (0) for insignificant coefficients.
For instance, the first column of $\bmA_{k,1}$ can be understood as the impact of the previous quarter's interest rates on the current economic indicators, while the first column of $\bmB_{k,1}$ captures the influence of US's last quarter's indicators on the current quarter's indicators of all countries, for each $k\in\{1,2\}$. The estimated parameter matrices $\hat{\bmA}$'s and $\hat{\bmB}$'s demonstrate both differences and similarities among different regimes. Concerning the differences, one example is that the first column of $\hat{\bmA}_{1,1}$ indicates that an increase in the interest rate of the previous quarter has a significantly negative influence on the interest rate in the current quarter in regime 1. In contrast, the first column of $\hat{\bmA}_{2,1}$ indicates the opposite effect in regime 2.
Regarding the similarities, by checking the first columns of $\hat{\bmB}$'s, it is observed that
the US's previous quarter's indicators consistently have a positive effect on current quarter's indicators from all the countries across the two regimes. 

The out-of-sample prediction performance is also examined. We use the data from Q1 1990 to Q2 2021 ($1\leq t\leq 126$) to fit the model and derive the MLE of the parameter.
Subsequently, we derive the marginal predictive distributions for the period from Q3 2021 to Q4 2022 ($127\leq t\leq 132$),  based on \eqref{mixture density 1} with the parameter replaced by the MLE. 
The observed values along with the predictive values by the conditional mean are shown in Figs.~\ref{K2_t127} -- \ref{K2_t128} and Figs.~\ref{K2_t129} -- \ref{K2_t132}. In each plot, the shaded areas indicate the 95\% highest density region. The plots reveal several notable patterns. The marginal predictive distributions of interest rate and the CPI are generally unimodal, whereas those of GDP growth and industrial production growth tend to be either bimodal or skewed unimodal. 
In general, most of the 95\% highest density regions capture the true observations. 
\begin{figure}[ht!]
	\begin{center}
		\scalebox{0.125}{\includegraphics{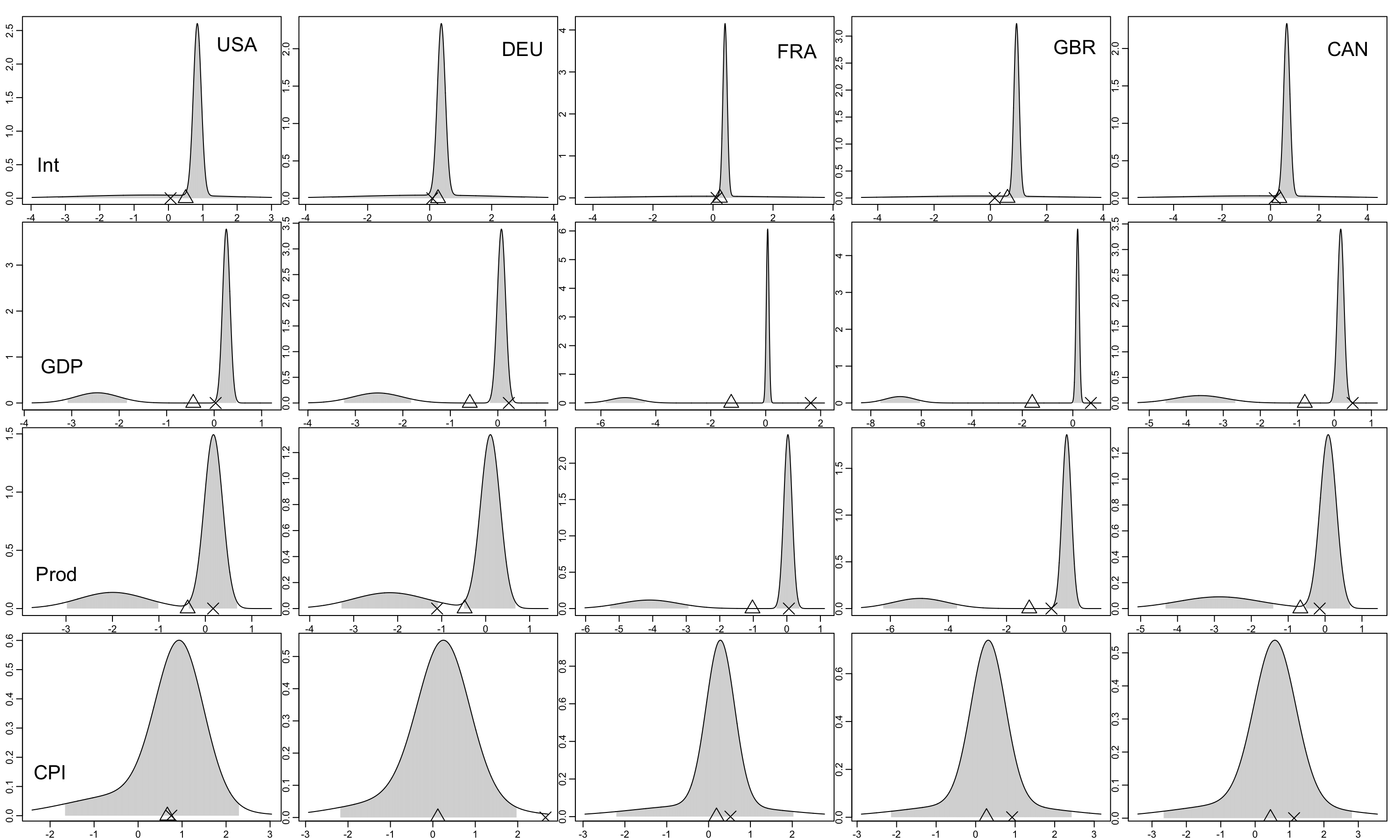}}
	\end{center}
	\vspace{-0.2cm}
	\caption{\label{K2_t127} One-step marginal predictive distribution for Q3 2021 under MMAR(2;1,1) model, with $\times$ representing the observed values and $\triangle$ the predicted values, and the shaded areas representing the 95\% highest density interval.}
\end{figure}

\begin{figure}[ht!]
	\begin{center}
		\scalebox{0.125}{\includegraphics{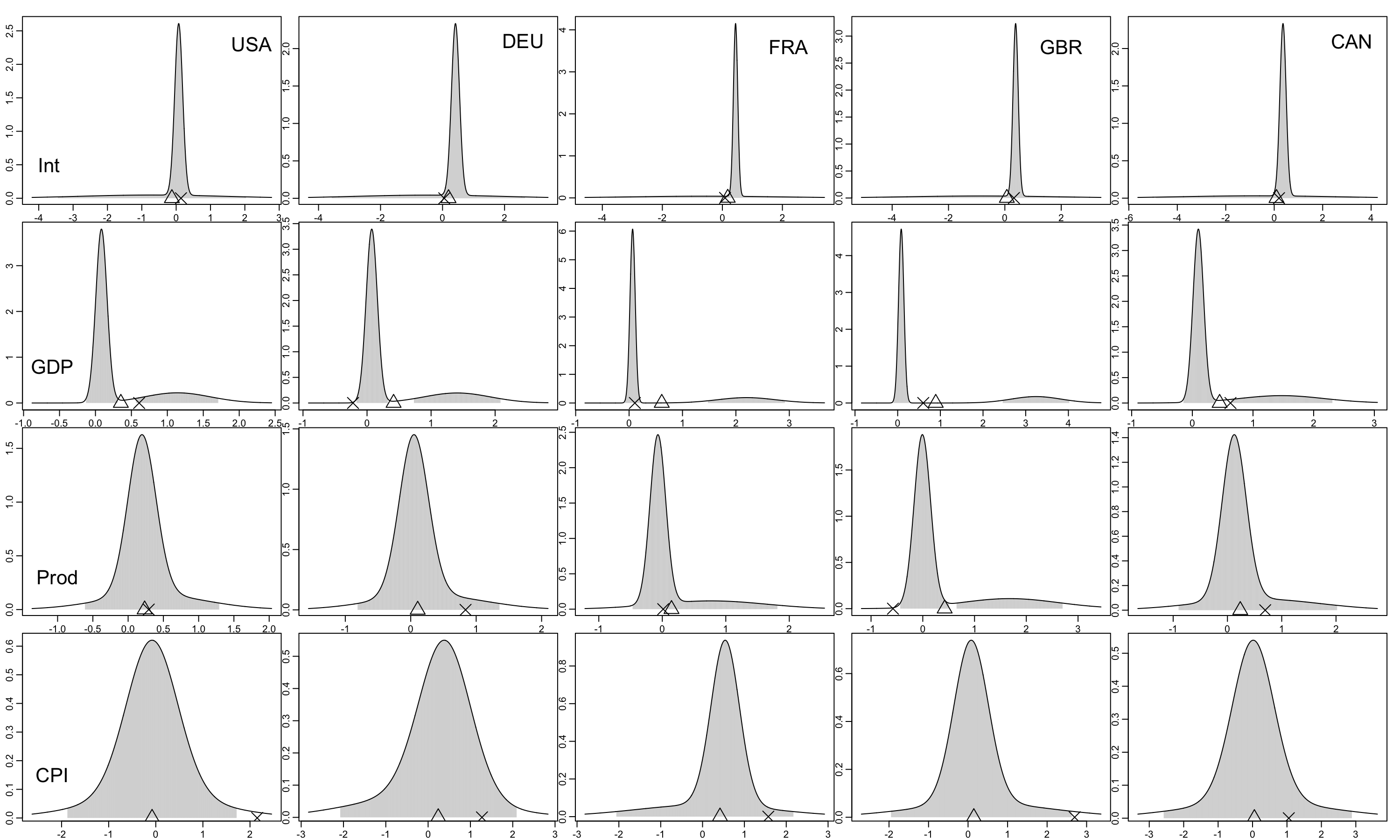}}
	\end{center}
	\vspace{-0.2cm}
	\caption{\label{K2_t128}One-step marginal predictive distribution for Q4 2021 under MMAR(2;1,1) model, with $\times$ representing the observed values and $\triangle$ the predicted values, and the shaded areas representing the 95\% highest density interval.}
\end{figure}

The one-step ahead out-of-sample prediction errors of the MMAR(2;1,1) model are compared with the following models:
\begin{enumerate}
	\item MAR($p$):  matrix autoregression, $p\in\{1,2\}$.
	\item VAR($p$): vector autoregression, $p\in\{1,2\}$.
\end{enumerate}
We have also attempted to implement the mixture VAR model \citep{fong2007mixture}. However, the estimation process using the EM algorithm did not converge due to a    singular variance-covariance matrix error.
Additionally, fitting the Gaussian mixture vector autoregressive model \citep{kalliovirta2016gaussian} using the {\tt gmvarkit} package\footnote{\url{https://cran.r-project.org/web/packages/gmvarkit/index.html}} resulted in errors. The estimation errors suggest that these two models may be inappropriate for analyzing high-dimensional data. We also include the MMAR(3;1,1,1) model for comparison.

Using the conditional mean for prediction, the mean squared prediction errors (MSPE) are given in Table \ref{out_pred}. Although for the mixture models, the conditional expectations may not be optimal for  predicting future values, 
the MMAR models still clearly outperform the MAR and VAR models. The MMAR(2;1,1) shows a modest improvement over the MMAR(3;1,1,1) in terms of out-of-sample prediction.
\begin{table}[ht!]
	\centering
	\begin{tabular}{cccccc}
		\hline
		MMAR(2;1,1)	&MMAR(3;1,1,1)& MAR(1)& MAR(2)&VAR(1)&VAR(2)\\
		\hline
		23.46&26.22	& 50.72 &  54.13& 73.10&134.64\\
		\hline
	\end{tabular}
	\caption{\label{out_pred}Mean of squared out-of-sample prediction errors.}
\end{table}

\section{Conclusion}
We have proposed a new mixture model for matrix-valued time series data, with the capability to effectively capture changing dynamics. We investigate both strict and weak stationarity conditions for the proposed model. An EM algorithm is implemented to estimate the MLE of the parameters, and the asymptotic properties of the MLE are derived. Based on our simulation results, we recommend using either the BIC or the GIC for model selection.

There are several directions to extend the proposed MMAR model. Fig.~\ref{K2_econ_c} shows that the regimes tend
to cluster over time, i.e., there is dependence over time in the regimes--a key stylized fact in many financial and economic data.  Thus, it is of interest to extend the MMAR model to incorporate Markov or dependent switching for matrix-valued time series. 
Note that \cite{pouzo2025robustness} showed that, under some regularity conditions, quasi-likelihood estimation of the parameters for the within-regime dynamics in a hidden Markov model is robust even if the true inhomogeneous Markov switching mechanism is misspecified as i.i.d. switching. However, their robustness result does not hold if within some regimes, the response depends on its lags, hence their robustness result is not applicable to the proposed MMAR model. 
The conditional matrix normal distribution in the model may be replaced by other distributions, such as matrix-valued t-distributions or even some skewed matrix-valued distributions \citep{gallaugher2018finite}. These models are potentially useful for modeling matrix-valued financial data with heavy tails, such as the Fama-French portfolios cross-classified by size and book-to-market ratio. 
Moreover, it is important to note that the proposed MMAR model may contain a large number of parameters, particularly with high-dimensional  matrix time series and numerous mixture components with high autoregressive orders. 
But not all coefficients need to vary across regimes. For instance, Fig.~\ref{fig_econo} suggests that in some regimes, all means are shifted by some large negative amount,
or the covariance in the shocks is much higher. Specifying a constrained model may provide a more parsimonious
specification.
Another promising solution to the aforementioned problem is to assume that the parameter matrices $\bmA$'s and $\bmB$'s are of low ranks, resulting in a reduced-rank MMAR model. In addition, when dealing with high-dimensional matrix-valued time series data, regularization methods can be applied to promote sparsity. These regularization methods can also be applied to the variance-covariance matrices. We may also assume that variance-covariance matrices admit some low rank structures, which can be represented by the sum of a diagonal matrix and a low rank matrix.
\bigskip
\begin{center}
	{\large\bf SUPPLEMENTAL MATERIALS}
\end{center}
The \emph{Supplemental Materials} contains the proofs of the theorems and  additional results for the simulation studies and the real application. 

\bibliographystyle{apalike}
\bibliography{Bibliography}
\clearpage

\renewcommand{\thesection}{S.\arabic{section}}
\counterwithin{Lemma}{section}
\counterwithin{Theorem}{section}
\counterwithin{Proposition}{section}
\numberwithin{equation}{section}
\counterwithin{figure}{section}
\counterwithin{table}{section}
\setcounter{section}{0}
\setcounter{page}{1}
\begin{center}
	\Large{\textbf{Supplemental Materials of ``Mixture Matrix-valued Autoregressive Model"}}
\end{center}

The \emph{Supplemental Materials} are organized as follows. Section \ref{proofs_s3} gives the proofs of the propositions in Section 3 in the main text, which are related to the stationarity and ergodicity of the MMAR model. Section~\ref{a_positive} proves the positive definiteness of the Fisher information matrix for the unconstrained MMAR model. 
Sections~\ref{a-thm1}, \ref{a_thm2}  and \ref{a_thm3} contains the proofs for Theorems~\ref{mmar_con}, \ref{Thm: 2} and \ref{Thm: ident}.
Section \ref{a_simu} presents some additional simulations, and Section \ref{a_da} shows some additional results for the real data analysis.

\section{Proofs of the Propositions in Section 3 }\label{proofs_s3}
The strict and weak stationary conditions of the SDE model \eqref{sre} are given by the following two theorems, respectively.
\begin{Theorem}[Theorem 4.27 in \citealt{douc2014nonlinear}]\label{sre_str}
	In model \eqref{sre},
	let $\{(\bmD_t,\bm{\eta}_t)\}$ be a sequence of strictly stationary and ergodic sequence. Assume that,
	\begin{equation*}
		\E{\log^{+}\|\bmD_1\|}<\infty~~~~and~~~~
		\E{\log^{+}\|\bm{\eta}_0\|}<\infty.
	\end{equation*}
	Also assume that its top-Lyapunov exponent $\gamma$, defined in \eqref{top_lya}, 
	is strictly negative. Then, 
	\begin{equation}
		\tilde{\mathcal{X}_t}=\sum_{j=0}^\infty\left(\prod_{i=t-j+1}^t\bmD_i\right)\bm{\eta}_{t-j}\label{sre_solu_a}
	\end{equation}
	is the unique strictly stationary solution to Eqn.~\eqref{sre}.
\end{Theorem}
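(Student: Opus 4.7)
The plan is to prove the theorem in four steps: (i) absolute almost-sure convergence of the infinite series in \eqref{sre_solu_a}; (ii) verification that the limit solves the SDE \eqref{sre}; (iii) strict stationarity of the limit; and (iv) uniqueness among all strictly stationary solutions. The two log-moment hypotheses, together with the strict negativity of $\gamma$, are exactly tailored to control the products and noise factors appearing in the infinite sum.

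For convergence I would apply a Cauchy root-test argument pathwise. Writing $\bmK_j=\prod_{i=t-j+1}^{t}\bmD_i$, the $j$-th summand is bounded by $\|\bmK_j\|\,\|\bm{\eta}_{t-j}\|$. Kingman's subadditive ergodic theorem (the Furstenberg--Kesten theorem) gives $\frac{1}{j}\log\|\bmK_j\|\to\gamma$ almost surely; this applies because $\{\bmD_t\}$ is strictly stationary and ergodic with $\E{\log^{+}\|\bmD_1\|}<\infty$. For the noise factor, strict stationarity of $\{\bm{\eta}_s\}$ combined with $\E{\log^{+}\|\bm{\eta}_0\|}<\infty$ and the Borel--Cantelli lemma yield $\frac{1}{j}\log^{+}\|\bm{\eta}_{t-j}\|\to 0$ almost surely. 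Since $\gamma<0$, these two facts together give $\limsup_{j} \|\bmK_j\bm{\eta}_{t-j}\|^{1/j}\leq e^{\gamma}<1$ almost surely, so the series converges absolutely by comparison with a geometric series.

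The solution property then follows from a direct index-shift calculation showing that $\bmD_t\tilde{\mathcal{X}}_{t-1}+\bm{\eta}_t$ reindexes term-by-term to $\tilde{\mathcal{X}}_t$. Strict stationarity is automatic, because $\tilde{\mathcal{X}}_t$ is a fixed measurable functional $\Phi$ of the past $\{(\bmD_s,\bm{\eta}_s):s\leq t\}$ and the joint law of that past sequence is invariant under the time shift, so the law of $\Phi$ applied to it does not depend on $t$.

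The main obstacle will be uniqueness, since no moment condition is imposed on competing stationary solutions. Given another strictly stationary solution $\mathcal{X}_t$ on the same probability space, iterating the recursion $n$ times yields $\mathcal{X}_t-\tilde{\mathcal{X}}_t=\bmK_n(\mathcal{X}_{t-n}-\tilde{\mathcal{X}}_{t-n})$ for every $n\geq 1$. The left-hand side is fixed in $n$, so it suffices to show the right-hand side tends to $0$ in probability. For any $\epsilon>0$ and $M>0$, a union bound gives $P(\|\bmK_n(\mathcal{X}_{t-n}-\tilde{\mathcal{X}}_{t-n})\|>\epsilon)\leq P(\|\mathcal{X}_{t-n}-\tilde{\mathcal{X}}_{t-n}\|>M)+P(\|\bmK_n\|>\epsilon/M)$. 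Strict stationarity of the difference sequence makes the first term a constant in $n$ which can be made arbitrarily small by taking $M$ large (tightness of a single distribution), while the almost-sure convergence $\|\bmK_n\|\to 0$, inherited from $\gamma<0$, drives the second term to zero as $n\to\infty$. This forces $\mathcal{X}_t=\tilde{\mathcal{X}}_t$ almost surely, so the delicate point is really that tightness plus geometric decay of $\|\bmK_n\|$ is just enough to bypass any moment assumption on the alternative solution.
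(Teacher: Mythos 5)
This statement is not proved in the paper at all: it is quoted verbatim as Theorem 4.27 of Douc, Moulines and Stoffer (2014) and used as an imported black box in the proof of Proposition \ref{mmar_str_cond}, so there is no in-paper argument to compare against. Your proof is the standard Brandt/Bougerol--Picard argument that the cited textbook itself uses, and it is essentially correct: Kingman's subadditive ergodic theorem applies to the backward products $\bmK_j$ (with finite positive part of $\log\|\bmD_1\|$ guaranteeing the integrability hypothesis, and stationarity identifying the a.s.\ limit with $\gamma$), the Borel--Cantelli step for $\frac{1}{j}\log^{+}\|\bm{\eta}_{t-j}\|\to 0$ is exactly where $\E{\log^{+}\|\bm{\eta}_0\|}<\infty$ enters, the root test then gives absolute a.s.\ convergence, and the index-shift and functional-of-a-stationary-sequence arguments for the solution property and strict stationarity are routine.

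The only point needing repair is in the uniqueness step: you assert that ``strict stationarity of the difference sequence'' makes $\Pr(\|\mathcal{X}_{t-n}-\tilde{\mathcal{X}}_{t-n}\|>M)$ constant in $n$. The two solutions are each strictly stationary, but nothing guarantees they are \emph{jointly} stationary, so the difference process need not be stationary. This is harmless because you do not need constancy in $n$, only uniform smallness: bound
\begin{equation*}
\Pr\left(\|\mathcal{X}_{t-n}-\tilde{\mathcal{X}}_{t-n}\|>M\right)\leq \Pr\left(\|\mathcal{X}_{0}\|>M/2\right)+\Pr\left(\|\tilde{\mathcal{X}}_{0}\|>M/2\right),
\end{equation*}
where each term is independent of $n$ by the marginal stationarity of each process and tends to $0$ as $M\to\infty$. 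With that substitution your union bound goes through unchanged, and the conclusion $\mathcal{X}_t=\tilde{\mathcal{X}}_t$ a.s.\ follows since the left-hand side of $\mathcal{X}_t-\tilde{\mathcal{X}}_t=\bmK_n(\mathcal{X}_{t-n}-\tilde{\mathcal{X}}_{t-n})$ is free of $n$ while the right-hand side tends to $0$ in probability.
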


\begin{Theorem}[Theorem 4.30 in \citealt{douc2014nonlinear}]\label{sre_wk}
	\label{Thm_weak_D}
	Let $q\geq1$ and $\{(\bmD_t,\bm{\eta}_t)\}$ be a sequence of i.i.d. random elements, such that the $q$th norm Lyapunov coefficient
	is strictly negative, and $\E{\|\bm{\eta}_0\|^q}<\infty$. Then Eqn.~\eqref{sre} has a unique strictly stationary solution $\tilde{\mathcal{X}_t}$ given in \eqref{sre_solu_a}, such that $\E{\|\tilde{\mathcal{X}_t}\|^q}<\infty$. Moreover, the right-hand-side of  \eqref{sre_solu} converges in the $q$th norm.
\end{Theorem}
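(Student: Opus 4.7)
The plan is to construct the candidate stationary solution as the $L^q$ limit of the truncated series $\tilde{\mathcal{X}}_t^{(N)} = \sum_{j=0}^N \left(\prod_{i=t-j+1}^t \bmD_i\right) \bm{\eta}_{t-j}$ and then read off the SDE, strict stationarity, and uniqueness from this construction. The central quantitative input is the strict negativity of $\gamma_q$, which I would convert into a uniform geometric $L^q$-bound on the $j$th summand.

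First I would show that $\{\tilde{\mathcal{X}}_t^{(N)}\}_N$ is Cauchy in $L^q$. Since $\{(\bmD_t,\bm{\eta}_t)\}$ is i.i.d., the matrix product $\prod_{i=t-j+1}^t \bmD_i$ is independent of $\bm{\eta}_{t-j}$ and has the same law as $\prod_{i=1}^j \bmD_i$; combined with submultiplicativity of the matrix norm this yields
\begin{equation*}
\E{\left\|\prod_{i=t-j+1}^t \bmD_i\, \bm{\eta}_{t-j}\right\|^q}^{1/q} \leq \E{\left\|\prod_{i=1}^j \bmD_i\right\|^q}^{1/q} \cdot \E{\|\bm{\eta}_0\|^q}^{1/q}.
\end{equation*}
The infimum representation \eqref{p_lya} combined with $\gamma_q<0$ lets me fix $\varepsilon>0$ with $\gamma_q+\varepsilon<0$ and absorb finitely many small-$j$ terms into a constant $C_\varepsilon$, producing $\mathbb{E}^{1/q}(\|\prod_{i=1}^j \bmD_i\|^q) \leq C_\varepsilon e^{j(\gamma_q+\varepsilon)}$ for every $j$. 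Minkowski's inequality then bounds the $L^q$-norm of any tail of the partial-sum differences by a geometric series, proving the Cauchy property. Completeness of $L^q$ delivers the limit $\tilde{\mathcal{X}}_t$ with $\E{\|\tilde{\mathcal{X}}_t\|^q}<\infty$, and the same bound shows that the series in \eqref{sre_solu_a} converges in $q$th norm.

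With the limit in hand I would verify $\tilde{\mathcal{X}}_t = \bm{\eta}_t + \bmD_t \tilde{\mathcal{X}}_{t-1}$ by isolating the $j=0$ summand and factoring $\bmD_t$ out of the remainder after re-indexing, a manipulation made rigorous by the $L^q$ convergence. Strict stationarity is automatic because $\tilde{\mathcal{X}}_t$ is the same measurable functional of the strictly stationary sequence $\{(\bmD_s,\bm{\eta}_s): s\leq t\}$. For uniqueness I would invoke Theorem~\ref{sre_str}: since $\gamma \leq \gamma_q$ by Jensen's inequality (concavity of $\log$), the top-Lyapunov exponent is also strictly negative, so that result supplies a unique strictly stationary solution, which must agree almost surely with $\tilde{\mathcal{X}}_t$.

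The main obstacle is the geometric $L^q$-bound on $\mathbb{E}^{1/q}(\|\prod_{i=1}^j \bmD_i\|^q)$: turning the abstract assumption $\gamma_q<0$ into a pointwise inequality valid for every $j$, rather than only in an asymptotic sense. This relies on the Fekete-type subadditivity of $a_j = \log \mathbb{E}^{1/q}(\|\prod_{i=1}^j \bmD_i\|^q)$, which is forced by the i.i.d.\ structure of $\{\bmD_t\}$ together with submultiplicativity of the norm, and is precisely what validates the infimum representation in \eqref{p_lya}. Once the $L^q$-Cauchy property is in place, the remainder of the argument is essentially bookkeeping.
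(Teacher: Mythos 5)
This statement is not proved in the paper at all: it is imported verbatim as Theorem 4.30 of \citealt{douc2014nonlinear}, and the only related argument the paper supplies is the Fekete subadditivity computation validating the infimum representation in \eqref{p_lya}. Your blind proof is therefore not comparable to a paper-internal argument, but as a self-contained reconstruction it is correct and follows the standard route: the independence of $\prod_{i=t-j+1}^{t}\bmD_i$ from $\bm{\eta}_{t-j}$ (the index bookkeeping here is the one place where care is genuinely needed, and you handle it correctly), submultiplicativity plus the limit-equals-infimum identity to get the uniform geometric bound $\mathbb{E}^{1/q}\bigl(\bigl\|\prod_{i=1}^{j}\bmD_i\bigr\|^q\bigr)\le C_\varepsilon e^{j(\gamma_q+\varepsilon)}$, Minkowski (valid since $q\ge1$) for the $L^q$-Cauchy property, and then the telescoping verification of the recursion. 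Your uniqueness step via $\gamma\le\gamma_q$ (Jensen) and Theorem \ref{sre_str} is a clean shortcut that the textbook proof also effectively uses. Two small points you should make explicit: (i) absorbing the finitely many small-$j$ terms into $C_\varepsilon$ requires $\mathbb{E}\|\bmD_1\|^q<\infty$, which is implicit in $\gamma_q$ being well defined and automatic in the paper's application where $\bmD_1$ has finite support; (ii) to invoke Theorem \ref{sre_str} you also need $\E{\log^+\|\bmD_1\|}<\infty$ and $\E{\log^+\|\bm{\eta}_0\|}<\infty$, both of which follow from the assumed $q$th moments via $\log^+ x\le x^q/q$. With those remarks added, the argument is complete.
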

The Fekete's sub-additive lemma can be used to derive an equivalent expression of $\gamma_q$ given in Eqn.~\eqref{p_lya}.
\begin{Lemma}[Fekete's Subadditive Lemma]
	Let $\{a_t,t\geq1\}$ be a sequence, such that $\forall t_1,t_2\in\mathbb{N}^{*}$, $a_{t_1+t_2}\leq a_{t_1}+a_{t_2}$. Then,
	\begin{equation*}
		\lim_{t\to\infty}\frac{a_t}{t}=\inf_{t\in\mathbb{N}^{*}}\frac{a_t}{t}.
	\end{equation*}
\end{Lemma}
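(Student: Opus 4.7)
The plan is to prove both inequalities $\limsup_{t \to \infty} a_t/t \le \inf_{t \ge 1} a_t/t$ and $\liminf_{t \to \infty} a_t/t \ge \inf_{t \ge 1} a_t/t$, since together they force the limit to exist and to equal the infimum. Let $L = \inf_{t \in \mathbb{N}^{*}} a_t/t$, which a priori lies in $[-\infty, \infty)$. The second inequality is immediate because every term $a_t/t$ is at least $L$, so the whole sequence is bounded below by $L$ and hence so is its $\liminf$.

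The substantive part is the reverse inequality. I would start by fixing an arbitrary number $M > L$ (when $L$ is finite I would take $M = L + \varepsilon$ for arbitrary $\varepsilon > 0$; when $L = -\infty$ I would let $M$ be any real number we want the sequence eventually to fall below). By definition of the infimum, I can choose some $m \in \mathbb{N}^{*}$ with $a_m/m < M$. Then for any $t \ge m$, I apply the Euclidean division $t = qm + r$ with $q \ge 1$ and $0 \le r < m$. Repeated application of the subadditivity hypothesis gives $a_{qm} \le q\,a_m$, and when $r \ge 1$ a further application yields $a_t = a_{qm+r} \le q\, a_m + a_r$; when $r = 0$ we simply have $a_t \le q\, a_m$. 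Dividing by $t$,
\begin{equation*}
  \frac{a_t}{t} \;\le\; \frac{qm}{t} \cdot \frac{a_m}{m} + \frac{a_r}{t}.
\end{equation*}
As $t \to \infty$ with $m$ fixed, $qm/t \to 1$ and $a_r/t \to 0$ because $r$ ranges over the finite set $\{0, 1, \dots, m-1\}$ and so $a_r$ is bounded by $\max_{0 \le r < m} a_r$ (with the convention $a_0 = 0$, or handling $r=0$ separately as above). Taking $\limsup$ therefore gives $\limsup_{t \to \infty} a_t/t \le a_m/m < M$. Since $M$ was an arbitrary number strictly greater than $L$, I conclude $\limsup_{t \to \infty} a_t/t \le L$, which combined with the $\liminf$ bound completes the proof.

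The main delicate point I anticipate is the bookkeeping at the remainder: the hypothesis as stated only applies to $t_1, t_2 \in \mathbb{N}^{*}$, so I need to handle the case $r = 0$ without invoking subadditivity involving the index $0$. This is a minor nuisance rather than a true obstacle and is easily dispatched by splitting cases, as indicated above. The other subtlety is uniform treatment of the case $L = -\infty$, but the same chain of inequalities carries through verbatim once we replace ``$L + \varepsilon$'' by an arbitrary real $M$, so no genuinely new idea is needed there either.
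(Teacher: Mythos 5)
Your proof is correct and complete: the lower bound $\liminf_{t\to\infty} a_t/t \ge \inf_t a_t/t$ is trivial, and your Euclidean-division argument for the upper bound, including the careful handling of the $r=0$ remainder (where the stated hypothesis on $\mathbb{N}^{*}$ cannot be invoked) and of the case $\inf_t a_t/t = -\infty$, is the standard textbook proof of Fekete's lemma. The paper itself states this lemma without proof, treating it as a classical result used only to justify the second equality in the definition of the $q$th norm Lyapunov coefficient, so there is no in-paper argument to compare against; your write-up simply supplies the omitted standard proof.
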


\begin{proof}[Proof of Eqn.~\eqref{p_lya}]
	By definition, matrix norms enjoys the property of sub-multiplicative \citep[pp.~341]{horn2012matrix}.
	That is to say, for any matrices $\bmM_1$ and $\bmM_2$ such that $\bmM_1\bmM_2$ is well-defined,
	\begin{equation*}
		\|\bmM_1\bmM_2\|\leq\|\bmM_1\|\|\bmM_2\|.
	\end{equation*}
	Assume $\{\bmD_t\}$ is a sequence of i.i.d. random matrices.
	Let $a_t=\log\left\{\E{\|\bmD_t\bmD_{t-1}\dots\bmD_1\|^q}\right\}^{1/q}$. 
	For any $t_1,t_2\in\mathbb{N}^{*}$,
	\begin{equation*}
		\|\bmD_{t_1+t_2}\bmD_{t_1+t_2-1}\dots\bmD_1\|^q\leq\|\bmD_{t_1+t_2}\dots\bmD_{t_1+1}\|^q\cdot\|\bmD_{t_1}\dots\bmD_1\|^q.
	\end{equation*}
	By independence,
	\begin{equation*}
		\E{\|\bmD_{t_1+t_2}\bmD_{t_1+t_2-1}\dots\bmD_1\|^q}
		\leq
		\E{\|\bmD_{t_1+t_2}\dots\bmD_{t_1+1}\|^q}\cdot\E{\|\bmD_{t_1}\dots\bmD_1\|^q},
	\end{equation*}
	and hence,
	\begin{equation*}
		a_{t_1+t_2}\leq a_{t_1}+a_{t_2}.
	\end{equation*}
	Therefore, 
	\begin{equation*}
		\gamma_q = \lim_{t\to\infty}\frac{1}{t}\log\mathbb{E}^{1/q}\left(\|\bmD_t\dots\bmD_1\|^q\right)
		=\inf_{t\in\mathbb{N}^{*}}\frac{1}{t}\log\mathbb{E}^{1/q}\left(\|\bmD_t\dots\bmD_1\|^q\right).
	\end{equation*}
\end{proof}
The proofs of the propositions in Section 3 are given below.
\begin{proof}[Proof of Proposition \ref{mmar_str_cond}]
	Under condition \eqref{dt_mmar}, it follows that,
	\begin{equation*}
		\E{\log^{+}\|\bmD_1\|}=
		\sum_{k=1}^K\alpha_k\log^{+}(\|\bm{\Phi}_k\|)<\infty,
	\end{equation*}
	and $\E{\log^{+}\|\bm{\eta}_0\|}<\infty$ because of normality. By Theorem \ref{sre_str}, the results holds.
\end{proof}
\begin{proof}[Proof of Proposition \ref{mmar_ergod}]
	First note that $\{\mathcal{X}_t\}$ is a time homogeneous Markov chain, as it 
	is strictly stationary and its unique stationary solution is given by \eqref{sre_solu}. Define the transition kernel by,
	\begin{equation*}
		\mathrm{P}(\mathcal{X}_t,\cdot)=\Pr(\mathcal{X}_{t+1}\in\cdot|\mathcal{X}_t),
	\end{equation*}
	The $1$-step transition density is,
	\begin{equation*}
		f(\mathcal{X}_{t+1}|\mathcal{X}_{t};\bmtt)
		=f(\bmy_{t+1}|\mathcal{X}_t;\bmtt)
		=\sum_{k=1}^Kf_{t+1}(\bmy_{t+1}|\mathcal{X}_{t};\bmtt_k),
	\end{equation*}
	indicating that $f(\mathcal{X}_{t+1}|\mathcal{X}_{t};\bmtt)>0$ for all $\mathcal{X}_{t+1}$ and $\mathcal{X}_{t}$. Therefore, the Markov chain $\{\mathcal{X}_t\}$ is irreducible and aperiodic. It can be seen that both the $1$-step transition probability and the stationary distribution are equivalent to a Lebesgue measure, hence 	$\mathrm{P}(\mathcal{X}_t,\cdot)$ is absolute continuous with respect to the stationary distribution. Also, the initial distribution is absolute continuous with respect to the stationary distribution. By Theorem 1.1 in \cite{chan1993asymptotic}, the Markov chain $\{\mathcal{X}_t\}$ is ergodic.
\end{proof}
\begin{proof}[Proof of Proposition \ref{mmar_gergod}]
	The proof makes use of Markov chain techniques; see \cite{feigin1985random} for a review with applications to random coefficient VAR models.  Let $q$ be the dimension of $\bm{\eta}_t$.  It follows from the normality assumption of $\bm{E}_t$ that for any subset, say, $A\subset R^q$ with positive Lebesgue measure, $P(\bm{\eta}_t\in A|\bm{\eta}_0=\bm{\nu})>0$ for $t\ge p_{\max}$ regardless of the fixed initial state $\bm{\nu}$. Thus, the MMAR process  is irreducible w.r.t. the Lebesgue measure on the state space of $\bm{\eta}_t$. Moreover, it is Feller continuous so that any compact subset of $R^q$ with positive Lebesgue measure is a small set. 
	
	Note that $\E{\bm{D}_t^\top \otimes \bm{L}_{0,t}^\top} $, $\E{\bm{L}_{0,t} \bm{L}_{0,t}^\top}$, $\E{\mathcal{E}_t^\top \otimes \mathcal{E}_t^\top}$, and $ \E{\bm{L}_t^\top \otimes \bm{L}_t^\top}$ are finite matrices that are independent of $t$.  Let $\bm{W}$ be a fixed $q\times q$ positive definite matrix. Define the $q\times q$ matrix $\bm{V}$ such that $\vecc{\bm{V}}=(\bmI_q -\mathscr{J}^\top)^{-1}\vecc{\bm{W}}$. By Lemma 1 in \cite{feigin1985random}, $\bm{V}$ is a positive-definite matrix. Define $g(\bm{\eta})=1+\bm{\eta}^\top \bm{V} \bm{\eta}$, for $\bm{\eta}\in R^q$. After some algebra (c.f. the proof of Lemma 1 in \cite{feigin1985random}), we have 
	\begin{eqnarray*}
		&&\E{g(\bm{\eta}_t)|\bm{\eta}_{t-1}=\bm{\eta}}\\
		&=&
		g(\bm{\eta})-\bm{\eta}^\top \bm{W} \bm{\eta}+ [2\bm{\eta}^\top \E{\bm{D}_t^\top \otimes \bm{L}_{0,t}^\top} \vecc{\bm{V}}+\\
		&& \quad  \mbox{tr}\{\bm{V} \E{\bm{L}_{0,t} \bm{L}_{0,t}^\top}\}+ 
		\E{\mathcal{E}_t^\top \otimes \mathcal{E}_t^\top} \E{\bm{L}_t^\top \otimes \bm{L}_t^\top} \vecc{\bm{V}}]
	\end{eqnarray*}
	which is bounded above by $g(\bm{\eta})\times (1-c_1)$ for some constant $0<c_1<1$ whenever $|\bm{\eta}|>c_2$ for some $c_2>0$. The rest of the proof is similar to that of Theorem 3 of \cite{feigin1985random}, and hence omitted. 
\end{proof}
\begin{proof}[Proof of Proposition \ref{mmar_1st}]
	Since the MMAR model is a special case of the mixture VAR model with parameter restrictions, let $\bmB_{k}\otimes\bmA_{k}$ play the role of $\Theta_{k1}$ in Theorem 1 of \cite{fong2007mixture} and this proposition is proved.
\end{proof}	
\begin{proof}[Proof of Proposition \ref{mmar_2nd}]
	Similar to the proof of Proposition \ref{mmar_1st}, let $\bmB_{k}\otimes\bmA_{k}$ play the role of $\Theta_{k1}$ in Theorem~3 of \cite{fong2007mixture} and this Proposition is proved.
\end{proof}	

\begin{proof}[Proof of Proposition \ref{mmar_wk_cond}]
	Under condition \eqref{dt_mmar}, it follows that,
	\begin{equation*}
		\E{\log^{+}\|\bmD_1\|}=
		\sum_{k=1}^K\alpha_k\log^{+}(\|\bm{\Phi}_k\|)<\infty,
	\end{equation*}
	and $\E{\log^{+}\|\bm{\eta}_0\|}<\infty$ because of normality. By Theorem \ref{sre_wk}, the results holds.
\end{proof}

%\subsection{Gradient Conditions}\label{em_grad_cond}
%\begin{align*}
%	&\sum_{t=p_{\max}+1}^T\tau_{t,k}
%	(\calA_k\calZ_{t-1,k}\calB_k^\tp\bmV_{k}^{-1}\calB_k\calZ_{t-1,k}^\tp-(\bmY_t-\bmC_k)\bmV_{k}^{-1}\calB_k\calZ_{t-1,k}^\tp)=\bm{0},\\
%	&\sum_{t=p_{\max}+1}^T\tau_{t,k}
%	(\calB_k\calZ_{t-1,k}^\tp\calA_k^\tp\bmU_{k}^{-1}\calA_k\calZ_{t-1,k}-(\bmY_t-\bmC_k)^\tp\bmU_{k}^{-1}\calA_k\calZ_{t-1,k})=\bm{0},\\
%	&\sum_{t=p_{\max}+1}^T\tau_{t,k}(
%	-2\bmU_k^{-1}(\bmY_t-\calA_{k}\calZ_{t-1,k}\calB_{k}^\tp)\bmV_k^{-1}+
%	2\bmU_k^{-1}\bm{C}_k\bmV_k^{-1}
%	)=\bm{0},\\
%	&\sum_{t=p_{\max}+1}^T\tau_{t,k}
%	(n\bmU_{k}-(\bmY_t-\bmC_k-\calA_{k}\calZ_{t-1,k}\calB_{k}^\tp)\bmV_{k}^{-1}
%	(\bmY_t-\bmC_k-\calA_{k}\calZ_{t-1,k}\calB_{k}^\tp)^\tp)=\bm{0},\\
%	&\sum_{t=p_{\max}+1}^T\tau_{t,k}
%	(m\bmV_{k}-(\bmY_t-\bmC_k-\calA_{k}\calZ_{t-1,k}\calB_{k}^\tp)^\tp\bmU_{k}^{-1}
%	(\bmY_t-\bmC_k-\calA_{k}\calZ_{t-1,k}\calB_{k}^\tp))=\bm{0},\\
%\end{align*}

\section{Proof of the positive-definiteness of the Fisher information matrix}\label{a_positive}

% \subsubsection*{Proof of the positive-definiteness of the Fisher information matrix}

Partition the (conditional) score vector as follows: $$\frac{\partial l_t}{\partial \vecc{\bm{\Delta}}}= \left(\frac{\partial l_t}{\partial \vecc{\bm{\Delta}_1}^\top}, \cdots, \frac{\partial l_t}{\partial \vecc{\bm{\Delta}}_K^\top}, \frac{\partial l_t}{\partial \alpha_1}, \cdots,\frac{\partial l_t}{\partial \alpha_{K-1} }\right)^\top$$ and let $\bm{\kappa}=(\kappa_1^\top,\cdots,\kappa_K^\top, c_1,\cdots, c_{K-1})^\top $ be a similarly partitioned constant vector.   Then
$\bm{\kappa}^\top \mathcal{I}(\bm{\Delta}_0)\bm{\kappa} =\mathbb{E}\left\{\left( \bm{\kappa}^\top \frac{\partial l_t(\bm{\Delta}_0)}{\partial \vecc{\bm{\Delta}}} \right)^2 \right\}$ which is always non-negative. To verify the positive definiteness of $\mathcal{I}(\bm{\Delta}_0)$, it suffices to show that $\bm{\kappa}^\top \mathcal{I}(\bm{\Delta}_0)\bm{\kappa} =0$ implies that $\bm{\kappa}=0$. Assume then that $\bm{\kappa}^\top \mathcal{I}(\bm{\Delta}_0)\bm{\kappa} =0$.
Since the MMAR model is a VAR model whose  conditional pdf of $\bmy_t$ given $\mathscr{F}_{t-1}$ is positive everywhere, the  stationary joint density of $\bmy_{t-j}, j=0,\cdots, p$ is positive everywhere.  Therefore,  $\bm{\kappa}^\top \frac{\partial l_t(\bm{\Delta}_0)}{\partial \vecc{\bm{\Delta}}}=0$, almost everywhere, and hence $\bm{\kappa}^\top \frac{\partial l_t(\bm{\Delta}_0)}{\partial \vecc{\bm{\Delta}}}\equiv 0$ because  $\bm{\kappa}^\top \frac{\partial l_t(\bm{\Delta}_0)}{\partial \vecc{\bm{\Delta}}}$ is a continuous function as a function of $\bmy_{t-j}, j=0,\cdots, p$. Because $\alpha$'s are positive and $f_t(\bmy_t|\mathscr{F}_{-1};\Delta)>0$ everywhere,
$\bm{\kappa}^\top \frac{\partial l_t(\bm{\Delta}_0)}{\partial \vecc{\bm{\Delta}}}\equiv 0$ is equivalent to 
\begin{eqnarray}
	0&\equiv& 
	\sum_{k=1}^K Q_k\times f_t(\bmy_t|\mathscr{F}_{t-1}; \bm{\Delta}_k),
	\label{eq:indep-poly-normal}
\end{eqnarray}
where $c_K=-\sum_{j=1}^{K-1} c_j$, 
\begin{eqnarray}
	&&Q_k \nonumber\\
	&=&
	\alpha_k\left((\bmy_{t-1}^\top, \cdots, \bmy_{t-p}^\top,1) \otimes \bm{\epsilon}_{t,k}^\top\bm{\Omega}_k^{-1} , -1/2 \times \vech{\bm{\Omega}_k^{-1} -\bm{\Omega}_k^{-1} \bm{\epsilon}_{t,k}\bm{\epsilon}_{t,k}^\top\bm{\Omega}_k^{-1} }^\top \bm{\mathscr{D}}_{mn}^\top \bm{\mathscr{D}}_{mn}\right )\bm{\kappa}_k \nonumber \\
	&& \quad 
	+c_k
	\label{eq:defn of Q_k}
\end{eqnarray}
where $\bm{\epsilon}_{t,k}=\bmy_t-\bm{\Psi}_{k,0}-\sum_{j=1}^p \bm{\Psi}_{k,j}\bmy_{t-j}$. Hence, $Q_k$'s are polynomials of $\bmy_{t-j}, j=0,\cdots,p$ of degree at most two, and 
all parameters are evaluated at their true values. Note  the $f_t(\bmy_t|\mathscr{F}_{t-1}; \bm{\Delta}_k)$'s are conditional normal densities with distinct parameters. 

Next, we shall show that  \eqref{eq:indep-poly-normal} implies that  $Q_k\equiv 0$ for all $k$, which implies that the $Q_k$'s must be the zero polynomial from which it can be readily deduced that $\bm{\kappa}_k=0$ and $c_k=0$, for all $k$. Thus, the Fisher information matrix $\mathcal{I}(\bm{\Delta}_0)$ is positive definite. 
Finally, we prove that \eqref{eq:indep-poly-normal} entails that $Q_k\equiv 0$, for all $k$, by mathematical induction on $K$. Clearly, the claim is valid for $K=1$. Suppose that the claim is true for all $K\le K^*$. Consider the case that $K=K^*+1$. Rewrite \eqref{eq:indep-poly-normal} as follows
\begin{eqnarray}
	-Q_{K^*+1}&\equiv& 
	\sum_{k=1}^{K^*} Q_k\times f_t(\bmy_t|\mathscr{F}_{t-1}; \bm{\Delta}_k)/f_t(\bmy_t|\mathscr{F}_{t-1}; \bm{\Delta}_{K^*+1}).
	\label{eq:indep-poly-normal-rewrite}
\end{eqnarray}
Because $Q_{K^*+1}$ is a polynomial in $\bmy_{t-j}, j=0,\cdots, p$ of degree at most equal to 2, we can apply finitely many partial differentiations w.r.t.  $\bmy_{t-j}, j=0,\cdots, p$  to annihilate it. Because the $\bm{\Delta}_k$'s are distinct, the ratios \newline $f_t(\bmy_t|\mathscr{F}_{t-1}; \bm{\Delta}_k)/f_t(\bmy_t|\mathscr{F}_{t-1}; \bm{\Delta}_{K^*+1})$ equal to some multiples of an exponential function whose exponent is a non-zero polynomial (in  $\bmy_{t-j}, j=0,\cdots, p$) of degree at most equal to 2. Applying the said partial differentiation operators to annihilate $Q_{K^*+1}$ on both sides of \eqref{eq:indep-poly-normal-rewrite} then yields 
\begin{eqnarray}
	0&\equiv& 
	\sum_{k=1}^{K^*} (Q_k\times H_k+G_k) f_t(\bmy_t|\mathscr{F}_{t-1}; \bm{\Delta}_k)/f_t(\bmy_t|\mathscr{F}_{t-1}; \bm{\Delta}_{K^*+1}).
	\label{eq:indep-poly-normal-rewrite-2}
\end{eqnarray}
where $H_k$ is a non-zero polynomial and $G_k$ is a polynomial of degree strictly less than that of $Q_k\times H_k$. Consequently, we have $0\equiv \sum_{k=1}^{K^*} (Q_k\times H_k+G_k) \times f_t(\bmy_t|\mathscr{F}_{t-1}; \bm{\Delta}_k)$. Thus $Q_k\times H_k+G_k\equiv 0, \forall 1\le k\le K^*$, resulting in $Q_k\equiv 0, \forall k\le K^*$. But then $Q_{K^*+1}\equiv 0$. 

% End of proof of positive-definiteness

\section{Proof of Theorem 1}\label{a-thm1}
\begin{proof}
	It can be readily seen from the proof of \citep[Theorem 5]{redner1981note} that the strong consistency of the MLE holds for the conditional MLE with stationary ergodic vector data. Hence, we only need to verify the validity of the conditions required by Theorem 5 there. Assumption 4 there requires that for all $\bm{\Psi_0}^*=\bm{\Psi_0}(\bm{\theta}^*), \bm{\Psi}^*=\bm{\Psi}(\bm{\theta}^*), \bm{\Omega}^*=\bm{\Omega}(\bm{\theta}^*), \bm{\Psi_0}=\bm{\Psi_0}(\bm{\theta}), \bm{\Psi}=\bm{\Psi}(\bm{\theta}), \bm{\Omega}=\bm{\Omega}(\bm{\theta})$ where $\bm{\Omega}^*, \bm{\Omega}$ are positive definite matrices,
	$$
	\mathbb{E}_{\bm{\Psi_0}^*, \bm{\Psi}^*, \bm{\Omega}^*} \{|\log f_\mathcal{N}
	(\vecc{Y_t}| \bm{\Psi_0}+\bm{\Psi}\mathcal{X}_t,\bm{\Omega}\})|\}<\infty,$$
	where the subscripts of the expectation operator signify the true parameters. The finite second moment assumption implies that 
	$\mathbb{E}_{\bm{\Psi_0}^*, \bm{\Psi}^*, \bm{\Omega}^*} (\mathcal{X}_t \mathcal{X}_t^\tp)$ is a finite matrix. 
	Routine algebra shows that 
	\begin{eqnarray*}
		&&\mathbb{E}_{\bm{\Psi_0}^*, \bm{\Psi}^*, \bm{\Omega}^*} \{|\log f_\mathcal{N}
		(\vecc{\bm{Y}_t}| \bm{\Psi_0}+\bm{\Psi}\mathcal{X}_t,\bm{\Omega}\})|\} \\
		&\le & 1/2\times [K+ |\log |\bm{\Omega}^{-1}||+\mbox{trace}(\bm{\Omega}^{-1} \bm{\Omega}^*)+ \\
		&& \quad \mathbb{E}_{\bm{\Psi_0}^*, \bm{\Psi}^*, \bm{\Omega}^*} \{\bm{\Psi}_0^*-\bm{\Psi}_0+(\bm{\Psi}^*-\bm{\Psi})\mathcal{X}_t)^\tp \bm{\Omega}^{-1} (\bm{\Psi}_0^*-\bm{\Psi}_0+(\bm{\Psi}^*-\bm{\Psi})\mathcal{X}_t\} ]\\
		&<&\infty,
	\end{eqnarray*}
	where $K$ is a finite constant. Assumption 2a there can be similarly verified and assumptions 1 and 5 obviously hold. This completes the proof.
\end{proof}

\section{Proof of Theorem~\ref{Thm: 2}}  \label{a_thm2}
\begin{proof}
	Let $\bm{Z}\sim \mathcal{N}(\bm{0}, \mathcal{I}(\bm{\Delta}^0))$. 
	Theorem 4.4 in \cite{geyer1994asymptotics} states that under some conditions (Assumptions A-D listed there as well as the Chernoff regularity of the constrained set at the true parameter value), the followings hold with suitable modifications upon noting that the objective function in \cite{geyer1994asymptotics} is the negative conditional log-likelihood:
	\newline (1) The constrained MLE $\hat{\bm{\Delta}}$ is $\sqrt{T}$-consistent. 
	\newline (2) $\sqrt{T}( \hat{\bm{\Delta}} -\bm{\Delta}^0)$  converges in distribution to $\argmin_{\bm{\Delta}\in \mathscr{\bm{C}}} q_{\bm{Z}}(\bm{\Delta})  $ where $\mathscr{\bm{C}}$ is the tangent cone of $\bm{\Delta}(\bm{C})$ at $\bm{\Delta}^0$, 
	and $q_{\bm{Z}}(\bm{\Delta})=-\bm{\Delta}^\top \bm{Z} +\frac{1}{2} \bm{\Delta}^\top \mathcal{I}(\bm{\Delta}^0)\bm{\Delta}$. Let $\bm{\Delta}_{\bm{Z}}=\bm{J}(\bm{J}^\top \mathcal{I}(\bm{\Delta}^0) \bm{J})^{-}\bm{J}^\top \bm{Z} $. Note that $\bm{J}$ is a reduced-rank matrix of dimension $\mbox{dim}(\bm{\Delta})\times \mbox{dim}(\bm{\theta})$. Consider a rank factorization of $\bm{J}=\bm{F}\bm{Q}$  into a product of two full-rank matrices such that the ranks of $\bm{J}, \bm{F}, \bm{Q}$ are the same. Furthermore, $\bm{Q}$ can and will be chosen as an orthonormal matrix, i.e., $\bm{Q}\bm{Q}^\top=\bm{I}$. Thus,
	\begin{equation}
		\bm{J}(\bm{J}^\top \mathcal{I}(\bm{\Delta}^0) \bm{J})^{-}\bm{J}^\top= \bm{F}(\bm{F}^\top\mathcal{I}(\bm{\Delta}^0)\bm{F})^{-1} \bm{F}^\top. 
		\label{eq:reflexive}
	\end{equation}
	The proof of this claim is deferred to the end of the proof.   Below, we show that the solution of the constrained minimization problem is  $\bm{\Delta}_{\bm{Z}}$, hence the stated limiting normal distribution for $\sqrt{T}( \hat{\bm{\Delta}} -\bm{\Delta}^0)$ in Theorem~\ref{Thm: 2}.
	\newline 
	(3) $L_T(\hat{\bm{\Delta}})-L_T(\bm{\Delta}^0)$ converges in distribution to \newline $-q_{\bm{Z}}(\bm{\Delta}_{\bm{Z}})=\frac{1}{2}\bm{Z}^\top \bm{J}(\bm{J}^\top \mathcal{I}(\bm{\Delta}^0) \bm{J})^{-}\bm{J}^\top \bm{Z}=\frac{1}{2}\bm{Z}^\top \bm{F}(\bm{F}^\top\mathcal{I}(\bm{\Delta}^0)\bm{F})^{-1} \bm{F}^\top\bm{Z}$. Hence, $2\{L_T(\hat{\bm{\Delta}})-L_T(\bm{\Delta}^0)\}$ is asymptotically $\chi^2$ distributed with the degrees of freedom equal to the rank of $\bm{J}$, under the true model. 
	
	We now prove the claim on the solution of the constrained optimization problem in (2). 
	Because $\bm{C}$ contains an open ball centered at $\bm{\theta}^0$, the tangent cone $\mathscr{\bm{C}}=\{\bm{J}\bm{\theta}: \bm{\theta}\in R^q\}$.  It is clear that the tangent cone is same for all $\bm{\theta}\in [\bm{\theta}_0]$, even though $\bm{J}$ does depend on $\bm{\theta}\in [\bm{\theta}_0]$. Note that $\bm{J}$ is not of full-rank because of the over parameterization. The constrained minimization of  $q_{\bm{Z}}(\cdot)$ can be solved by first solving the unconstrained minimization problem:  
	$\argmin_{\bm{\theta}} q(\bm{\theta})$ where $q(\bm{\theta})=-\bm{\theta}^\top \bm{J}^\top \bm{Z}+\frac{1}{2} \bm{\theta}^\top \bm{J}^\top \mathcal{I}(\bm{\Delta}^0) \bm{J}\bm{\theta}$. The solution satisfies the optimality equation
	$$
	0=\frac{\partial q}{\partial \bm{\theta}}=-\bm{J}^\top \bm{Z}+\bm{J}^\top \mathcal{I}(\bm{\Delta}^0) \bm{J} \bm{\theta}.
	$$
	which admits multiple solutions, namely, $(\bm{J}^\top \mathcal{I}(\bm{\Delta}^0) \bm{J})^{-}\bm{J}^\top \bm{Z}$. Hence, the solution of the original constrained minimization problem is \newline $\bm{\Delta}_{\bm{Z}}=\bm{J}(\bm{J}^\top \mathcal{I}(\bm{\Delta}^0) \bm{J})^{-}\bm{J}^\top \bm{Z} $ which is unique. Hence, $\sqrt{T}( \hat{\bm{\Delta}} -\bm{\Delta}^0)$  converges in distribution to $\bm{\Delta}_{\bm{Z}}$ which is multivariate normal with zero mean and covariance matrix equal to $\bm{J}(\bm{J}^\top \mathcal{I}(\bm{\Delta}^0) \bm{J})^{-}\bm{J}^\top \mathcal{I}(\bm{\Delta}^0)\bm{J}(\bm{J}^\top \mathcal{I}(\bm{\Delta}^0) \bm{J})^{-}\bm{J}^\top$ which can be simplified to $\bm{J}(\bm{J}^\top \mathcal{I}(\bm{\Delta}^0) \bm{J})^{-}\bm{J}^\top$, owing to \eqref{eq:reflexive}.  This completes the proof of the claim. 
	
	It remains to verify Assumptions A-D. Assumption A holds, thanks to \eqref{eq:M-asym1} and upon noting that the objective function used by \cite{geyer1994asymptotics} is the negative log-likelihood. Assumption B concerns the  remainder term in \eqref{eq:M-asym2} and it requires that there exists a neighborhood $\mathcal{N}$ of $\bm{\Delta}^0$ such that the empirical process of $r_t(\bm{\Delta}), \bm{\Delta}\in \mathcal{N}$ is stochastically equicontinuous.  
	
	Recall $r_t(\bm{\Delta})= \left(\frac{\partial l_t(\bm{\Delta}^*)}{\partial \bm{\Delta}^\top}- \frac{\partial l_t(\bm{\Delta}^0)}{\partial \bm{\Delta}^\top}\right)(\bm{\Delta}-\bm{\Delta}^0)/|\bm{\Delta}-\bm{\Delta}^0|$, for $\bm{\Delta} \not = \bm{\Delta}^0$ where $|\bm{\Delta}^*-\bm{\Delta}^0|\le |\bm{\Delta}-\bm{\Delta}^0|$ and otherwise, it equals zero.  Thus, it suffices to show the stochastic equicontinuity of the empirical process of  $H(\nu)=\{\frac{\partial l_t(\bm{\Delta})}{\partial \bm{\Delta}^\top}: |\bm{\Delta}-\bm{\Delta}^0|<\nu\}$ for some $\nu>0$. 
	Note that the conditional mixture normal specification of the MMAR model implies that (i) it follows from \eqref{eq:lt-d0}-\eqref{eq:lt-alpha} that $H$ is a  V-C class of functions and (ii) for any fixed $\nu>0$, the functions in $H(\nu)$ is uniformly bounded above by some integrable function. See \cite{pollard2012convergence} for the definition and properties of the V-C class. Furthermore, the true MMAR model is strictly stationary and geometric ergodic, hence $\beta$-mixing with a geometric decaying mixing rate.  Consequently, it follows from \citep[Theorem 3.4]{yu1994rates} that for any fixed $\nu>0$, the empirical processes of $H(\nu)$ is stochastically equicontinuous. We conclude that Assumption B holds. 
	
	Assumption C trivially holds because of the validity of the Central Limit Theorem for the conditional score vector.  Assumption D holds as $\hat{\Delta}$ is strongly consistent, from Theorem 1. Chernoff regularity of the constrained set at the true value holds because  $\bm{\Delta}(\bm{C})$ is a manifold around $\bm{\Delta}^0$; see (2.1) and (2.2) in \cite{geyer1994asymptotics} for the definition of Chernoff regularity. 
	
	It remains to verify \eqref{eq:reflexive}. We have 
	\begin{equation*}
		\bm{J}(\bm{J}^\top \mathcal{I}(\bm{\Delta}^0) \bm{J})^{-}\bm{J}^\top= \bm{F}\bm{Q}(\bm{Q}^\top \bm{F}^\top\mathcal{I}(\bm{\Delta}^0)\bm{F}\bm{Q})^{-} \bm{Q}^\top\bm{F}^\top. 
	\end{equation*}
	Because $\mathcal{I}(\bm{\Delta}^0)$ is positive definite, the ranks of $\bm{F}^\top\mathcal{I}(\bm{\Delta}^0)\bm{F}$, $\bm{Q}$, $\bm{Q}^\top$ are the same. It follows  from \citep[Lemma 2.2.6 (g)]{rao1971generalized}  that $\bm{Q}(\bm{Q}^\top \bm{F}^\top\mathcal{I}(\bm{\Delta}^0)\bm{F}\bm{Q})^{-} \bm{Q}^\top$ is invariant w.r.t. the version of the generalized inverse. One version of $(\bm{Q}^\top \bm{F}^\top\mathcal{I}(\bm{\Delta}^0)\bm{F}\bm{Q})^{-}$ is $\bm{Q}^\top \{\bm{F}^\top\mathcal{I}(\bm{\Delta}^0)\bm{F}\}^{-1}\bm{Q}$. Thus 
	\begin{eqnarray*}
		&&\bm{J}(\bm{J}^\top \mathcal{I}(\bm{\Delta}^0) \bm{J})^{-}\bm{J}^\top\\
		&=& \bm{F}\bm{Q}\bm{Q}^\top \{\bm{F}^\top\mathcal{I}(\bm{\Delta}^0)\bm{F}\}^{-1}\bm{Q} \bm{Q}^\top\bm{F}^\top\\
		&=& \bm{F} \{\bm{F}^\top\mathcal{I}(\bm{\Delta}^0)\bm{F}\}^{-1}\bm{F}^\top,
	\end{eqnarray*}
	since $\bm{Q}\bm{Q}^\top=\bm{I}$. 
	This completes the proof of the theorem.
\end{proof}

\section{Proof of Theorem~\ref{Thm: ident}}\label{a_thm3}
\begin{proof}
	Recall from the proof of Theorem~\ref{Thm: 2} 
	that 
	$\sqrt{T}( \hat{\bm{\Delta}} -\bm{\Delta}^0)$  converges in distribution to $\argmin_{\bm{\Delta}\in \mathscr{\bm{C}}} q_{\bm{Z}}(\bm{\Delta})  $ where $\mathscr{\bm{C}}$ is the tangent cone of $\bm{\Delta}(\bm{C})$ at $\bm{\Delta}^0$, 
	and $q_{\bm{Z}}(\bm{\Delta})=-\bm{\Delta}^\top \bm{Z} +\frac{1}{2} \bm{\Delta}^\top \mathcal{I}(\bm{\Delta}^0)\bm{\Delta}$. Due to the identifiability constraints $\bm{c}(\bm{\theta})=0$, the tangent cone becomes $\mathscr{C}=\{\bm{J}\bm{\theta}: \bm{W}\bm{\theta}=0, \bm{\theta}\in R^q\}$
	where $\bm{W}=\frac{\partial \bm{C}(\bm{\theta}^0)}{\partial \bm{\theta}^\top}$. 
	Then we need to solve the following constrained optimization problem 
	$\argmin_{\bm{\theta}: \bm{W} \bm{\theta}=0} q(\bm{\theta})$ where $$q(\bm{\theta})=-\bm{\theta}^\top \bm{J}^\top \bm{Z}+\frac{1}{2} \bm{\theta}^\top \bm{J}^\top \mathcal{I}(\bm{\Delta}^0) \bm{J}\bm{\theta}.$$
	We solve the problem via the method of Lagrange multiplier. Define
	$$
	q(\bm{\theta}, \bm{\lambda})=-\bm{\theta}^\top \bm{J}^\top \bm{Z}+\frac{1}{2} \bm{\theta}^\top \bm{J}^\top \mathcal{I}(\bm{\Delta}^0) \bm{J}\bm{\theta} +\bm{\lambda}^\top \bm{W}\bm{\theta}. 
	$$
	Then 
	$$
	\begin{pmatrix}
		\frac{\partial q}{\partial \bm{\theta}} \\
		\frac{\partial q}{\partial \bm{\lambda}} 
	\end{pmatrix}
	% =\begin{pmatrix}-\bm{J}^\top \bm{z} +\bm{J}^\top  \mathcal{I}(\bm{\Delta}^0) \bm{J}\bm{\theta}+ \bm{W}^\top \lambda \\ \bm{W}\bm{\theta}
	% \end{pmatrix}
	= \begin{pmatrix} \bm{J}^\top  \mathcal{I}(\bm{\Delta}^0) \bm{J} & \bm{W}^\top \\
		\bm{W} & \bm0 
	\end{pmatrix} 
	\begin{pmatrix}
		\bm{\theta} \\
		\bm{\lambda}
	\end{pmatrix} -\begin{pmatrix}
		\bm{J}^\top \bm{Z} \\ \bm0
	\end{pmatrix}.
	$$
	Setting the partial derivatives to zero yields
	$$
	\begin{pmatrix} \bm{J}^\top  \mathcal{I}(\bm{\Delta}^0) \bm{J} & \bm{W}^\top \\
		\bm{W} & \bm0 
	\end{pmatrix} 
	\begin{pmatrix}
		\bm{\theta} \\
		\bm{\lambda}
	\end{pmatrix} =\begin{pmatrix}
		\bm{J}^\top \bm{Z} \\\bm0
	\end{pmatrix}.
	$$
	The solution must then satisfy the equation $\bm{W}\bm{\theta}=0$ hence, the preceding optimality equation is equivalent to 
	$$
	\begin{pmatrix} \bm{J}^\top  \mathcal{I}(\bm{\Delta}^0) \bm{J} +\bm{W} \bm{W}^\top& \bm{W}^\top \\
		\bm{W} & \bm0 
	\end{pmatrix} 
	\begin{pmatrix}
		\bm{\theta} \\
		\bm{\lambda}
	\end{pmatrix} =\begin{pmatrix}
		\bm{J}^\top \bm{Z} \\ \bm0
	\end{pmatrix},
	$$
	where the upper left block matrix in the $2\times 2$ block matrix is now invertible. Denote the inverse of the $2\times 2$ block matrix as $$\begin{pmatrix}
		\bm{P} & \bm{Q}^\top \\
		\bm{Q} & \bm{R}
	\end{pmatrix}.$$ Then the $\bm{\theta}$-component of the solution is 
	$\bm{P} \bm{J}^\top \bm{Z}$ which is $\mathcal{N}(\bm0,\bm{P})$, after routine algebra; c.f. \citep[Section 10 of Chapter 3]{silvey2017statistical}. Under the identifiability constraints, it is assumed that $\bm{\Delta}$ and $\bm{\theta}$ are locally diffeomorphic re-parameterization of each other, thus the constrained MLE $\hat{\bm{\theta}}$ is also $\sqrt{T}$-consistent. Thus, $\sqrt{T}(\hat{\bm{\theta}}-\bm{\theta}^0)$ converges weakly to  $\bm{P} \bm{J}^\top \bm{Z}$.
\end{proof}

\section{Additional Simulation Results}\label{a_simu}
The coefficient matrices for all simulation models can be viewed in the R code available to the public. 
Besides Scenarios 1 and 2, we add the following two simulation scenarios.
\begin{itemize}
	\item{Scenario 3:}  An MMAR(2;2,2) with $(m,n)=(2,3)$. 
	\item{Scenario 4:}  An MMAR(3;1,1,1) with $(m,n)=(4,5)$. 
	%Both components are stationary.
\end{itemize}
In Scenario 3, the mixing weights are set to be $(\alpha_1,\alpha_2) = (0.4 ,0.6)$, and the parameter matrices are generated similarly to Scenarios 1 and 2. Both components are weakly stationary as $\rho(\bm{\Phi}_1)=0.8399<1$ and $\rho(\bm{\Phi}_2)=0.6691<1$, and so is the overall model. In Scenario 4, the mixing weights are set to be $(\alpha_1,\alpha_2,\alpha_3) = (0.1 ,0.2,0.7)$ with which we can compare the effect of small mixing weights. The first and the third components are stationary as $\rho(\bmB_{1,1}\otimes\bmA_{1,1})= 0.6682$ and $\rho(\bmB_{3,1}\otimes\bmA_{3,1})=0.6537$. The second component is not stationary as $\rho(\bmB_{2,1}\otimes\bmA_{2,1})=1.0136$. However, the overall model is stationary.
For Scenario 3, we also compare the performance of selecting $K$ when the AR orders are misspecified by setting $p_{\max}=1$. We select the models for $K\in\{1,2,3\}$ in Scenario 3, and $K\in\{1,2,3,4\}$ in Scenario 4. For both scenarios, we select the AR orders up to 3.

\begin{table}[ht!]
	\centering
	\begin{tabular}{lllll}
		\hline
		& AIC &BIC&HQ&GIC \\
		\hline
		$T=200$& 36.60\% &93.60\%& 78.60\%&99.80\%\\
		$T=400$&15.80\%  & 97.80\%&82.20\% &99.80\% \\
		$T=800$&16.20\%&98.40\%&86.40\%&100.00\%\\
		\hline
	\end{tabular}
	\caption{\label{Sc3K}Percentage of correctly selecting $K=2$ in Scenario 3 with the AR orders given.}
\end{table}

\begin{table}[ht!]
	\centering
	\begin{tabular}{lllll}
		\hline
		& AIC &BIC&HQ&GIC \\
		\hline
		$T=200$& 67.20\% &66.80\%& 70.20\%&61.20\%\\
		$T=400$&62.00\%  & 95.60\%&92.60\% &95.60\% \\
		$T=800$&45.40\%&99.20\%&97.60\%&99.20\%\\
		\hline
	\end{tabular}
	\caption{\label{Sc4K}Percentage of correctly selecting $K=3$ in Scenario 4 with the AR orders given.}
\end{table}

\begin{table}[ht!]
	\centering
	\begin{tabular}{lllll}
		\hline
		& AIC &BIC&HQ&GIC \\
		\hline
		$T=200$& 2.20\% &95.20\%& 68.20\%&99.60\%\\
		$T=400$& 0.20\%  & 97.40\%&56.00\% &100.00\% \\
		$T=800$&0.00\%&86.40\%&13.00\%&100.00\%\\
		\hline
	\end{tabular}
	\caption{\label{Sc3K_1}Percentage of correctly selecting $K=2$ in Scenario 3 with the AR orders misspecified ($p_{\max}$ is set to be 1).}
\end{table}
\begin{table}[ht!]
	\centering
	\begin{tabular}{lllll}
		\hline
		& AIC &BIC&HQ&GIC \\
		\hline
		$T=200$& 60.60\% &100.00\%& 99.80\%&100.00\%\\
		$T=400$&74.20\%  & 100.00\%&100.00\% &100.00\% \\
		$T=800$&79.60\%&100.00\%&100.00\%&100.00\%\\
		\hline
	\end{tabular}
	\caption{\label{Sc3P}Percentage of correctly selecting $p_{\max}=2$ in Scenario 3 with the number of components $K$ given.}
\end{table}
\begin{table}[ht!]
	\centering
	\begin{tabular}{lllll}
		\hline
		& AIC &BIC&HQ&GIC \\
		\hline
		$T=200$& 71.80\% &93.40\%& 87.80\%&95.60\%\\
		$T=400$&38.00\%  & 98.20\%&97.60\% &98.40\% \\
		$T=800$&11.20\%&99.60\%&99.40\%&99.60\%\\
		\hline
	\end{tabular}
	\caption{\label{Sc4P}Percentage of correctly selecting $p_{\max}=1$ in Scenario 4 with the number of components $K$ given.}
\end{table}
The GIC performs well in generally.
However, one exception is observed in Scenario 4 with small sample size ($N=200$). In this case, none of these methods achieve a desired level of performance as the percentage of correctly selecting $K$ falls between 60\% and 70\%.
This could be attributed to the influence of the components with small mixing weights. In this case the GIC has the worst performance, indicating that it could be conservative when the true model contains components with small mixing weights.

We also examine settings where the data are generated from MAR models. Specifically, we consider the following two scenarios:
\begin{itemize}
	\item{Scenario 5:}  An MAR(1) with $(m,n)=(2,3)$. 
	\item{Scenario 6:}  An MAR(1) with $(m,n)=(4,5)$. 
\end{itemize}
In Scenario 5, the data are generated from the first component of the model in Scenario 1. Similarly, in Scenario 6, the data are generated from the first component of the model in Scenario 2. We evaluate the performance of model selection criteria in correctly identifying the MAR(1) models over the MMAR models. In each of scenarios 5 and 6, 500 independent time series of length $T+20$ were generated, where $T\in \{200, 400, 800\}$. The first $T$ observations are used for model fitting, and the remaining 20 observations are reserved for out-of-sample prediction.
The model selection results for each information criterion are summarized in the following tables.
\begin{table}[ht!]
	\centering
	\begin{tabular}{lllll}
		\hline
		& AIC &BIC&HQ&GIC \\
		\hline
		$T=200$& 7.80\% &97.20\%& 82.60\%&98.80\%\\
		$T=400$& 9.20\%  & 97.20\%&88.40\% &100.00\% \\
		$T=800$&17.20\%&98.80\%&94.40\%&100.00\%\\
		\hline
	\end{tabular}
	\caption{Percentage of correct selection of MAR(1) model in Scenario 5.}
\end{table}
\begin{table}[ht!]
	\centering
	\begin{tabular}{lllll}
		\hline
		& AIC &BIC&HQ&GIC \\
		\hline
		$T=200$& 7.80\% &99.80\%& 78.80\%&100.00\%\\
		$T=400$& 3.20\%  & 100.00\%&87.80\% &100.00\% \\
		$T=800$&5.00\%&100.00\%&93.80\%&100.00\%\\
		\hline
	\end{tabular}
	\caption{Percentage of correct selection of MAR(1) model in Scenario 6.}
\end{table}
Overall, the BIC and GIC demonstrate strong performance, correctly selecting the true model with a high percentage across the considered scenarios.

Fig.~\ref{out-sample-prediction-boxplot} compares the MAR(1) with the MAR(2;1,1) model in terms of the out-sample mean squared 1-step prediction errors (MPE) for 20 observations generated by an MAR(1) model specified in Scenario 5. 
Fig.~\ref{out-sample-prediction-boxplot-2} shows the comparison when the data were generated according to Scenario 6. These plots show that mis-specifying the number of components to a higher number slightly inflates the prediction error, which is consistent with Theorem~\ref{mmar_con} that the MLE is consistent even if $K$ is higher than the true value. 

\begin{figure}[h]
	\begin{center}
		\scalebox{0.20}{\includegraphics{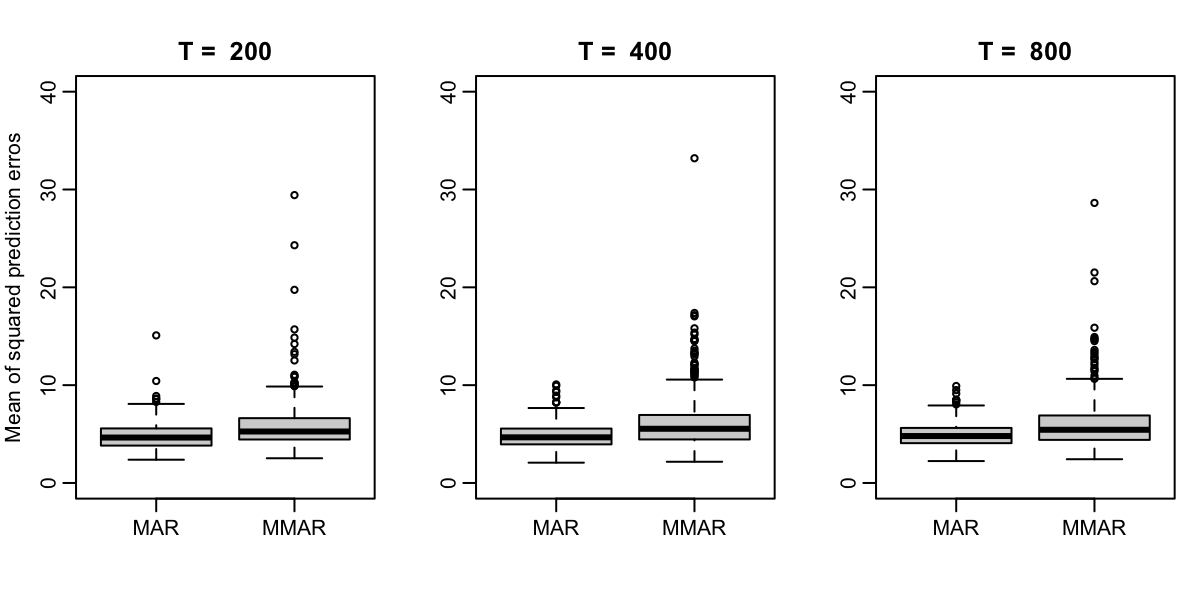}}
	\end{center}
	\vspace{-0.2cm}
	\caption{\label{out-sample-prediction-boxplot} Comparison  of the MAR(1) with the MAR(2;1,1) model in terms of the out-sample mean squared 1-step prediction errors (MPE) for 20 observations generated by an MAR(1) model specified in Scenario 5. In each sub-graph, left box-plot displays the distribution of the MPE from the fitted MAR(1) model and right box that of the fitted MMAR(2;1,1) model.}
\end{figure}
\begin{figure}[h]
	\begin{center}
		\scalebox{0.20}{\includegraphics{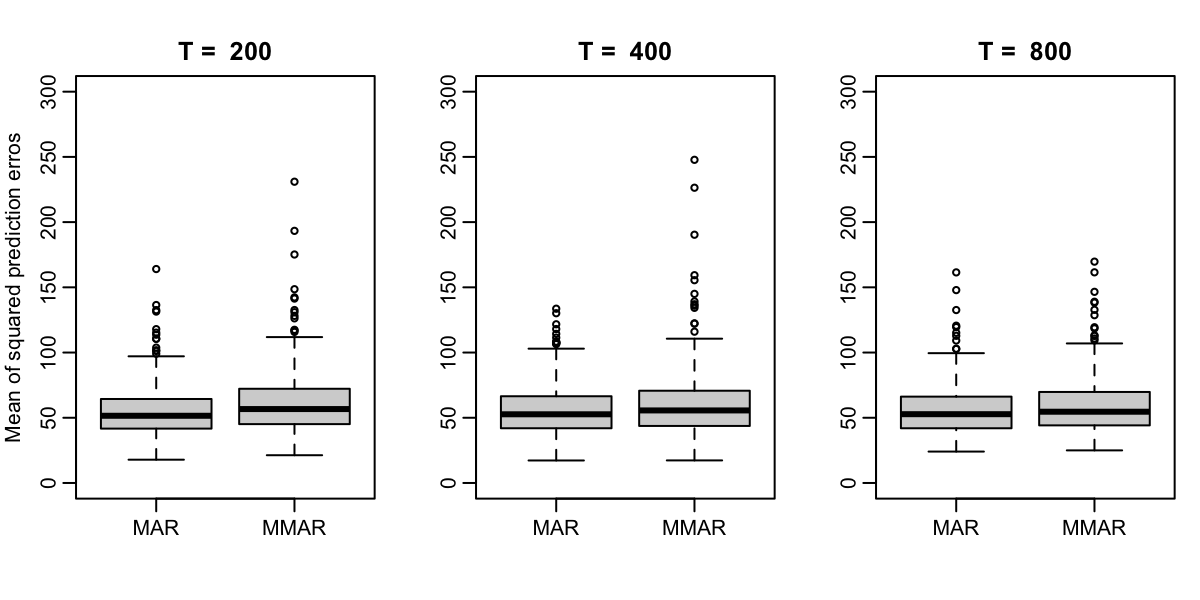}}
	\end{center}
	\vspace{-0.2cm}
	\caption{\label{out-sample-prediction-boxplot-2} Comparison  of the MAR(1) with the MAR(2;1,1) model in terms of the out-sample mean squared 1-step prediction errors (MPE) for 20 observations generated by an MAR(1) model specified in Scenario 6. In each sub-graph, left box-plot  displays the distribution of the MPE from the fitted MAR(1) model and right box that of the fitted MMAR(2;1,1) model.}
\end{figure}

\section{Additional Results of Real Data Analysis}\label{a_da}
\subsection{Additional Analysis Under MMAR(2;1,1) Model}
\begin{figure}[h]
	\begin{center}
		\scalebox{0.13}{\includegraphics{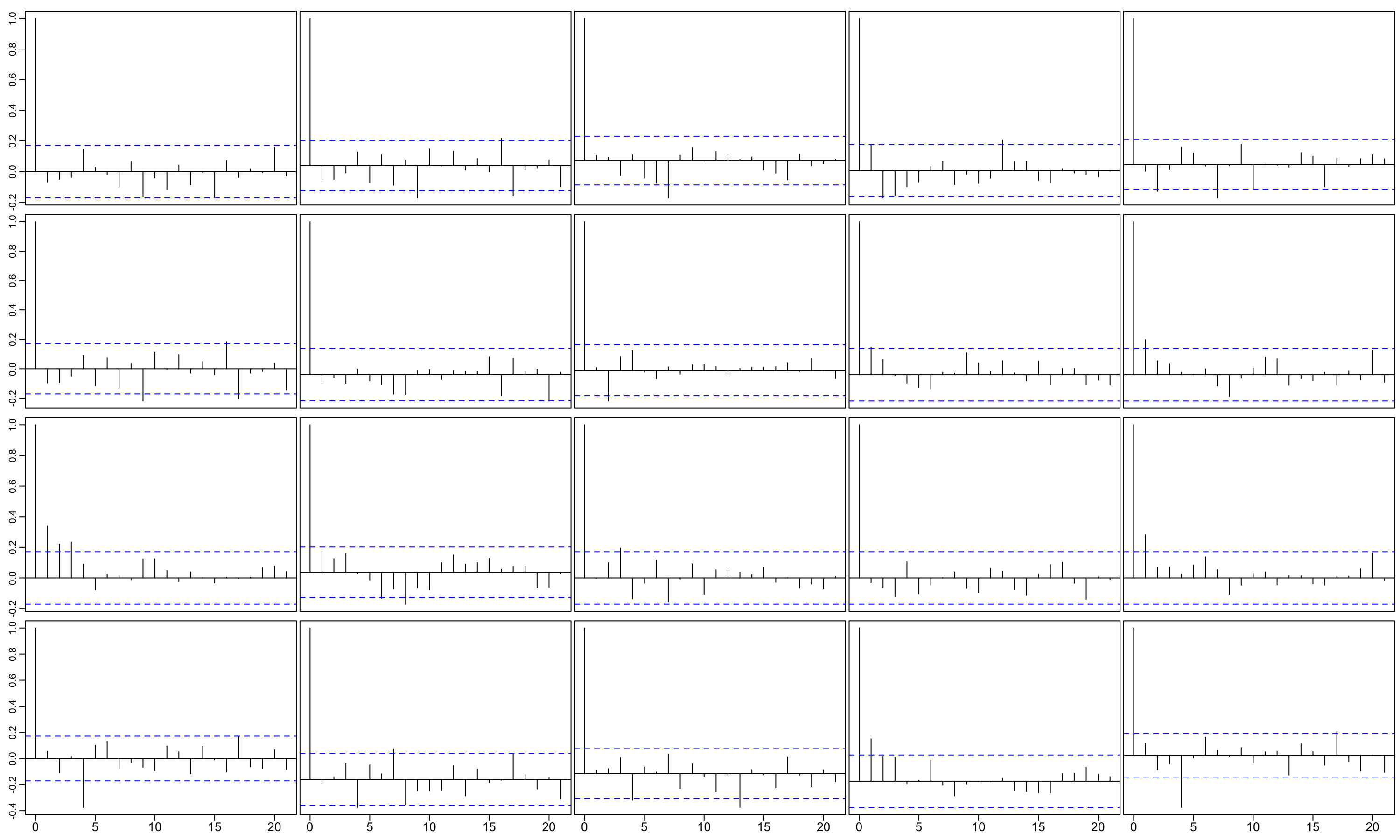}}
	\end{center}
	\vspace{-0.2cm}
	\caption{\label{k2_acf} ACF of standardized residuals of MMAR(2;1,1) model.}
\end{figure}
\begin{table}[ht]
	\centering
	\begin{tabular}{lllll||llll}
		\hline
		& Int & GDP & Prod & CPI & Int & GDP & Prod & CPI\\
		\hline
		Int&\makecell{-1.952\\(0.784)} & \makecell{-0.057\\(1.798)} & \makecell{0.699\\(1.508)} & \makecell{1.419\\(0.835)} & - & 0 & 0 & 0 \\ 
		GDP&\makecell{2.045\\(0.416)} & \makecell{4.156\\(0.968)} & \makecell{9.603\\(0.826)} & \makecell{-0.982\\(0.426)} & + & + & + & - \\ 
		Prod&\makecell{1.041\\(0.44)} & \makecell{3.644\\(1.034)} & \makecell{5.561\\(0.865)} & \makecell{0.296\\(0.466)} & + & + & + & 0 \\ 
		CPI&\makecell{-0.623\\(0.657)} & \makecell{-0.642\\(1.516)} & \makecell{3.761\\(1.268)} & \makecell{0.703\\(0.691)} & 0 & 0 & + & 0 \\ 
		\hline
	\end{tabular}
	\caption{\label{K2_econ_A1}The MLE of $\bmA_{1,1}$ of the MMAR(2;1,1) model, with standard errors given in parentheses.}
\end{table}

\begin{table}[ht]
	\centering
	\begin{tabular}{llllll||lllll}
		\hline
		&USA& DEU& FRA& GBR &CAN &USA& DEU& FRA& GBR &CAN \\ 
		\hline
		USA&\makecell{0.115\\(0.025)} & \makecell{-0.037\\(0.024)} & \makecell{0.198\\(0.022)} & \makecell{-0.14\\(0.017)} & \makecell{-0.131\\(0.021)} & + & 0 & + & - & - \\ 
		DEU& \makecell{0.132\\(0.029)} & \makecell{-0.034\\(0.028)} & \makecell{0.26\\(0.025)} & \makecell{-0.179\\(0.02)} & \makecell{-0.175\\(0.024)} & + & 0 & + & - & - \\ 
		FRA&\makecell{0.162\\(0.029)} & \makecell{-0.048\\(0.027)} & \makecell{0.356\\(0.023)} & \makecell{-0.248\\(0.02)} & \makecell{-0.247\\(0.024)} & + & 0 & + & - & - \\ 
		GBR &\makecell{0.16\\(0.03)} & \makecell{-0.076\\(0.027)} & \makecell{0.439\\(0.025)} & \makecell{-0.301\\(0.022)} & \makecell{-0.239\\(0.025)} & + & - & + & - & - \\ 
		CAN &\makecell{0.083\\(0.032)} & \makecell{-0.034\\(0.03)} & \makecell{0.242\\(0.028)} & \makecell{-0.17\\(0.021)} & \makecell{-0.124\\(0.027)} & + & 0 & + & - & - \\ 
		\hline
	\end{tabular}
	\caption{The MLE of $\bmB_{1,1}$ of the MMAR(2;1,1) model, with standard errors given in parentheses.}
\end{table}

\begin{table}[ht]
	\centering
	\begin{tabular}{llllll||lllll}
		\hline
		&USA& DEU& FRA& GBR &CAN &USA& DEU& FRA& GBR &CAN\\
		\hline
		Int&\makecell{-0.397\\(0.297)} & \makecell{-0.334\\(0.345)} & \makecell{-0.157\\(0.343)} & \makecell{-0.264\\(0.342)} & \makecell{0.033\\(0.374)} & 0 & 0 & 0 & 0 & 0 \\ 
		GDP&\makecell{-0.069\\(0.152)} & \makecell{0.014\\(0.176)} & \makecell{0.087\\(0.175)} & \makecell{0.066\\(0.177)} & \makecell{-0.06\\(0.191)} & 0 & 0 & 0 & 0 & 0 \\ 
		Prod&\makecell{-0.22\\(0.166)} & \makecell{-0.459\\(0.192)} & \makecell{-0.31\\(0.192)} & \makecell{0.108\\(0.191)} & \makecell{-0.25\\(0.207)} & 0 & - & 0 & 0 & 0 \\ 
		CPI&\makecell{-0.022\\(0.248)} & \makecell{0.009\\(0.288)} & \makecell{0.208\\(0.289)} & \makecell{0.553\\(0.285)} & \makecell{0.269\\(0.313)} & 0 & 0 & 0 & 0 & 0 \\ 
		\hline
	\end{tabular}
	\caption{The MLE of $\bmC_{1,1}$ of the MMAR(2;1,1) model, with standard errors given in parentheses.}
\end{table}

\begin{table}[ht]
	\centering
	\begin{tabular}{lllll||llll}
		\hline
		& Int & GDP & Prod & CPI & Int & GDP & Prod & CPI\\
		\hline
		Int& \makecell{1.426\\(0.053)} & \makecell{0.506\\(0.075)} & \makecell{0.075\\(0.083)} & \makecell{0.259\\(0.047)} & + & + & 0 & + \\ 
		GDP& \makecell{0.122\\(0.025)} & \makecell{0.138\\(0.048)} & \makecell{0.177\\(0.054)} & \makecell{-0.004\\(0.03)} & + & + & + & 0 \\ 
		Prod&	\makecell{0.111\\(0.037)} & \makecell{0.143\\(0.072)} & \makecell{0.372\\(0.081)} & \makecell{-0.016\\(0.045)} & + & + & + & 0 \\ 
		CPI	& \makecell{0.04\\(0.063)} & \makecell{0.184\\(0.12)} & \makecell{0.001\\(0.137)} & \makecell{0.532\\(0.077)} & 0 & 0 & 0 & + \\ 
		\hline
	\end{tabular}
	\caption{\label{K2_econ_A2}The MLE of $\bmA_{2,1}$ of the MMAR(2;1,1) model, with standard errors given in parentheses.}
\end{table}

\begin{table}[ht]
	\centering
	\begin{tabular}{llllll||lllll}
		\hline
		&USA& DEU& FRA& GBR &CAN &USA& DEU& FRA& GBR &CAN \\ 
		\hline
		USA& \makecell{0.497\\(0.03)} & \makecell{-0.139\\(0.039)} & \makecell{0.003\\(0.042)} & \makecell{0.145\\(0.032)} & \makecell{-0.039\\(0.035)} & + & - & 0 & + & 0 \\ 
		DEU& \makecell{0.111\\(0.039)} & \makecell{0.361\\(0.036)} & \makecell{0.063\\(0.042)} & \makecell{-0.004\\(0.033)} & \makecell{-0.006\\(0.036)} & + & + & 0 & 0 & 0 \\ 
		FRA&	\makecell{0.075\\(0.029)} & \makecell{0.465\\(0.028)} & \makecell{-0.117\\(0.03)} & \makecell{0.032\\(0.024)} &   \makecell{0.115\\(0.028)} & + & + & - & 0 & + \\ 
		GBR&	\makecell{0.223\\(0.036)} & \makecell{-0.042\\(0.037)} & \makecell{-0.008\\(0.039)} & \makecell{0.342\\(0.031)} & \makecell{0.026\\(0.034)} & + & 0 & 0 & + & 0 \\ 
		CAN&	\makecell{0.109\\(0.043)} & \makecell{-0.059\\(0.044)} & \makecell{-0.001\\(0.047)} & \makecell{0.261\\(0.036)} & \makecell{0.25\\(0.041)} & + & 0 & 0 & + & + \\ 
		\hline
	\end{tabular}
	\caption{The MLE of $\bmB_{2,1}$ of the MMAR(2;1,1) model, with standard errors given in parentheses.}
\end{table}

\begin{table}[ht]
	\centering
	\begin{tabular}{llllll||lllll}
		\hline
		&USA& DEU& FRA& GBR &CAN &USA& DEU& FRA& GBR &CAN\\
		\hline
		&Int \makecell{0.053\\(0.045)} & \makecell{0.051\\(0.045)} & \makecell{0\\(0.034)} & \makecell{0.055\\(0.042)} & \makecell{-0.041\\(0.05)} & 0 & 0 & 0 & 0 & 0 \\ 
		&GDP\makecell{0.039\\(0.029)} & \makecell{0.025\\(0.029)} & \makecell{0.005\\(0.022)} & \makecell{0.027\\(0.027)} & \makecell{0.042\\(0.033)} & 0 & 0 & 0 & 0 & 0 \\ 
		&Prod \makecell{0.052\\(0.044)} & \makecell{0.094\\(0.044)} & \makecell{0.051\\(0.033)} & \makecell{-0.012\\(0.041)} & \makecell{0.066\\(0.05)} & 0 & + & 0 & 0 & 0 \\ 
		&CPI \makecell{0.036\\(0.074)} & \makecell{0.041\\(0.075)} & \makecell{0.007\\(0.056)} & \makecell{-0.041\\(0.07)} & \makecell{-0.014\\(0.083)} & 0 & 0 & 0 & 0 & 0 \\ 
		\hline
	\end{tabular}
	\caption{\label{K2_econ_c2}The MLE of $\bmC_{2,1}$ of the MMAR(2;1,1) model, with standard errors given in parentheses.}
\end{table}

\begin{figure}[h]
	\begin{center}
		\scalebox{0.1}{\includegraphics{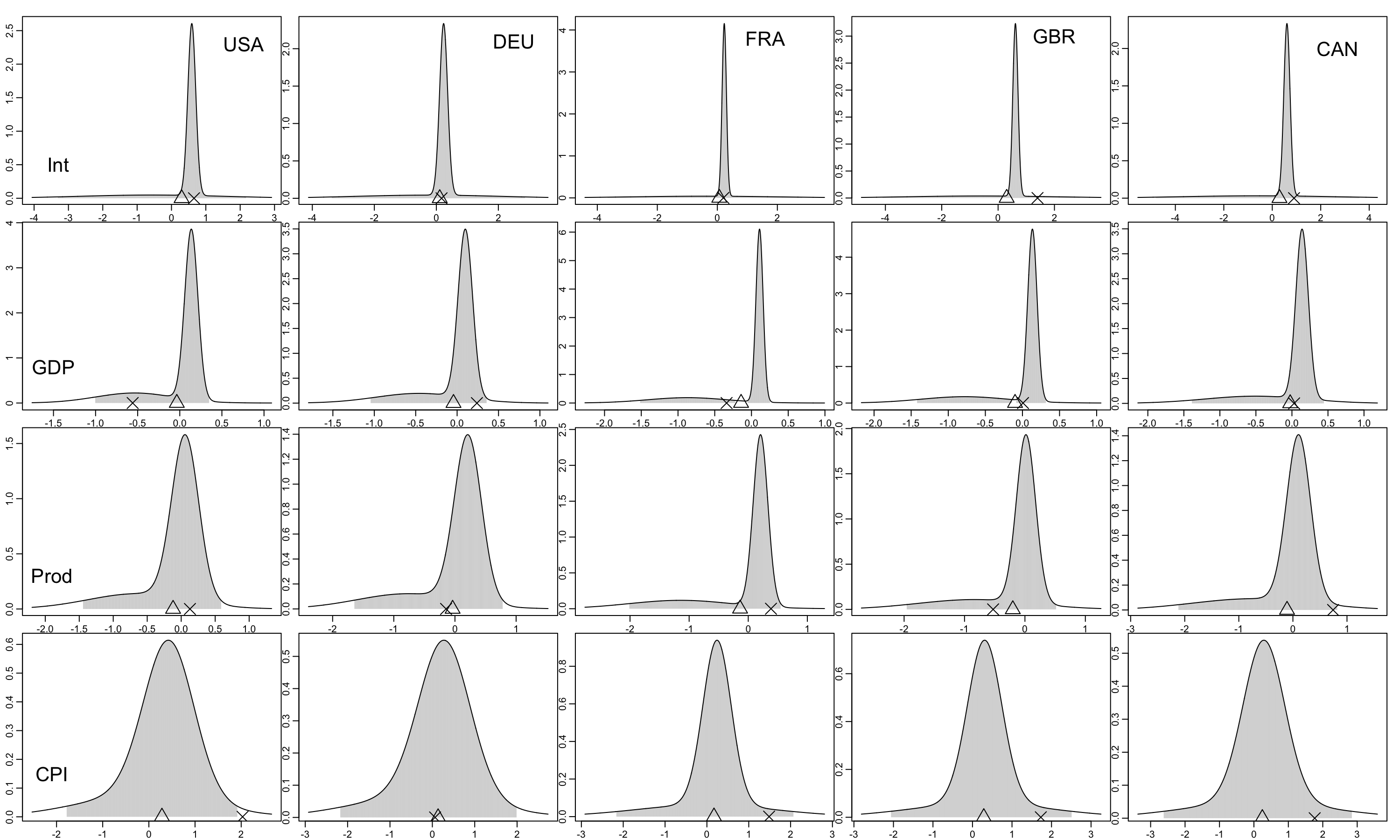}}
	\end{center}
	\vspace{-0.2cm}
	\caption{\label{K2_t129}One-step marginal predictive distribution for Q1 2022 under MMAR(2;1,1) model, with $\times$ representing the observed values and $\triangle$ the predicted values, and the shaded areas representing the 95\% highest density interval.}
\end{figure}

\begin{figure}[h]
	\begin{center}
		\scalebox{0.1}{\includegraphics{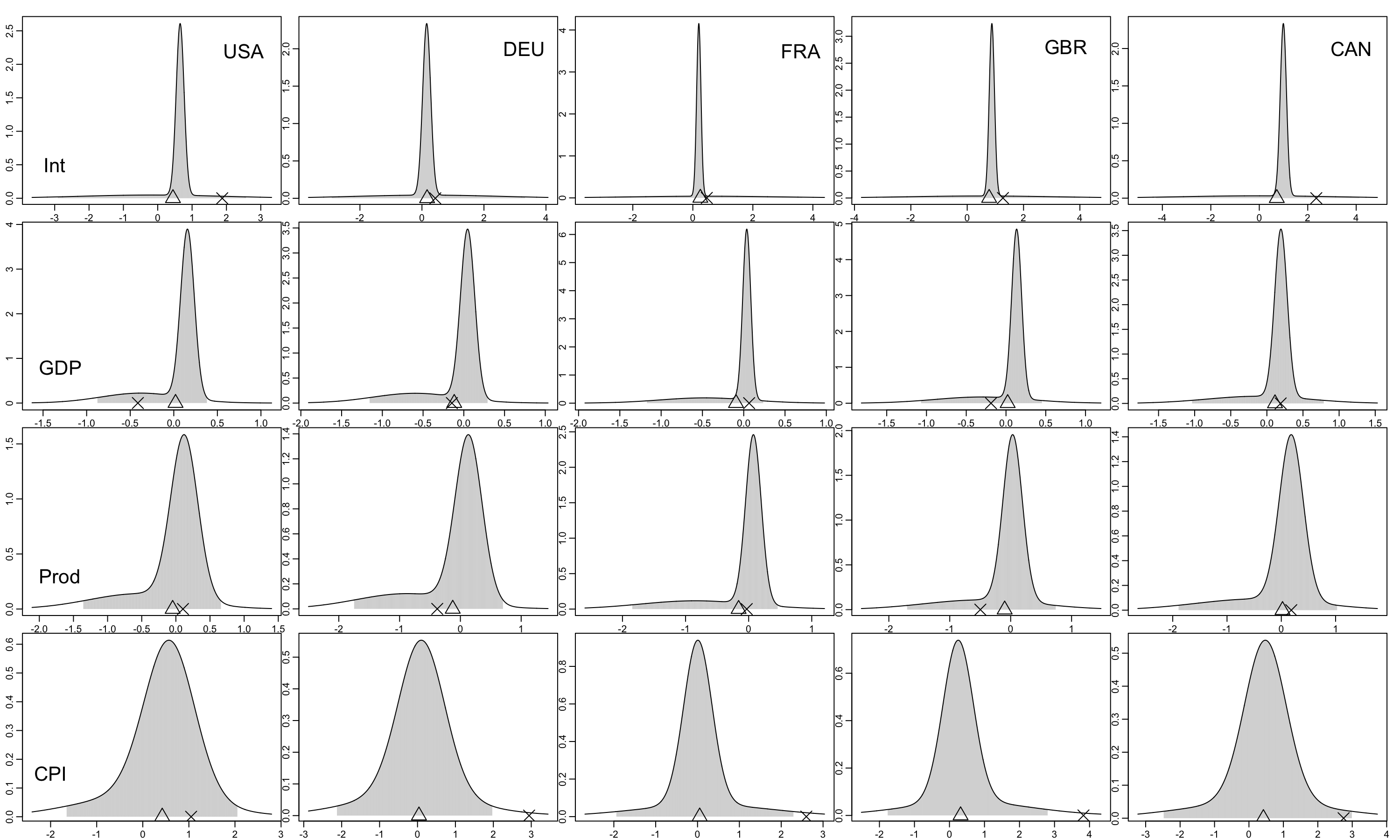}}
	\end{center}
	\vspace{-0.2cm}
	\caption{\label{K2_t130}One-step marginal predictive distribution for Q2 2022 under MMAR(2;1,1) model, with $\times$ representing the observed values and $\triangle$ the predicted values, and the shaded areas representing the 95\% highest density interval.}
\end{figure}

\begin{figure}[h]
	\begin{center}
		\scalebox{0.1}{\includegraphics{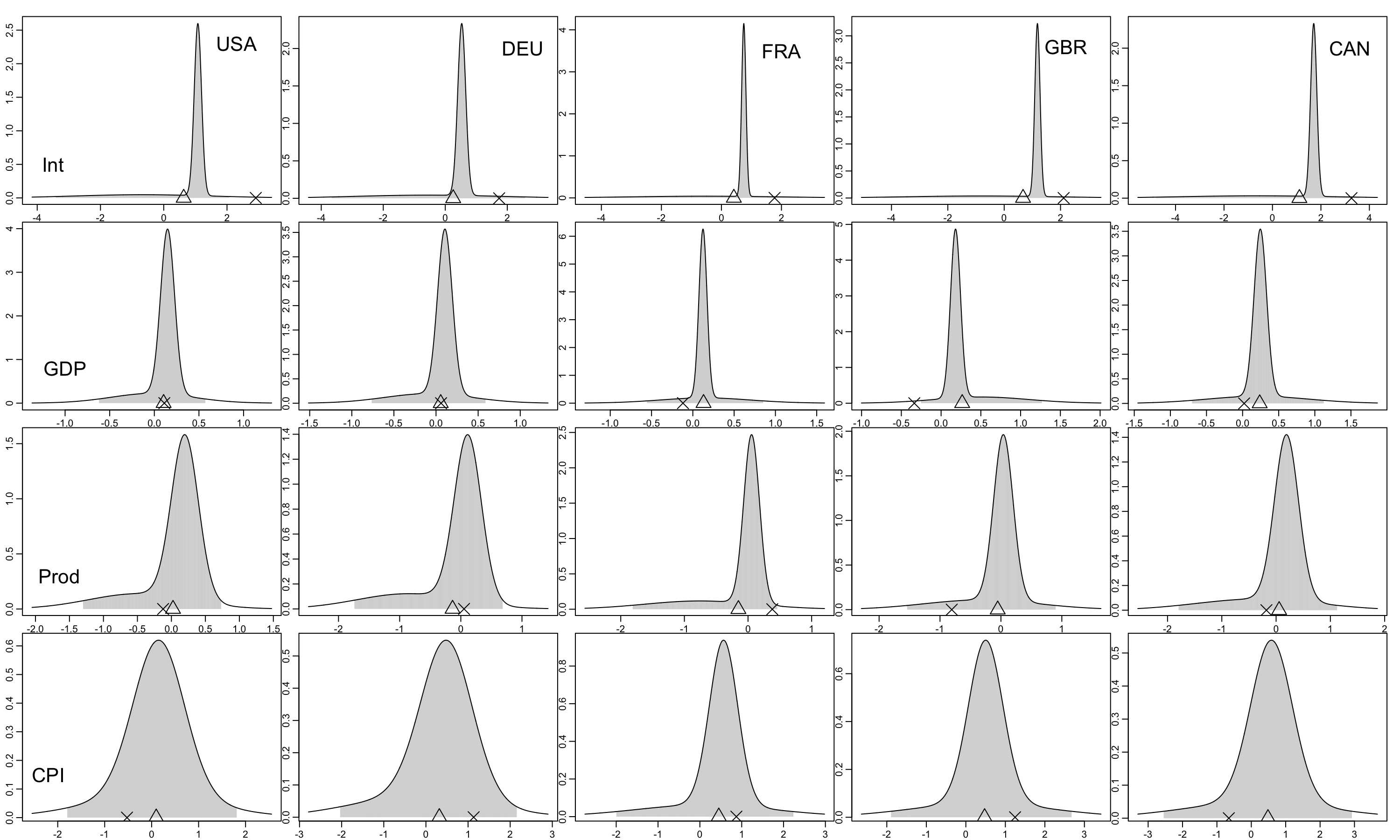}}
	\end{center}
	\vspace{-0.2cm}
	\caption{\label{K2_t131}One-step marginal predictive distribution for Q3 2022 under MMAR(2;1,1) model, with $\times$ representing the observed values and $\triangle$ the predicted values, and the shaded areas representing the 95\% highest density interval.}
\end{figure}

\begin{figure}[h]
	\begin{center}
		\scalebox{0.1}{\includegraphics{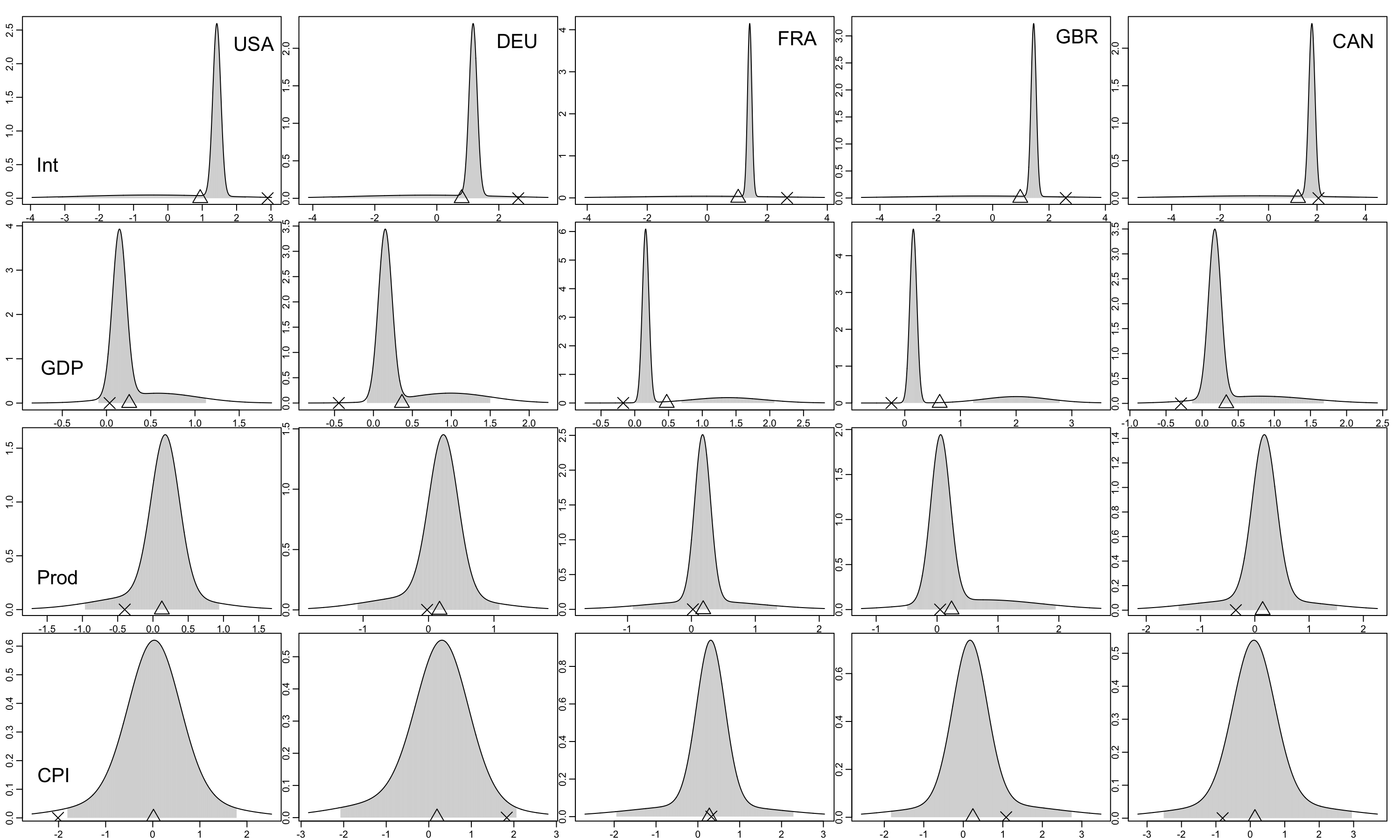}}
	\end{center}
	\vspace{-0.2cm}
	\caption{\label{K2_t132}One-step marginal predictive distribution for Q4 2022 under MMAR(2;1,1) model, with $\times$ representing the observed values and $\triangle$ the predicted values, and the shaded areas representing the 95\% highest density interval.}
\end{figure}

\clearpage
\subsection{Analysis Under MMAR(3;1,1,1) Model}
\label{sec_K3_real}
In this subsection, we present the real data analysis results under an MMAR(3;1,1,1) model. The convergence of the EM algorithm is monitored using the log-likelihood values across iterations. As shown in Fig.~\ref{ll_trace}, the log-likelihood increases monotonically and eventually stabilizes. The algorithm stops when the increase in log-likelihood is less than the predefined threshold of $5\times10^{-4}$.
\begin{figure}[ht!]
	\centering
	\includegraphics[width=.6\textwidth]{ll_trace.png}
	\caption[EM algorithm convergence plot for the fitted MMAR(3;1,1,1) model]{\label{ll_trace} EM algorithm convergence plot for the fitted MMAR(3;1,1,1) model.}
\end{figure}
The standardized residuals of the fitted model (Fig.~\ref{econ_acf_std}) reveal no temporal patterns, suggesting a good fit. The estimated mixing weights are $\hat{\alpha}_1=0.107(0.027)$, $\hat{\alpha}_2=0.272(0.039)$ and $\hat{\alpha}_3=0.621(0.043)$, where standard errors are shown in parentheses.
\begin{figure}[ht]
	\begin{center}
		\scalebox{0.5}{\includegraphics{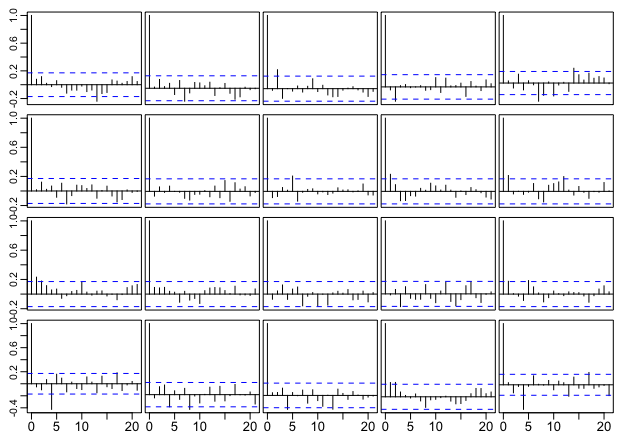}}
	\end{center}
	\vspace{-0.2cm}
	\caption{\label{econ_acf_std} ACF of standardized residuals of MMAR(3;1,1,1) model.}
\end{figure}
Since
\begin{equation*}
	\rho(\hat{\bmB}_{1,1}\otimes\hat{\bmA}_{1,1})=0.735<1,~\rho(\hat{\bmB}_{2,1}\otimes\hat{\bmA}_{2,1})= 1.229>1,~ \rho(\hat{\bmB}_{3,1}\otimes\hat{\bmA}_{3,1})=0.598<1,
\end{equation*}
both the first and the third component of the mixture are weakly stationary while the second component is not weakly stationary. Moreover,
\begin{equation*}
	\sum_{k=1}^3\hat{\alpha}_k\log(\rho(\hat{\bmB}_{k,1})\rho(\hat{\bmA}_{k,1}))=-0.296<0,\qquad
	\sum_{k=1}^3\hat{\alpha}_k(\rho(\hat{\bmB}_{k,1})\rho(\hat{\bmA}_{k,1}))^6=0.982<1.
\end{equation*}
By Corollaries \ref{mmar_ss_cor} and \ref{mmar_ws_cor},
the overall model is strictly stationary, whose stationary distribution has a finite sixth-order moment. Based on the fitted model, the data are classified into three regimes, as shown in Fig.~\ref{econ_c}, where regime 1 is shaded yellow, regime 2 is shaded red, and regime 3 is unshaded. Note that regime 1 generally exhibits the strongest volatility, regime 2 has moderately strong volatility, and regime 3 has relatively weak volatility.
\begin{figure}[h]
	\begin{center}
		\includegraphics[width=1\textwidth]{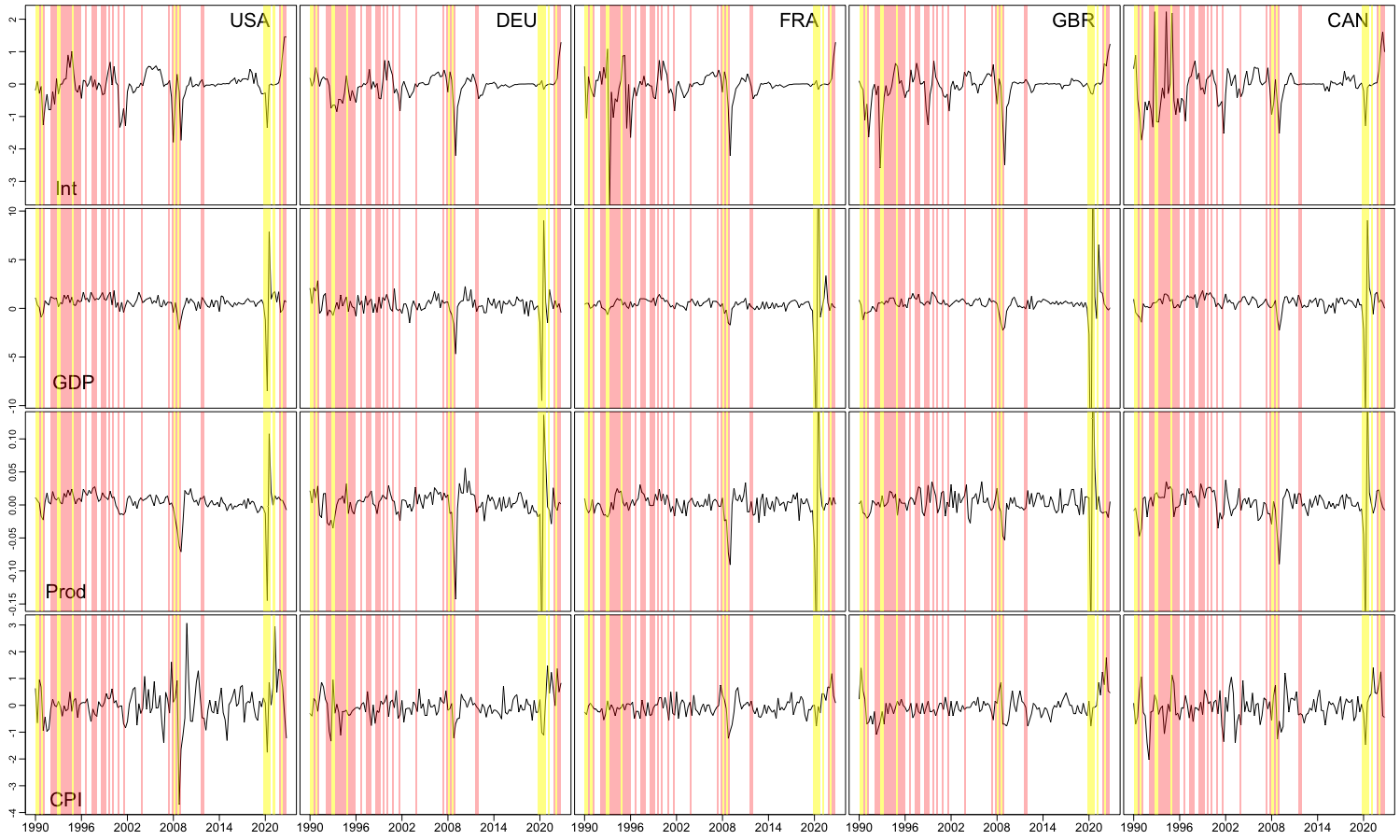}
	\end{center}
	\vspace{-0.2cm}
	\caption{\label{econ_c} Time series plot of the
		economic indicators, with regime 1 shaded yellow, regime 2 shaded red, and regime 3 unshaded.}
\end{figure}

Tables \ref{econ_A1} -- \ref{econ_c3} show the MLE of the parameter matrices $\bmA_{k,1}, \bmB_{k,1}$ and $\bmC_{k,1}$ for $k\in\{1,2,3\}$, and the corresponding standard errors, respectively. Due to the identifiability constraints, the Frobenius norms of $\bmB$'s are scaled to 1. 

The estimated parameter matrices $\hat{\bmA}$'s and $\hat{\bmB}$'s demonstrate both differences and similarities among different regimes. Concerning the differences, one example is that the second column of $\hat{\bmA}_{2,1}$ indicates that the GDP growth of the previous quarter  has a significantly positive influence on all the economic indicators in the current quarter. However, upon examining the second column of  $\hat{\bmA}_{3,1}$ , the previous quarter's GDP growth does not have a significant impact on all the indicators of the current quarter, except for itself.
Regarding the similarities, by checking the first columns of $\hat{\bmB}$'s, it is observed that
the US's previous quarter's indicators consistently have a positive effect on current quarter's indicators from all the countries across the three regimes, with only a few exceptions. 
%Conversely, the effect of France's previous quarter's indicators shows relatively strong variability across the three phases.

Moreover, the out-of-sample prediction performance is also examined. We use the data from Q1 1990 to Q2 2021 ($1\leq t\leq 126$) to fit the model and derive the MLE of the parameter.
Subsequently, we derive the marginal predictive distributions for the period from Q3 2021 to Q4 2022 ($127\leq t\leq 132$),  based on \eqref{mixture density 1} with the parameter replaced by the MLE. 
The observed values along with the predictive values based on the conditional mean are shown in Figs.~\ref{t127} -- \ref{t132}. 
\begin{figure}[ht!]
	\begin{center}
		\scalebox{0.2}{\includegraphics{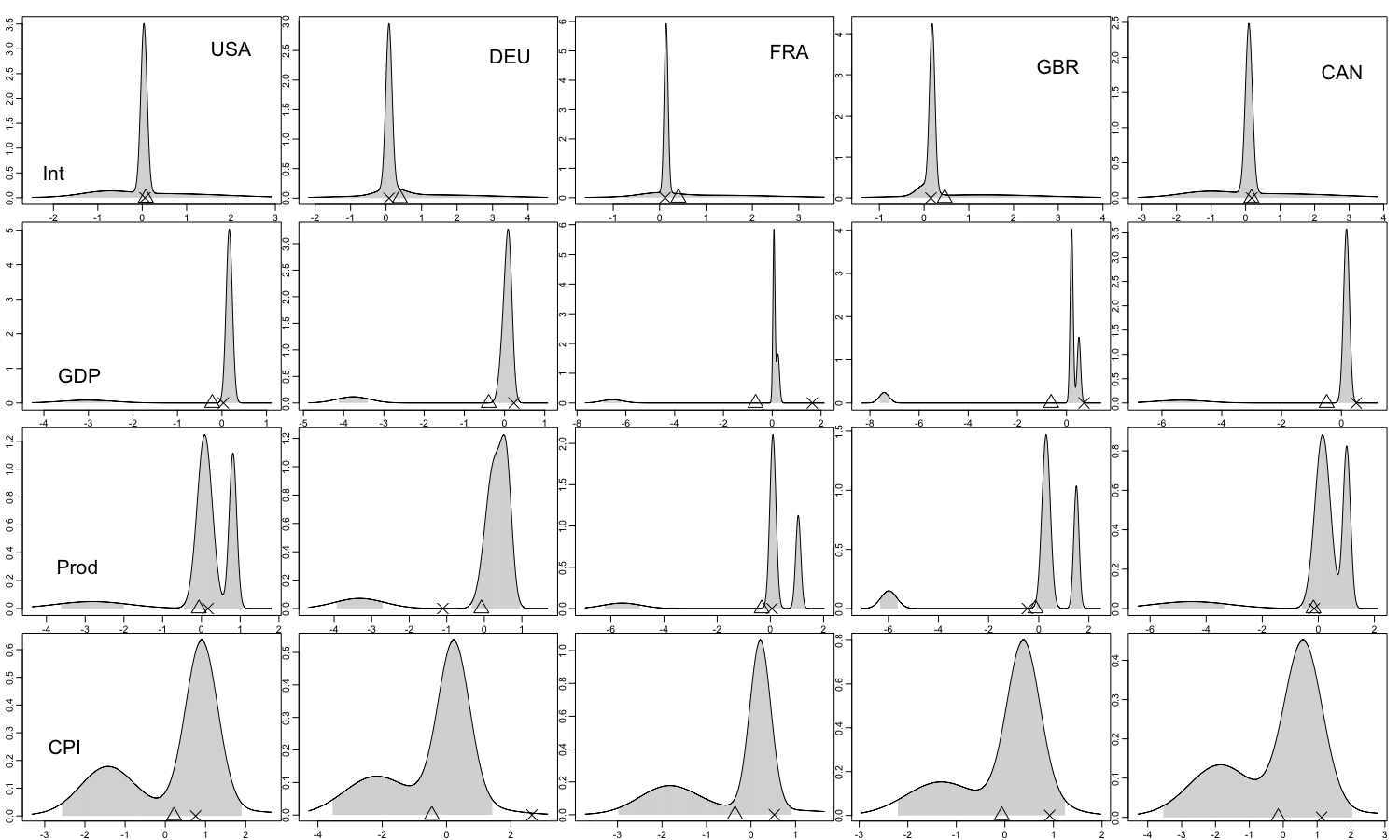}}
	\end{center}
	\vspace{-0.2cm}
	\caption{\label{t127} One-step marginal predictive distribution for Q3 2021 under MMAR(3;1,1,1) model, with $\times$ representing the observed values and $\triangle$ the predicted values, and the shaded areas representing the 95\% highest density interval.}
\end{figure}

\begin{figure}[ht!]
	\begin{center}
		\scalebox{0.2}{\includegraphics{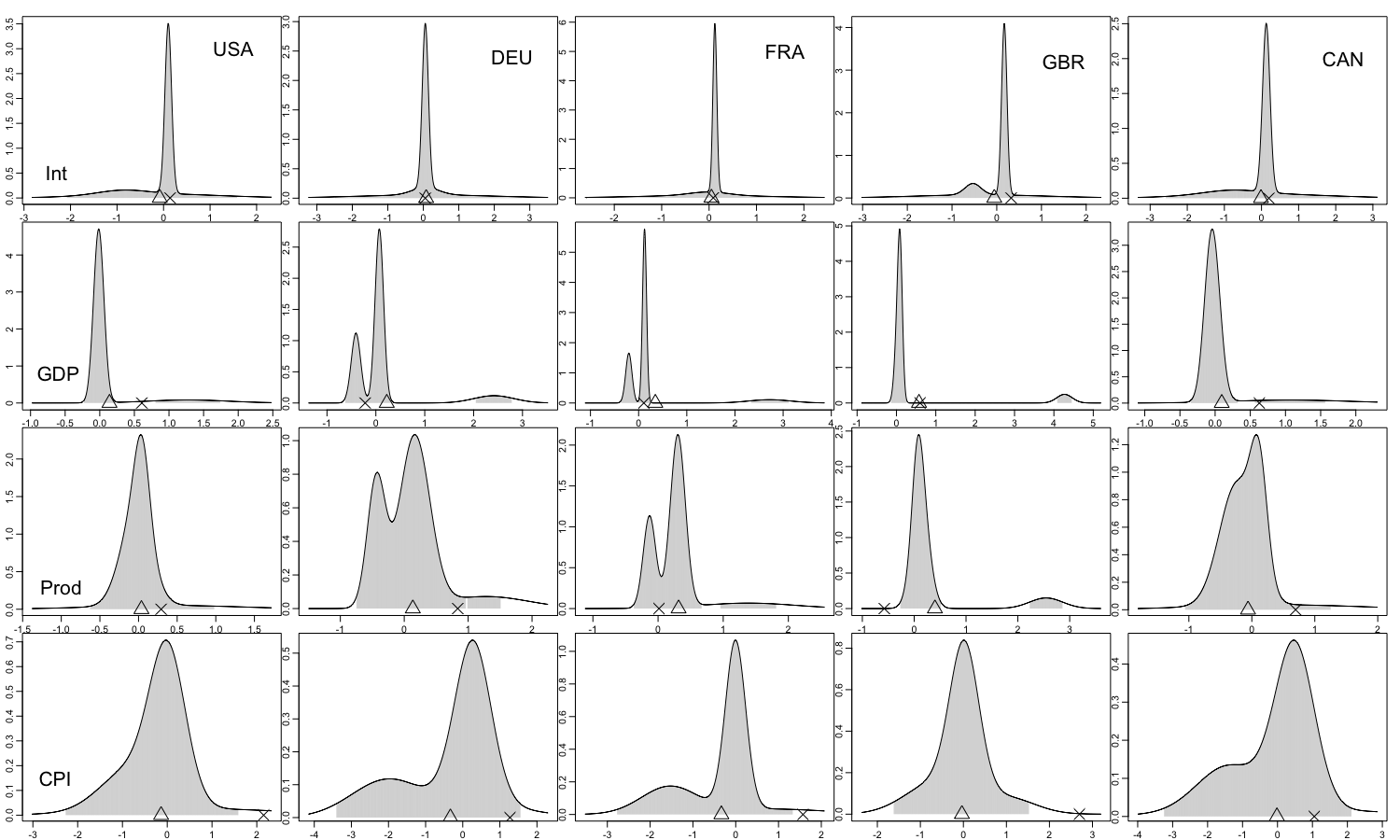}}
	\end{center}
	\vspace{-0.2cm}
	\caption{\label{t128}One-step marginal predictive distribution for Q4 2021 under MMAR(3;1,1,1) model, with $\times$ representing the observed values and $\triangle$ the predicted values, and the shaded areas representing the 95\% highest density interval.}
\end{figure}

\begin{table}[ht]
	\centering
	\begin{tabular}{lllll||llll}
		\hline
		& Int & GDP & Prod & CPI & Int & GDP & Prod & CPI\\
		\hline
		Int&\makecell{-1.253\\(0.515)} & \makecell{-0.539\\(1.202)} & \makecell{0.329\\(0.887)} & \makecell{1.351\\(0.495)} & - & 0 & 0 & + \\ 
		GDP &\makecell{0.886\\(0.239)} & \makecell{4.5\\(0.625)} & \makecell{8.251\\(0.536)} & \makecell{-0.849\\(0.235)} & + & + & + & - \\ 
		Prod&\makecell{0.46\\(0.333)} & \makecell{3.662\\(0.838)} & \makecell{4.607\\(0.617)} & \makecell{0.459\\(0.338)} & 0 & + & + & 0 \\ 
		CPI&\makecell{-1.045\\(0.612)} & \makecell{-1.438\\(1.479)} & \makecell{2.616\\(1.076)} & \makecell{0.7\\(0.607)} & 0 & 0 & + & 0 \\ 
		\hline
	\end{tabular}
	\caption{\label{econ_A1}The MLE of $\bmA_{1,1}$ of the MMAR(3;1,1,1) model, with standard errors given in parentheses.}
\end{table}

\begin{table}[ht]
	\centering
	\begin{tabular}{llllll||lllll}
		\hline
		&USA& DEU& FRA& GBR &CAN &USA& DEU& FRA& GBR &CAN \\ 
		\hline
		USA&\makecell{0.057\\(0.03)} & \makecell{-0.089\\(0.023)} & \makecell{0.196\\(0.025)} & \makecell{-0.12\\(0.021)} & \makecell{-0.068\\(0.026)} & 0 & - & + & - & - \\ 
		DEU&\makecell{0.122\\(0.029)} & \makecell{-0.104\\(0.023)} & \makecell{0.279\\(0.023)} & \makecell{-0.165\\(0.02)} & \makecell{-0.155\\(0.024)} & + & - & + & - & - \\ 
		FRA&\makecell{0.124\\(0.035)} & \makecell{-0.18\\(0.028)} & \makecell{0.384\\(0.029)} & \makecell{-0.232\\(0.025)} & \makecell{-0.162\\(0.03)} & + & - & + & - & - \\ 
		GBR &\makecell{0.154\\(0.024)} & \makecell{-0.236\\(0.02)} & \makecell{0.469\\(0.026)} & \makecell{-0.297\\(0.02)} & \makecell{-0.136\\(0.021)} & + & - & + & - & - \\ 
		CAN&\makecell{0.011\\(0.04)} & \makecell{-0.123\\(0.03)} & \makecell{0.241\\(0.032)} & \makecell{-0.142\\(0.026)} & \makecell{-0.032\\(0.034)} & 0 & - & + & - & 0 \\ 
		\hline
	\end{tabular}
	\caption{The MLE of $\bmB_{1,1}$ of the MMAR(3;1,1,1) model, with standard errors given in parentheses.}
\end{table}

\begin{table}[ht]
	\centering
	\begin{tabular}{llllll||lllll}
		\hline
		&USA& DEU& FRA& GBR &CAN &USA& DEU& FRA& GBR &CAN\\
		\hline
		Int&\makecell{-0.225\\(0.342)} & \makecell{0.046\\(0.324)} & \makecell{-0.088\\(0.4)} & \makecell{0.021\\(0.266)} & \makecell{-0.144\\(0.441)} & 0 & 0 & 0 & 0 & 0 \\ 
		GDP& \makecell{-0.35\\(0.162)} & \makecell{-0.073\\(0.154)} & \makecell{-0.378\\(0.19)} & \makecell{-0.32\\(0.126)} & \makecell{-0.495\\(0.208)} & - & 0 & - & - & - \\ 
		Prod  &\makecell{-0.389\\(0.23)} & \makecell{-0.465\\(0.216)} & \makecell{-0.63\\(0.267)} & \makecell{-0.198\\(0.177)} & \makecell{-0.43\\(0.296)} & 0 & - & - & 0 & 0 \\ 
		CPI  &\makecell{0.058\\(0.411)} & \makecell{0.173\\(0.388)} & \makecell{0.364\\(0.482)} & \makecell{0.906\\(0.319)} & \makecell{0.586\\(0.524)} & 0 & 0 & 0 & + & 0 \\ 
		\hline
	\end{tabular}
	\caption{The MLE of $\bmC_{1,1}$ of the MMAR(3;1,1,1) model, with standard errors given in parentheses.}
\end{table}

\begin{table}[ht]
	\centering
	\begin{tabular}{lllll||llll}
		\hline
		& Int & GDP & Prod & CPI & Int & GDP & Prod & CPI\\
		\hline
		Int&\makecell{1.695\\(0.142)} & \makecell{2.16\\(0.271)} & \makecell{-0.539\\(0.253)} & \makecell{0.207\\(0.146)} & + & + & - & 0 \\ 
		GDP& \makecell{0.094\\(0.033)} & \makecell{0.233\\(0.075)} & \makecell{0.662\\(0.077)} & \makecell{-0.076\\(0.04)} & + & + & + & 0 \\ 
		Prod&\makecell{0.063\\(0.048)} & \makecell{0.532\\(0.11)} & \makecell{1.331\\(0.117)} & \makecell{0.064\\(0.056)} & 0 & + & + & 0 \\ 
		CPI&\makecell{-0.153\\(0.07)} & \makecell{2.515\\(0.197)} & \makecell{-1.414\\(0.16)} & \makecell{0.767\\(0.09)} & - & + & - & + \\ 
		\hline
	\end{tabular}
	\caption{The MLE of $\bmA_{2,1}$ of the MMAR(3;1,1,1) model, with standard errors given in parentheses.}
\end{table}

\begin{table}[ht]
	\centering
	\begin{tabular}{llllll||lllll}
		\hline
		&USA& DEU& FRA& GBR &CAN &USA& DEU& FRA& GBR &CAN \\ 
		\hline
		USA&\makecell{0.503\\(0.036)} & \makecell{0.056\\(0.046)} & \makecell{0.036\\(0.053)} & \makecell{0.099\\(0.041)} & \makecell{-0.076\\(0.045)} & + & 0 & 0 & + & 0 \\ 
		DEU&\makecell{0.201\\(0.069)} & \makecell{0.398\\(0.056)} & \makecell{0.062\\(0.067)} & \makecell{0.213\\(0.053)} & \makecell{0.022\\(0.061)} & + & + & 0 & + & 0 \\ 
		FRA&\makecell{0.193\\(0.039)} & \makecell{0.291\\(0.035)} & \makecell{-0.013\\(0.039)} & \makecell{0.228\\(0.034)} & \makecell{0.063\\(0.036)} & + & + & 0 & + & 0 \\ 
		GBR&\makecell{0.207\\(0.06)} & \makecell{-0.024\\(0.052)} & \makecell{-0.015\\(0.059)} & \makecell{0.366\\(0.047)} & \makecell{0.061\\(0.054)} & + & 0 & 0 & + & 0 \\ 
		CAN &\makecell{0.29\\(0.074)} & \makecell{0.097\\(0.071)} & \makecell{0.027\\(0.082)} & \makecell{0.138\\(0.064)} & \makecell{0.08\\(0.075)} & + & 0 & 0 & + & 0 \\ 
		\hline
	\end{tabular}
	\caption{The MLE of $\bmB_{2,1}$ of the MMAR(3;1,1,1) model, with standard errors given in parentheses.}
\end{table}

\begin{table}[ht]
	\centering
	\begin{tabular}{llllll||lllll}
		\hline
		&USA& DEU& FRA& GBR &CAN &USA& DEU& FRA& GBR &CAN\\
		\hline
		Int&\makecell{-0.094\\(0.153)} & \makecell{-0.22\\(0.198)} & \makecell{-0.219\\(0.116)} & \makecell{-0.281\\(0.174)} & \makecell{-0.209\\(0.238)} & 0 & 0 & 0 & 0 & 0 \\ 
		GDP& \makecell{0.017\\(0.043)} & \makecell{-0.086\\(0.055)} & \makecell{-0.006\\(0.032)} & \makecell{-0.025\\(0.048)} & \makecell{0.008\\(0.067)} & 0 & 0 & 0 & 0 & 0 \\ 
		Prod&\makecell{0.014\\(0.06)} & \makecell{-0.168\\(0.078)} & \makecell{-0.005\\(0.045)} & \makecell{0.031\\(0.068)} & \makecell{-0.047\\(0.093)} & 0 & - & 0 & 0 & 0 \\ 
		CPI& \makecell{-0.16\\(0.09)} & \makecell{-0.306\\(0.116)} & \makecell{-0.118\\(0.068)} & \makecell{-0.142\\(0.101)} & \makecell{-0.187\\(0.139)} & 0 & - & 0 & 0 & 0 \\ 
		\hline
	\end{tabular}
	\caption{The MLE of $\bmC_{2,1}$ of the MMAR(3;1,1,1) model, with standard errors given in parentheses.}
\end{table}

\begin{table}[ht]
	\centering
	\begin{tabular}{lllll||llll}
		\hline
		& Int & GDP & Prod & CPI & Int & GDP & Prod & CPI \\
		\hline
		Int&\makecell{1.472\\(0.044)} & \makecell{0.347\\(0.076)} & \makecell{0.044\\(0.064)} & \makecell{0.145\\(0.036)} & + & + & 0 & + \\ 
		GDP  &\makecell{0.302\\(0.036)} & \makecell{-0.044\\(0.088)} & \makecell{-0.005\\(0.074)} & \makecell{-0.01\\(0.042)} & + & 0 & 0 & 0 \\ 
		Prod&\makecell{0.054\\(0.053)} & \makecell{-0.072\\(0.132)} & \makecell{0.122\\(0.11)} & \makecell{0.068\\(0.063)} & 0 & 0 & 0 & 0 \\ 
		CPI &\makecell{0.251\\(0.086)} & \makecell{-0.155\\(0.217)} & \makecell{-0.024\\(0.181)} & \makecell{0.676\\(0.104)} & + & 0 & 0 & + \\ 
		\hline
	\end{tabular}
	\caption{The MLE of $\bmA_{3,1}$ of the MMAR(3;1,1,1) model, with standard errors given in parentheses.}
\end{table}

%USA Germany France UK Canada
\begin{table}[ht]
	\centering
	\begin{tabular}{llllll||lllll}
		\hline
		&USA& DEU& FRA& GBR &CAN &USA& DEU& FRA& GBR &CAN \\ 
		\hline
		USA&\makecell{0.511\\(0.019)} & \makecell{-0.25\\(0.021)} & \makecell{-0.016\\(0.027)} & \makecell{0.036\\(0.024)} & \makecell{-0.062\\(0.023)} & + & - & 0 & 0 & - \\ 
		DEU& \makecell{0.068\\(0.024)} & \makecell{0.366\\(0.023)} & \makecell{-0.058\\(0.026)} & \makecell{-0.115\\(0.023)} & \makecell{-0.028\\(0.022)} & + & + & - & - & 0 \\ 
		FRA&\makecell{0.066\\(0.019)} & \makecell{0.139\\(0.018)} & \makecell{0.21\\(0.02)} & \makecell{-0.115\\(0.018)} & \makecell{-0.066\\(0.017)} & + & + & + & - & - \\ 
		GBR&\makecell{0.062\\(0.022)} & \makecell{-0.143\\(0.019)} & \makecell{0.135\\(0.023)} & \makecell{0.236\\(0.021)} & \makecell{-0.064\\(0.019)} & + & - & + & + & - \\ 
		CAN&\makecell{0.189\\(0.026)} & \makecell{-0.43\\(0.02)} & \makecell{0.295\\(0.025)} & \makecell{0.104\\(0.025)} & \makecell{0.094\\(0.026)} & + & - & + & + & + \\ 
		\hline
	\end{tabular}
	\caption{The MLE of $\bmB_{3,1}$ of the MMAR(3;1,1,1) model, with standard errors given in parentheses.}
\end{table}

\begin{table}[ht]
	\centering
	\begin{tabular}{llllll||lllll}
		\hline
		&USA& DEU& FRA& GBR &CAN &USA& DEU& FRA& GBR &CAN\\
		\hline
		Int&\makecell{0.068\\(0.031)} & \makecell{0.089\\(0.029)} & \makecell{0.125\\(0.023)} & \makecell{0.122\\(0.026)} & \makecell{0.121\\(0.032)} & + & + & + & + & + \\ 
		GDP&\makecell{0.032\\(0.036)} & \makecell{0.038\\(0.035)} & \makecell{0.047\\(0.028)} & \makecell{0.058\\(0.031)} & \makecell{0.052\\(0.038)} & 0 & 0 & 0 & 0 & 0 \\ 
		Prod&\makecell{0.033\\(0.054)} & \makecell{0.131\\(0.052)} & \makecell{0.076\\(0.041)} & \makecell{0.001\\(0.046)} & \makecell{0.069\\(0.057)} & 0 & + & 0 & 0 & 0 \\ 
		CPI&\makecell{0.088\\(0.088)} & \makecell{0.111\\(0.084)} & \makecell{0.038\\(0.067)} & \makecell{0\\(0.074)} & \makecell{0.047\\(0.092)} & 0 & 0 & 0 & 0 & 0 \\ 
		\hline
	\end{tabular}
	\caption{\label{econ_c3}The MLE of $\bmC_{3,1}$ of the MMAR(3;1,1,1) model, with standard errors given in parentheses.}
\end{table}
\begin{figure}[h]
	\begin{center}
		\scalebox{0.2}{\includegraphics{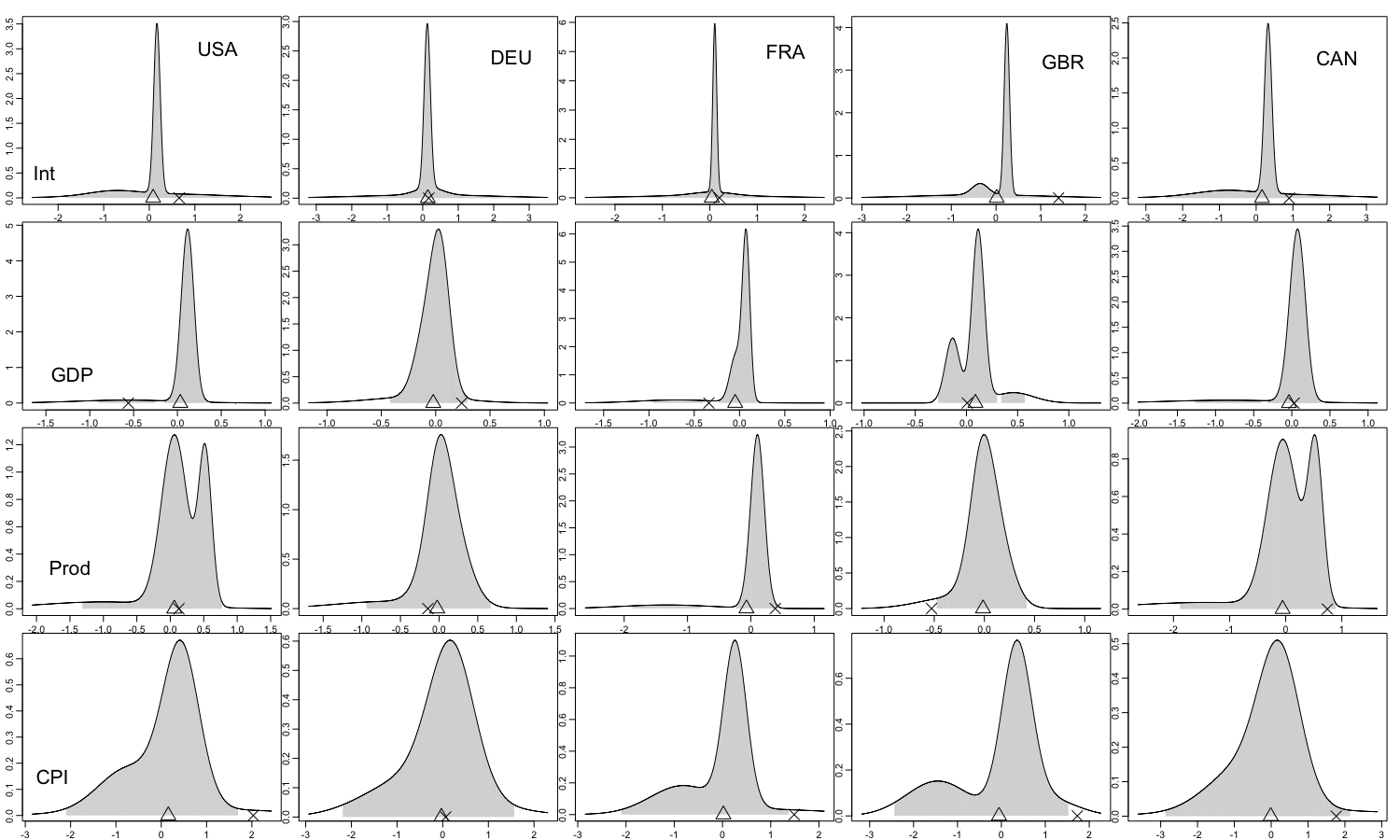}}
	\end{center}
	\vspace{-0.2cm}
	\caption{\label{t129}One-step marginal predictive distribution for Q1 2022 under MMAR(3;1,1,1) model, with $\times$ representing the observed values and $\triangle$ the predicted values, and the shaded areas representing the 95\% highest density interval.}
\end{figure}

\begin{figure}[h]
	\begin{center}
		\scalebox{0.2}{\includegraphics{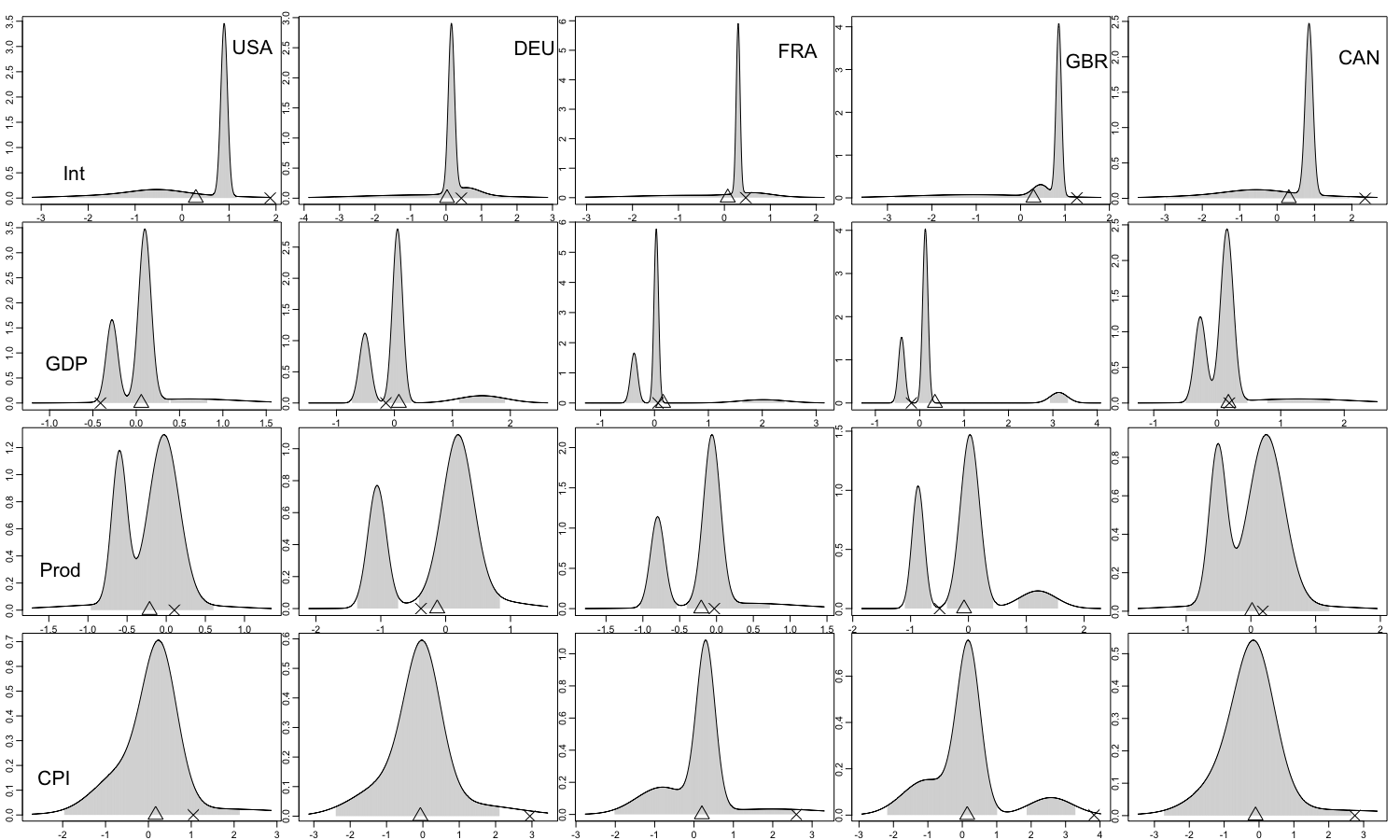}}
	\end{center}
	\vspace{-0.2cm}
	\caption{\label{t130}One-step marginal predictive distribution for Q2 2022 under MMAR(3;1,1,1) model, with $\times$ representing the observed values and $\triangle$ the predicted values, and the shaded areas representing the 95\% highest density interval.}
\end{figure}

\begin{figure}[h]
	\begin{center}
		\scalebox{0.2}{\includegraphics{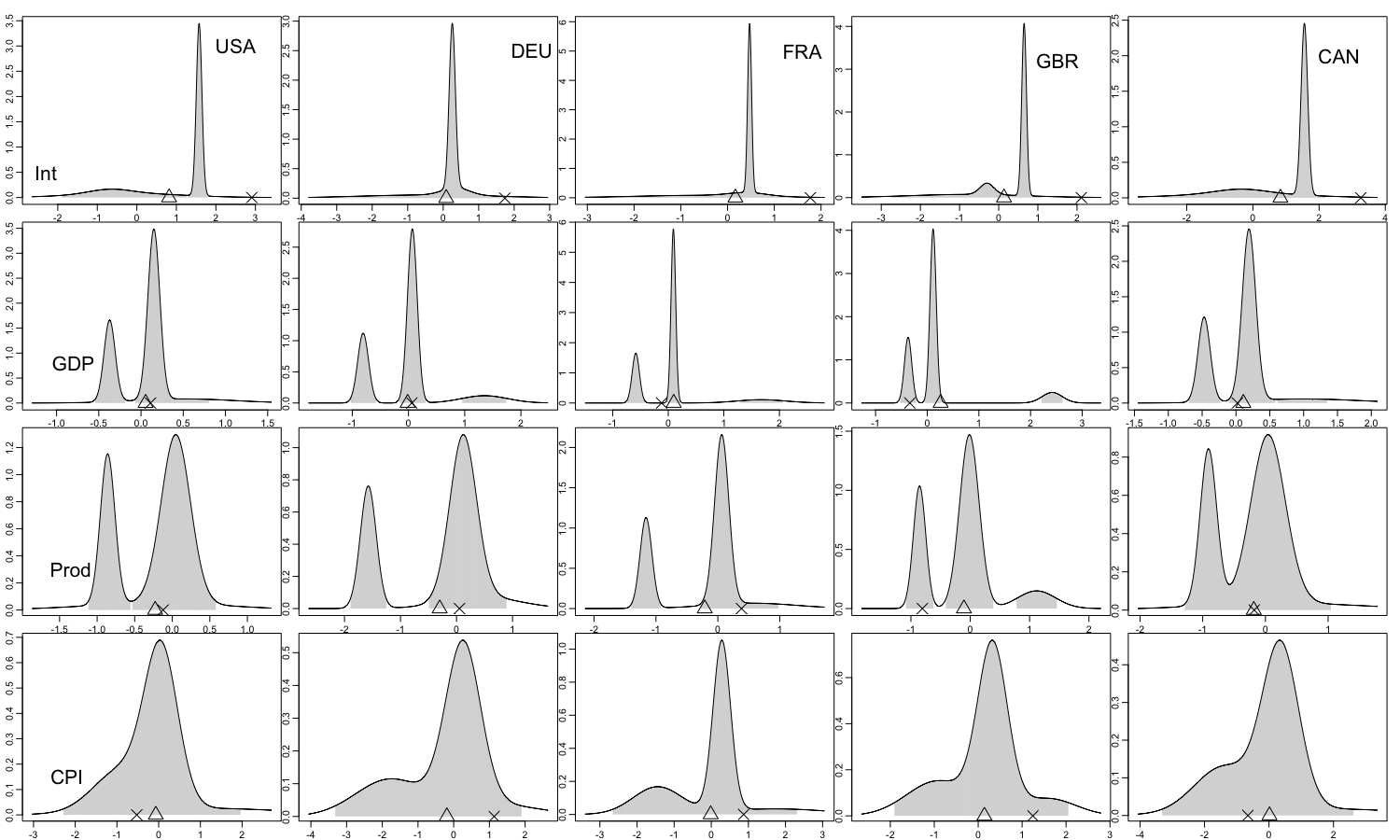}}
	\end{center}
	\vspace{-0.2cm}
	\caption{\label{t131}One-step marginal predictive distribution for Q3 2022 under MMAR(3;1,1,1) model, with $\times$ representing the observed values and $\triangle$ the predicted values, and the shaded areas representing the 95\% highest density interval.}
\end{figure}

\begin{figure}[h]
	\begin{center}
		\scalebox{0.2}{\includegraphics{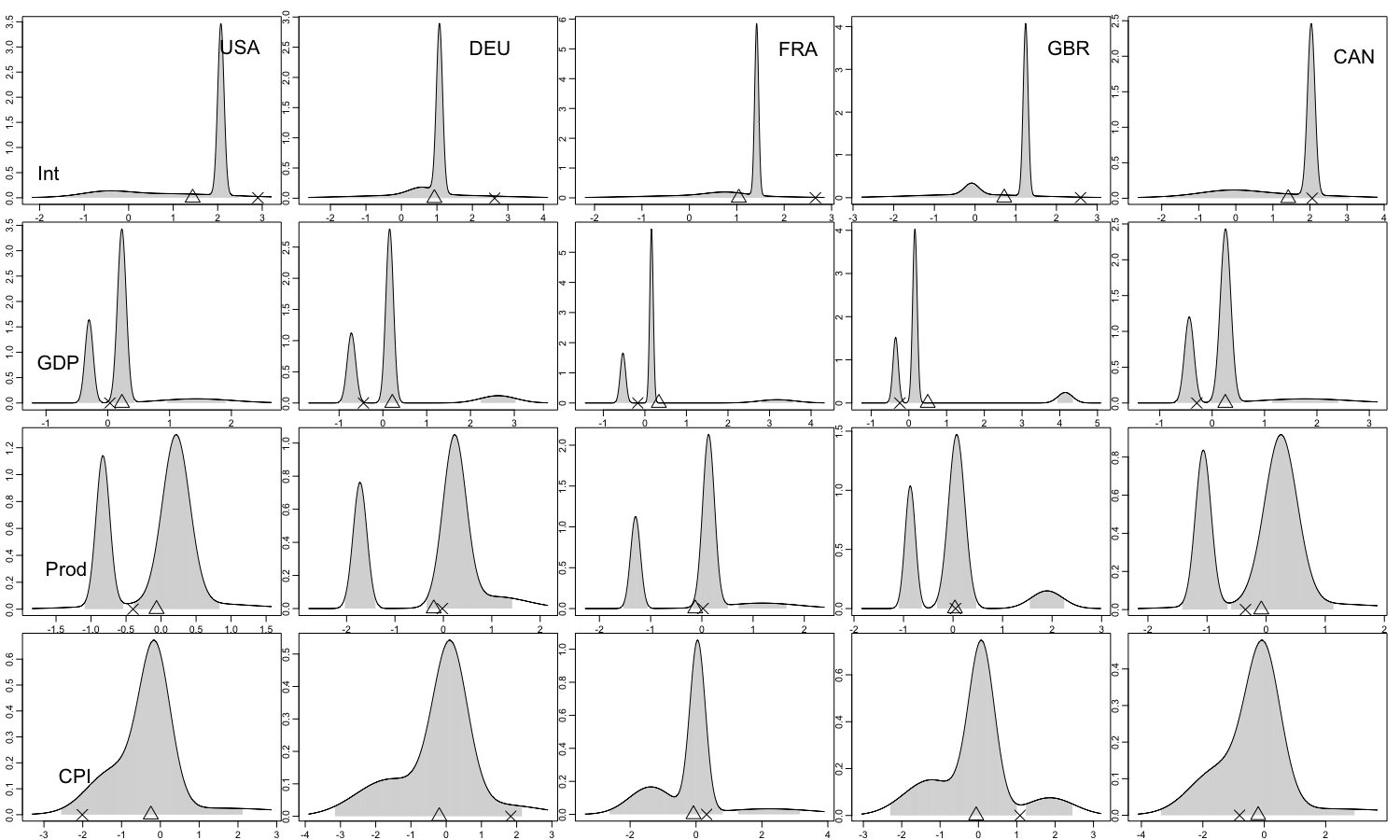}}
	\end{center}
	\vspace{-0.2cm}
	\caption{\label{t132}One-step marginal predictive distribution for Q4 2022 under MMAR(3;1,1,1) model, with $\times$ representing the observed values and $\triangle$ the predicted values, and the shaded areas representing the 95\% highest density interval.}
\end{figure}

\begin{table}[ht!]
	\centering
	\begin{tabular}{|l|l|l|l|l|l|l|}
		\hline
		$K$&$p_{\max}$&log-likehood&AIC&BIC & GIC&HQ  \\ \hline
		1& 1 & -2492.35 & 5152.70 & 5394.22 & 5574.31 & 5250.84 \\ \hline
		1& 2 & -2376.53 & 5001.07 & 5356.64 & 5699.01 & 5145.55 \\ \hline
		1 & 3 & -2286.74 & 4901.49 & 5370.50 & 5895.80 & 5092.06 \\ \hline
		2 & 1 & -1753.87 & 3845.75 & 4331.66 & \textbf{4881.14} & 4043.19 \\ \hline
		2& 2& -1728.56 & 3955.11 & 4669.13 & 5631.34 & 4245.24 \\ \hline
		2& 3 & -1743.21 & 4144.42 & 5085.30 & 6501.23 & 4526.72 \\ \hline
		3& 1& -1504.04 & \textbf{3516.09} & \textbf{4246.39} & 5236.18 & \textbf{3812.84} \\ \hline
		3& 2& -1421.84 & 3591.69 & 4664.15 & 6350.18 & 4027.46 \\ 
		\hline
	\end{tabular}
	\caption{Model selection for economic indicators dataset.}
	\label{econ_selection}
\end{table}

\clearpage
%\bibliographystyle{apalike}
%\bibliography{Bibliography}	
\end{document}